\date{August 3, 2021}
\newtheorem{theorem}{Theorem}[section]
\newtheorem{definition}[theorem]{Definition}
\newtheorem{lemma}[theorem]{Lemma}
\newtheorem{corollary}[theorem]{Corollary}
\newtheorem{rule-thm}[theorem]{Rule}
\newtheorem{conjecture}[theorem]{Conjecture}
\title{On Solving Cycle Problems with Branch-and-Cut: Extending Shrinking and Exact Subcycle Elimination Separation Algorithms}
\author{ \href{https://orcid.org/0000-0002-8669-4482}{\includegraphics[scale=0.06]{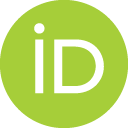}\hspace{1mm}Gorka Kobeaga} \\
  Basque Center for Applied Mathematics BCAM\\
  \texttt{gkobeaga@bcamath.org} \\
  \And
  \href{https://orcid.org/0000-0002-4947-2784}{\includegraphics[scale=0.06]{orcid.png}\hspace{1mm}Mar\'ia Merino} \\
  University of the Basque Country UPV/EHU\\
  \texttt{maria.merino@ehu.eus} \\
  \And
  \href{https://orcid.org/0000-0002-4683-8111}{\includegraphics[scale=0.06]{orcid.png}\hspace{1mm}Jose A. Lozano} \\
  Basque Center for Applied Mathematics BCAM\\
  University of the Basque Country UPV/EHU\\
  \texttt{jlozano@bcamath.org} \\
}
\begin{document}
\maketitle

\begin{abstract}
  In this paper, we extend techniques developed in the context of the Travelling Salesperson Problem for cycle problems. Particularly, we study the shrinking of support graphs and the exact algorithms for subcycle elimination separation problems. The efficient application of the considered techniques has proved to be essential in the Travelling Salesperson Problem when solving large size problems by Branch-and-Cut, and this has been the motivation behind this work. Regarding the shrinking of support graphs, we prove the validity of the Padberg-Rinaldi general shrinking rules and the Crowder-Padberg subcycle-safe shrinking rules. Concerning the subcycle separation problems, we extend two exact separation algorithms, the Dynamic Hong and the Extended Padberg-Gr{\"o}tschel algorithms, which are shown to be superior to the ones used so far in the literature of cycle problems.

  The proposed techniques are empirically tested in 24 subcycle elimination problem instances generated by solving the Orienteering Problem (involving up to 15112 vertices) with Branch-and-Cut. The experiments
  suggest the relevance of the proposed techniques for cycle problems.
  The obtained average speedup for the subcycle separation problems in the Orienteering Problem when the proposed techniques are used together is around 50 times in medium-sized instances and around 250 times in large-sized instances.
\end{abstract}

\keywords{cycle problem \and branch-and-cut \and shrinking \and exact separation \and subcycle elimination \and gomory-hu tree}

\begin{section}{Introduction}

  The Travelling Salesperson Problem (TSP) has been the source and the testbed of the most important techniques developed for the exact solution of combinational optimization problems. These techniques have been principally developed in the context of the Branch-and-Cut (B\&C) algorithm, which combines the Branch-and-Bound (B\&B) and the cutting-planes methods, see~\cite{concorde} for an historical overview. Eventually, many of these techniques have been successfully adapted to other related problems. However, there are procedures, such as the support graph shrinking and some separation algorithms, that are strongly dependent on the problem peculiarities. As a consequence, these techniques might not have been adapted yet, or there might still be room for further improvements.

  As TSP is the most well-known cycle problem, we motivate the goals of this paper focusing on this problem. When a B\&B algorithm is used to exactly solve the TSP, which is an Integer Program (IP), the cutting-planes method arises as a natural strategy to handle at least two situations: the exponential number of constraints of the model and the consequences of the linear relaxation of the integer problem. Recall that in a B\&B algorithm the branching decisions are made guided by a sequence of Linear Program (LP). These LPs are principally obtained by relaxing the integrality and fixing the variables according to the preceding branching decisions.

  Within this approach, the cutting-planes method is required due to the fact that, in order to define a TSP model, an exponential number of constraints in terms of the number of vertices in the TSP is needed, see~\cite{Padberg1991}. In order to deal with this situation, the exact algorithm is initialized with a subproblem of the LP, let us call this $LP_0$, that considers a controlled number of constraints. During the algorithm, the excluded constraints are added to $LP_0$ only if they are required, i.e.,~if they are violated by the solution of the $LP_0$.
  The second reason to consider the cutting-planes method is that since the variables in the linear relaxation of the TSP are considered continuous instead of integers, new families of valid inequalities arise (inequalities that are satisfied by all the cycles), also called cuts, that are not linear combinations of the constraints defining the TSP\@. Since the number of branch nodes needed to visit by the algorithm is reduced, the cutting-planes are very valuable to decrease the solving time of a B\&B algorithm.

  Computationally, the most expensive part of the cutting-planes method is to solve the separation problems. Given a solution of the $LP_0$ and an inequality family, the separation problem for the given family consists of finding either the violated inequalities of the family or a certificate that no violated inequality of the family exists.

  The difficulty of efficiently solving the separation problems becomes evident when the number of vertices of the problem increases. It is well known that, in practice, even a polynomial time separation algorithm might turn out to be inefficient for certain families. To mitigate this practical issue, a technique known as shrinking has been exploited in the TSP, see~\cite{crowder1980, Padberg1990b, Grotschel1991}. Shrinking consists of safely simplifying, i.e.,\@ without losing all the violated inequalities of the family, the support graph generated by the solution of the $LP_0$. This way, considering that, generally, the separation is harder than the shrinking, the cost of finding the violated inequalities is reduced because the separation is performed in a graph involving a lower number of vertices and edges.
  In Figure~\ref{fig:scheme}, a flowchart of a generic B\&C algorithm and the separation algorithm with and without the shrinking.
  \begin{figure}[htb!]
    \centering
    \includegraphics[width=\columnwidth]{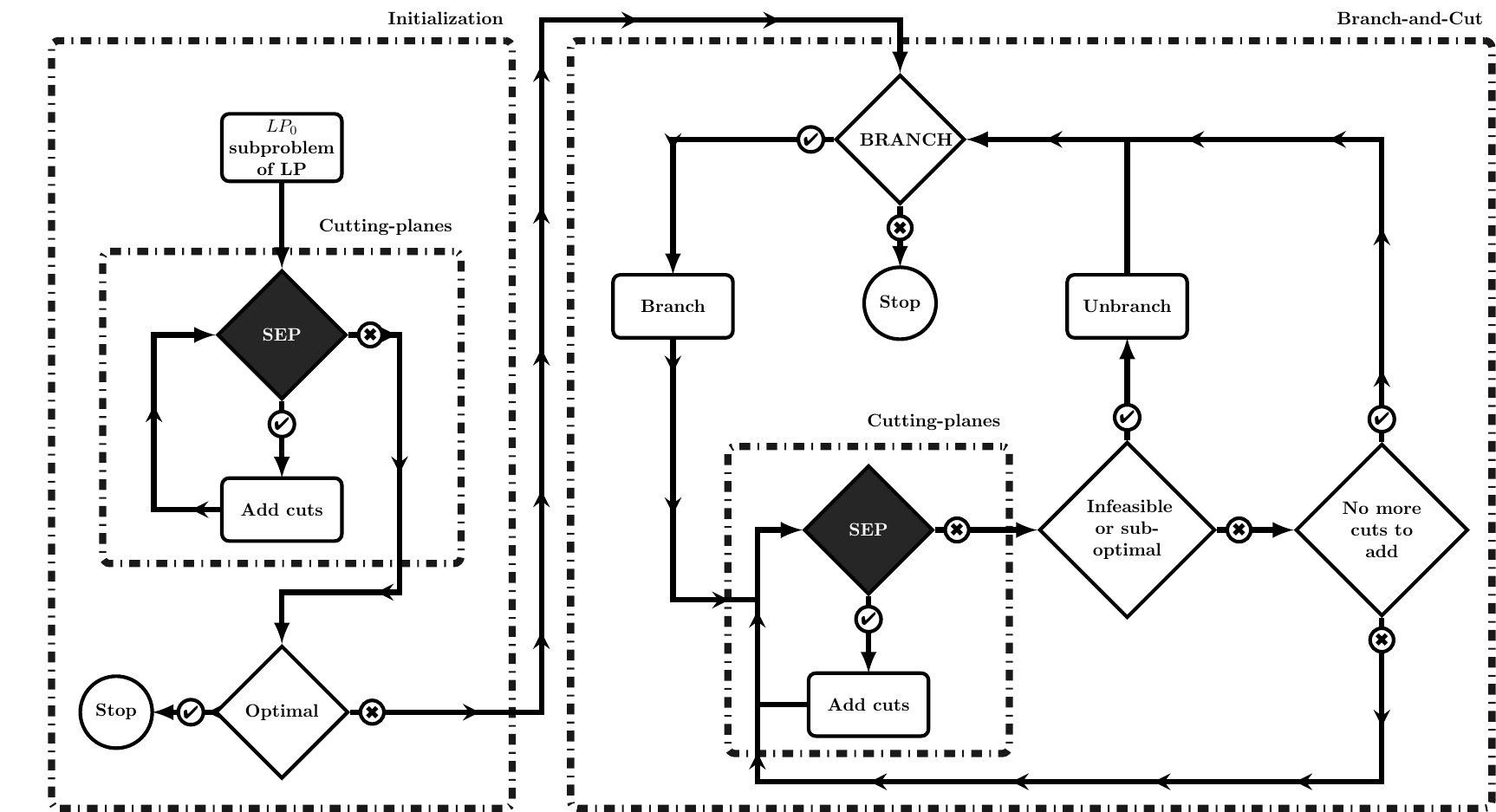}\\\vspace*{2mm}
    \includegraphics[width=0.6\columnwidth]{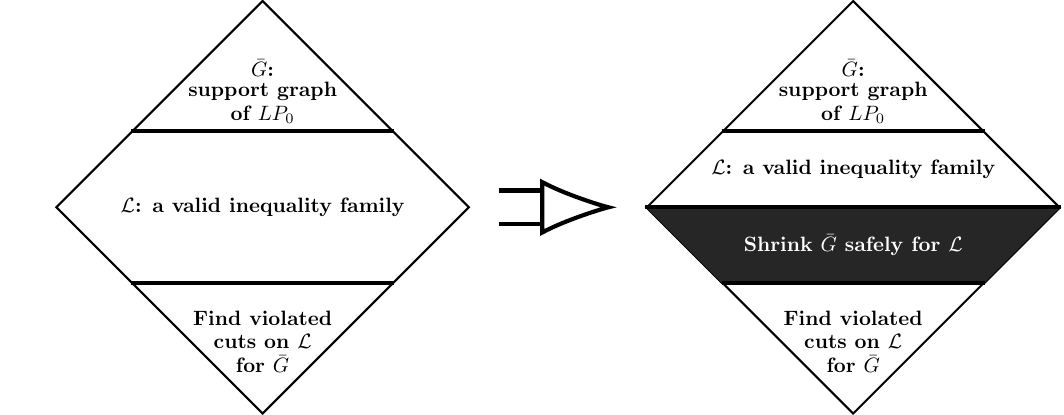}
    \caption{In the top, a flowchart of a generic Branch-and-Cut algorithm. BRANCH is an oracle which returns an unevaluated node in the branching tree. At each action box of the flowchart the subproblem $LP_0$ is updated and solved. In the bottom, the detailed separation algorithm (SEP) without and with shrinking.}\label{fig:scheme}
  \end{figure}

  In the last few decades, many optimization problems have proliferated whose solution is required to be a cycle, but not necessarily Hamiltonian as in the TSP\@. This is the case for some extensions of the TSP itself, as can be seen in the extensive collection about TSP variants of~\cite{Gutin07}. For instance, the weighted girth problem, consists of finding the minimum cost cycle in a weighted graph, see~\cite{Coullard1989} and~\cite{bauer1997}. Cycles are also the solutions of the Generalized TSP (GTSP) where the vertices are labeled in clusters and at least one vertex of each cluster is required to be visited, but not all the vertices, see~\cite{GTSP1995}. Other routing problems, which are recently gaining popularity because of their wide range of applications, are the TSP with profits, see~\cite{feillet05} and~\cite{speranza2014}. These problems are the Profitable Tour Problem (PTP), the Orienteering Problem (OP), the Prize Collecting TSP (PCTSP), and their variations. From the TSP with profits, the OP, which consists of finding the cycle that maximizes the collected vertex profits subject to a cycle length constraint, is the one which has been most extensively studied. For a recent book on applications and variants of the OP see~\cite{Vansteenwegen2019}.

  This work has three main aims: first, to generalize the shrinking rules (global and subcycle specific) proposed in the literature of the TSP to the case of cycle problems; second, to extend in an effective manner the subcycle exact separation algorithms for cycle problems; and third, to show experimentally the relevance of the proposed shrinking rules and separation algorithms. On the one hand, 6 different shrinking rules for cycle problems are presented in this work, of which three are safe for all the valid inequalities and three are specifically safe for subcycle elimination constraints. On the other hand, we extend two exact separation algorithms proposed in~\cite{Padberg1985} and~\cite{Padberg1990b}. We empirically show the contribution of the shrinking and separation strategies in the time reduction and in the generation of violated subcycle elimination constraints. For the experiments, we have used 24 instances of the subcycle separation problem generated in the solution of OP by B\&C with up to 15112 number of vertices. The results show that the speedup of using the combination of the proposed shrinking and separation techniques is around 50 times in medium-sized instances and 200 times in large-sized instances.

  This paper is structured as follows. In Section~\ref{sec:cp}, we introduce the cycle polytope and other related polytopes used in this work. In Section~\ref{sec:srk}, we study the safe shrinking rules for the cycle polytope. Section~\ref{sec:gsec} includes rules that are particularly safe for Subcycle Elimination Constraints (SEC). In Section~\ref{sec:sep-alg}, two exact separation algorithms of SECs for cycle problems are presented. Finally, in Section~\ref{sec:exp}, we discuss the computational experiments for the different separation algorithms for SECs. Appendices~\ref{appendix:pseudo},~\ref{appendix:detailed} and~\ref{appendix:figures} of this work are available, which contain pseudocodes of the shrinking and separation algorithms as well as detailed computational results and figures to illustrate the shrinking techniques. Additionally, we have released the source code of the implementations used for the computational experiments.

\end{section}

\begin{section}{The Cycle Polytope}\label{sec:cp}

  Let $G=(V, E)$ be an undirected graph with no loops. Let us define the following sets:
  \begin{subequations}
    \begin{align}
      (Q:W) & : = \{[u, v] \in E:  u \in Q, v \in W\} & Q, W \subset V \\
      \delta(Q) & : = (Q:V-Q) & Q \subset V \\
      E(Q) & : = (Q:Q) & Q \subset V  \\
      V(T) & : = \{v \in V:  T \cap (v : V) \neq \emptyset\} & T \subset E \\
      N(Q) & : = V(\delta(Q))-Q & Q \subset V
    \end{align}
  \end{subequations}
  \noindent where  $(Q:W)$ are the edges connecting $Q$ and $W$, $\delta(Q)$ is the set of edges in the coboundary of $Q$ also known as the star-set of $Q$, $E(Q)$ is the set of edges between the vertices of $Q$, $V(T)$ is the set of vertices incident with an edge set $T$, and $N(Q)$ are the neighbour vertices set of $Q$. For simplicity, we sometimes denote $\{e\}$ and $\{v\}$ by $e$ and $v$, respectively, e.g., $\delta(v)$ and $V(e)$.

  We denote by $\mathbb{R}^{V}$ and $\mathbb{R}^{E}$ the space of real vectors whose components are indexed by elements of $V$ and $E$, respectively. With every subset $T \subset E$ we associate a vector ${(y, x)}^T=(y^T, x^T)$ called the characteristic vector of $T$, defined as follows:
  \begin{equation}
    y^T_v : = \left\{\begin{aligned}
        1 & \quad \; \mbox{if $v \in V(T)$} \\
        0 & \quad \; \mbox{otherwise}
    \end{aligned} \right.
    \quad
    x^T_{e} : = \left\{\begin{aligned}
        1 & \quad \;\mbox{if $e \in T$ } \\
        0 & \quad \; \mbox{otherwise}
    \end{aligned} \right.
  \end{equation}
  When $y_v^T = 1$, i.e. $v \in V(T)$, we say that the vertex $v$ is visited by the edge set $T$.

  We denote by $\mathcal{C}_G$ the set of (simple) cycles of the graph $G$. We assume that every cycle $\tau \in \mathcal{C}_G$ is represented as a subset of edges. Then, the cycle polytope $P^G_C$ of the graph $G$ is the convex hull of the characteristic vectors of all the cycles of the graph:
  \begin{equation}
    P^G_C : = conv\{{(y, x)}^{\tau} \in \mathbb{R}^{V \times E}: \tau \in \mathcal{C}_G\}
  \end{equation}

  By definition, a vector $(y, x)$ belongs to $P^G_C$ if it is a convex combination of cycles of $\mathcal{C}_G$, i.e., $(y, x)\in P^G_C$ if and only if there exists a set of real numbers ${\{\lambda_{\tau}\}}_{\tau \in \mathcal{C}_G}$ such that
  \begin{equation}
    (y, x) = \sum_{\tau \in \mathcal{C}_G} \lambda_{\tau} {(y, x)}^\tau
  \end{equation}
  $\lambda_{\tau} \geq 0$ for every $\tau \in \mathcal{C}_G$ and $\sum_{\tau \in \mathcal{C}_G} \lambda_\tau = 1$.

  Similarly, we denote by $\mathcal{T}_G$ the set of tours, i.e., Hamiltonian cycles, of the graph $G$, and by $P^G_{TSP}$ the TSP polytope of the graph $G$. The $P^G_{TSP}$ is the convex hull of the characteristic vectors of all the tours of the graph:
  \begin{equation}
    P^G_{TSP} : = conv\{{(y, x)}^{\tau} \in \mathbb{R}^{V \times E}: \tau \in \mathcal{T}_G\}
  \end{equation}
  Note that, $y = 1$ is satisfied by every $(y, x) \in P^G_{TSP}$. Since, the tours form a subset of cycles of $G$, we have that:
  \begin{equation}
    P^G_{TSP} \subset P^G_C
  \end{equation}

  In order to use Linear Programming based techniques such as the B\&C algorithm, the polytope $P^G_C$ must be characterized by means of a system of linear constraints. A complete characterization of the integer points of $P^G_C$ using only edge variables was given in~\cite{bauer1997}. In this work, since we find it more convenient to formulate the shrinking rules of Section~\ref{sec:srk} and Section~\ref{sec:gsec}, we consider an equivalent one which uses the vertex and edge variables for the characterization. For $(y, x) \in \mathbb{R}^{V \times E} $, $S\subset V$ and $T\subset E$, we define $y(S) = \sum_{v \in S} y_v$ and $x(T) = \sum_{e \in T} x_e$. Let us consider the following constraints:
  \begin{subequations}
    \begin{align}
      x(\delta(v)) - 2y_v & =0\label{cp:deg}, & v \in V \\
      y_v - x_e & \geq 0, & v \in V, \; e \in \delta(v)\label{cp:logical} \\
      x(\delta(Q)) - 2y_v-2y_w & \geq -2,\label{cp:gsec} & v \in Q \subset V, 3 \leq |Q| \leq |V|-3,\; w \in V-Q \\
      x(E) & \geq 3, & \label{cp:egt3} \\
      1 \geq y_v & \geq 0,\label{cp:vbnd} & v\in V\\
      x_e & \geq 0,\label{cp:ebnd} &e\in E \\
      x_e & \in \mathbb{Z}\label{cp:eint} &  e \in E
    \end{align}\label{cp:all}
  \end{subequations}
  The degree equations~\eqref{cp:deg} together with the logical constraints~\eqref{cp:logical} and the integrality constraints~\eqref{cp:eint} ensure that the visited vertices have exactly two incident edges and the unvisited vertices none. The Subcycle Elimination Constraints (SEC)~\eqref{cp:gsec} ensure that only one connected cycle exists. Throughout the paper, we use the notation $\langle Q, v, w\rangle$ to refer to the SEC defined by the set $Q$ and the vertices $v\in Q$ and $w\notin Q$. In the literature, the SECs have also been called Generalized Subtour Elimination Constraints (GSEC). The inequality~\eqref{cp:egt3} imposes the property that the undirected cycles contain at least 3 edges. The conditions~\eqref{cp:vbnd},~\eqref{cp:ebnd} and~\eqref{cp:eint} impose that all the variables are 0{-}1. Note that the integrality of the $y_v$ variables is ensured by~\eqref{cp:deg},~\eqref{cp:logical} and~\eqref{cp:eint}, and the condition $x_e \leq 1$ is ensured by~\eqref{cp:logical} and~\eqref{cp:vbnd}. Considering the constraints in~\eqref{cp:all}, the cycle polytope of a graph $G=(V, E)$ can be expressed as follows:
  \begin{equation}
    P^G_{{C}}  =  conv \{(y, x) \in \mathbb{R}^{V \times E}: (y, x) \textnormal{~satisfies~\eqref{cp:deg},~\eqref{cp:logical},~\eqref{cp:gsec},~\eqref{cp:egt3},~\eqref{cp:vbnd},~\eqref{cp:ebnd},~\eqref{cp:eint}}\}
  \end{equation}

  In some problems, for instance OP and PCTSP, a feasible solution must visit a depot vertex, i.e., $y_d = 1$ for a vertex $d \in V$. In such cases, the family of SECs~\eqref{cp:gsec} that define the cycle polytope can be substituted with the following subfamily:
  \begin{equation}
    x(\delta(Q)) - 2y_v \geq 0,\label{cp:gsec2}  \qquad  v \in Q \subset V, 3 \leq |Q| \leq |V|-3,\; d \notin Q
  \end{equation}
  where each constraint can be represented as $\langle Q, v\rangle$. In a B\&C algorithm, where all the constraints of the model are not considered in the $LP_0$, the only advantage by using this constraint family is that we simplify a vertex in the SEC representation. However, it has one important disadvantage, in the family~\eqref{cp:gsec2} we might need to consider an SEC with $|Q|>|V|/2$, while in the family~\eqref{cp:gsec} it can be considered always a SEC such that $|Q| \leq |V|/2$. Therefore, we always consider the family~\eqref{cp:gsec} regardless of whether it is given a depot or not in the cycle problem.

  When a B\&C is used to solve a cycle problem, the integrality constraints~\eqref{cp:eint} of the $P^G_C$ are relaxed in order to first seek a solution that satisfies the rest of the constraints. Contrary to this strategy,~\cite{pferschy2017} have recently considered again relaxing the SEC constraints in the TSP, to first solve the resulting problem to integer optimality with MILP{-}solvers and then introduce the SECs if required. Despite the improvement of the new MILP{-}solvers, this approach is still inferior compared to the opposite strategy. As a consequence of the continuous relaxation, a solution $(y,x)$ that satisfies the rest of the constraints of~\eqref{cp:all} might still not belong to $P^G_{C}$. In these cases, instead of directly resorting to the branching phase to tighten the integrality gap, we could check if additional (not dominated by those in~\eqref{cp:all}) and facet-defining valid inequalities for the $P^G_C$ are violated. The strength of considering additional valid inequalities was shown in the 1970s in the study of the TSP~\cite{Grotschel1979}. In~\cite{bauer1997} an extension of the clique trees inequality family (originally defined for the TSP) was given, which includes the so-called comb inequalities, for cycle problems. The shrinking rules proposed in Section~\ref{sec:srk} are safe for all the valid inequalities for $P^G_C$.

  A polytope that it is closely related to $P^G_{C}$ is the so-called lower cycle polytope, see~\cite{bauer1997}:
  \begin{equation}
    L^G_C=conv \{P^G_C, (0,0)\}
  \end{equation}
  where $(0,0)\in \mathbb{R}^{V \times E}$ is the vector that represents that no vertex and edges of the graph are visited. It is easy to see, that for every graph $G$, so that it contains at least one cycle, there exist an infinity number of vectors $(y,x) \in L_C$ such that $x(E) < 3$. Hence, the polytope $P^G_C$ is a proper subspace of $L^G_C$ for every graph $G$ that contains at least one cycle. It is crucial to consider the polytope $L^G_C$ to obtain the shrinking results in Section~\ref{sec:srk}.

  In a B\&C algorithm, it is reasonable to solve the separation problems of the valid inequality families following an order determined by their complexity. This order defines a hierarchy of the inequality families and their closure polytopes. We refer to the closure polytope of an inequality family as the polytope that satisfies all the inequalities of the given family and its preceding families in this hierarchy.

  Without considering the variable bounds~\eqref{cp:vbnd}-\eqref{cp:ebnd} and the inequality~\eqref{cp:egt3}, the simplest inequalities are the degree equations~\eqref{cp:deg} and the logical constraints~\eqref{cp:logical}. These have, respectively, linear and quadratic exact algorithms in terms of the number of the vertices of $G$ and generally are always included in the $LP_0$. The closure polytope of the inequalities~\eqref{cp:deg} and~\eqref{cp:logical} (the inequality~\eqref{cp:egt3} is excluded to favour the convexity) turns out to be the undirected Assignment Polytope (with loops), $P^G_A$, which is defined as:
  \begin{equation}
    P^G_A := \{(y,x) \in \mathbb{R}^{V \times E}: (y,x) \textnormal{~satisfies~\eqref{cp:deg},~\eqref{cp:logical},~\eqref{cp:vbnd},~\eqref{cp:ebnd}}\}\label{poly:cp}
  \end{equation}

  Next in the hierarchy comes the SEC family. A straightforward exact separation algorithm for the SECs has $O(|V|^4)$ time complexity (see Section~\ref{sec:sep-strat} for further discussion) and its closure polytope is defined as:
  \begin{equation}
    P^G_{SEC} := \{(y,x) \in P^G_{A}: (y,x) \textnormal{~satisfies~\eqref{cp:gsec}} \}\label{poly:sec}
  \end{equation}
  Considering the relationship $P^G_{C} \subset P^G_{SEC} \subset P^G_{A}$, the underlying purpose of this paper is to effectively determine if a given solution $(y,x) \in P^G_A$ of a $LP_0$ belongs to $P^G_{SEC}$, or in case that it does not belong, to provide the violated inequalities.

  Throughout the paper, we make use of the following well-known identity repeatedly. Given a graph $G$, a subset $S \subset V$ and a vector $x \in \mathbb{R}^{ E}$, the identity
  \begin{equation}
    x(\delta(S)) = \sum_{v \in S} x(\delta(v)) - 2 x(E(S))
  \end{equation}
  is always satisfied. In addition, if the vector $(y, x) \in \mathbb{R}^{V\times E}$ satisfies the degree constraints~\eqref{cp:deg}, then the equations
  \begin{equation}
    x(\delta(S)) = 2 y(S) - 2 x(E(S)) \quad S \subset V \label{cp:eq}
  \end{equation}
  are satisfied by the vector $(y, x)$. Particularly, the identity~\eqref{cp:eq} is satisfied by every vector in $P^G_{TSP}$, $P^G_C$, $P^G_{SEC}$ and $P^G_A$.

\end{section}

\begin{section}{Shrinking for the Cycle Polytope}\label{sec:srk}

  In this section, we present three shrinking rules that are safe for the $P^G_C$, i.e., rules that preserve the existence of violated cycle inequalities in the shrunk graph. In essence, we have generalized for every (simple) cycle problem the results obtained by~\cite{Padberg1990b} for Hamiltonian cycle problems. In the following lines, we formalize the concept of safe shrink for $P^G_{C}$ and we prove the lemmas and the theorem in which shrinking rules for cycle problems are based on. In addition, we show that the three shrinking rules can be consecutively applied for the $P^G_C$.

  Let us introduce the following notation. Given a graph $G=(V,E)$, the vector $(y,x) \in \mathbb{R}^{V \times E}$ and a subset $S \subset V$, we denote by $ G[S]= (V[S],E [S])$ the graph obtained by shrinking the set $S$ into a single vertex $s\notin V$, where the resulting set of vertices and edges are as follows:
  \begin{subequations}
    \begin{align}
      V[S]        & = (V-S) \cup \{s\} \\
      E[S]        & = E(V-S)  \cup \{[s, v]: v \in {V-S}, x(S:v) > 0\}
    \end{align}
  \end{subequations}
  and by $(y[S],x[S]) \in \mathbb{R}^{V[S] \times E[S]}$ we denote the vector with components
  \begin{subequations}
    \begin{align}
      x[S]([u,v])  & = x_{[u,v]}  & \forall [u,v] & \in E \cap E[S] \\
      x[S]([s,v])  & = x(S:v)     & \forall v & \in V -S \\
      y[S](v)      & = y_v         & \forall v & \in V \cap V[S] \\
      y[S](s)      & = x(\delta(S))/2
    \end{align}
  \end{subequations}

  Let $Q \subset V $ be a subset of vertices, we denote with $Q[S]$ the subset derived by shrinking $S$

  \begin{equation}
    Q[S]= \left\{\begin{aligned}
        & (Q - S) \cup \{s\} & \quad \; \mbox{if $S \cap Q \neq \emptyset $} \\
        & Q & \quad \; \mbox{otherwise}
    \end{aligned} \right.
    \label{def:srkset}
  \end{equation}
  which has the following associated values:
  \begin{subequations}
    \begin{align}
      y[S](Q[S]) & = \left\{\begin{aligned}
         & y(Q) - y(Q \cap S) + \frac{x(\delta(S))}{2}  &  \mbox{if $S \cap Q \neq \emptyset $} \\
         &   y(Q) & \qquad \quad \; \mbox{otherwise}
      \end{aligned} \right. \\[10pt]
          x[S](\delta(Q[S])) & = \left\{\begin{aligned}
          & x(\delta(S \cup Q))&  \hspace{30mm} \mbox{if $S \cap Q \neq \emptyset $} \\
          & x(\delta(Q)) & \mbox{otherwise}
          \end{aligned}\label{def:srkdelta} \right. \\[10pt]
              x[S](E(Q[S])) & = \quad x(E(Q)) - x(E(Q \cap S)) \label{def:srkstar}
            \end{align}
            \end{subequations}

            Based on the definition given in~\cite{Padberg1990b} for safe shrinking for the $P^G_{{TSP}}$, an analogue definition can be formulated for safe shrinking for the $P^G_{C}$.

            \begin{definition}
              Given a vector $(y, x) \notin P^G_{C}$, a set $S\subset V$ is safe to shrink if $(y[S],x[S]) \notin P^{G[S]}_{C}$.
            \end{definition}

            Note that the definition does not assume a one-to-one correspondence between the violated inequalities of $(y, x)$ and $(y[S], x[S])$ (e.g. different violated cuts for $(y, x)$ might overlap to the same violated cut for $(y[S], x[S])$).
            When a set $S$ is safe to shrink for a given $(y,x)$, it is also said that $S$ is shrinkable for $(y,x)$.

            The definition of shrinkable set does not provide a practical tool for finding them. Hence, the first goal is to give a set of rules of shrinking for $P^G_{C}$, which are obtained in Theorem~\ref{corol:srk}.
            The strategy used in~\cite{Padberg1990b} to obtain the shrinking rules for tours cannot be applied directly for simple cycles, because it relies on the fact that the tours visit every vertex in the graph. So, first we need to obtain the following lemma.

            \begin{lemma}\label{lemma:srk2}
              Let $(y, x) \in L^G_{C}$ be a vector. Suppose that $\{Q, \{u\}, \{v\}\}$ is a partition of $V$ such that $x_{[u,v]}=x(u:Q)=x(v:Q)>0$.
              Then any cycle $\tau$ of $\mathcal{C}^G$ that has a positive coefficient in the convex combination of $(y,x)$, $\lambda_{\tau}>0$, fulfills one of the following cases:
              \begin{enumerate}[(i)]
                \item $V(\tau) \subset Q$
                \item $|\tau \cap (u:Q)| = |\tau \cap (v:Q)| = |\tau \cap [u, v]| = 1$
              \end{enumerate}
            \end{lemma}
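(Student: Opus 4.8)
The plan is to fix the convex-combination representation promised by $(y,x)\in L^G_C$ and then count, for each participating cycle, how many of its edges can touch the two singleton sides $\{u\}$ and $\{v\}$ of the partition. Write $(y,x)=\sum_{\tau\in\mathcal{C}_G}\lambda_\tau (y,x)^\tau$ for such a representation; any weight not carried by cycles sits on the vector $(0,0)$ and contributes nothing to the arguments below. For a cycle $\tau$ set
\[
  a_\tau=|\tau\cap[u,v]|,\qquad b_\tau=|\tau\cap(u:Q)|,\qquad c_\tau=|\tau\cap(v:Q)|,
\]
so that $a_\tau\in\{0,1\}$ and, since $\{Q,\{u\},\{v\}\}$ partitions $V$, the coboundary of $u$ is the disjoint union $\delta(u)=\{[u,v]\}\cup(u:Q)$; hence the degree of $u$ in $\tau$ is $a_\tau+b_\tau$, which is $0$ or $2$ because $\tau$ is a simple cycle, and likewise $a_\tau+c_\tau\in\{0,2\}$.

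The crucial elementary fact is $b_\tau\ge a_\tau$ (and symmetrically $c_\tau\ge a_\tau$): the only pairs $(a_\tau,b_\tau)$ compatible with $a_\tau\le 1$ and $a_\tau+b_\tau\in\{0,2\}$ are $(0,0)$, $(0,2)$ and $(1,1)$. Translating the three quantities appearing in the hypothesis as $x_{[u,v]}=\sum_\tau\lambda_\tau a_\tau$, $x(u:Q)=\sum_\tau\lambda_\tau b_\tau$ and $x(v:Q)=\sum_\tau\lambda_\tau c_\tau$, the assumption $x(u:Q)=x_{[u,v]}$ yields
\[
  0=\sum_{\tau\in\mathcal{C}_G}\lambda_\tau\,(b_\tau-a_\tau),
\]
a sum of nonnegative terms, so $b_\tau=a_\tau$ for every $\tau$ with $\lambda_\tau>0$; the assumption $x(v:Q)=x_{[u,v]}$ gives $c_\tau=a_\tau$ for every such $\tau$ in the same way.

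It then remains to case-split on $a_\tau$ for a cycle with $\lambda_\tau>0$. If $a_\tau=1$ then $b_\tau=c_\tau=1$, which is exactly case (ii). If $a_\tau=0$ then $b_\tau=c_\tau=0$, so $\tau$ contains no edge of $\delta(u)$ nor of $\delta(v)$; thus $u,v\notin V(\tau)$ and $V(\tau)\subset V-\{u,v\}=Q$, which is case (i). I expect the only delicate points to be the bookkeeping that lets us ignore the $(0,0)$ term of the $L^G_C$-representation and the inequality $b_\tau\ge a_\tau$, which is what upgrades the two scalar equalities of the hypothesis into the rigidity $a_\tau=b_\tau=c_\tau$ cycle-by-cycle; the rest follows immediately from the degree parity of simple cycles.
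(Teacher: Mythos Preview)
Your proof is correct and uses essentially the same counting argument as the paper: both exploit the degree parity of simple cycles at $u$ and $v$ together with the equality $x_{[u,v]}=x(u:Q)=x(v:Q)$ to force the per-cycle contributions to match. Your presentation via the variables $a_\tau,b_\tau,c_\tau$ and the single nonnegativity-plus-zero-sum step is a bit more systematic than the paper's two-case narrative, but the mathematical content is the same.
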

            \begin{proof}
              Let $\mathcal{C}_{uv}$ denote the subset of cycles in $\mathcal{C}$ that visits the edge $[u, v]$ and has a positive value, $\lambda_{\tau} >0$. Note that since $(y, x)\in  L^G_{C}$, then $x_{[u,v]} \leq y_v$ and $x_{[u,v]} \leq y_u$. So, in order to satisfy the degree equations, every cycle $\tau$ in $\mathcal{C}_{uv}$ must contain at least an edge in $(u:Q)$ and $(v:Q)$. Moreover, since $\tau$ is a simple cycle, every $\tau \in \mathcal{C}_{uv}$ crosses exactly once $(u : Q)$ and $(v : Q)$. Now, let us see that if $\tau$ does not belong to $\mathcal{C}_{uv}$ and $\lambda_{\tau}>0$, then $\tau$ is contained in $Q$. Consider the following inequality:
              \begin{subequations}
                \begin{align}
                  x_{[u,v]} = \sum_{\substack{\zeta \in \mathcal{C}_{uv}}} \lambda_{\zeta} x^{\zeta}_{[u,v]} = \sum_{\substack{\zeta \in \mathcal{C}_{uv}}} \lambda_{\zeta} = \sum_{\zeta \in \mathcal{C}_{uv}} \sum_{e \in (u:Q)} \lambda_{\zeta} x^{\zeta}_e & \leq  \\
                  \sum_{\zeta \in \mathcal{C}_{uv}} \sum_{e \in (u:Q)} \lambda_{\zeta} x^{\zeta}_e + \sum_{\zeta \notin \mathcal{C}_{uv}} \sum_{e \in (u:Q)} \lambda_{\zeta} x^{\zeta}_e & = x(u:Q)
                \end{align}
              \end{subequations}
              Since $x_{[u,v]} = x(u : Q)$, we have that $x^{\tau}_e = 0$ for every $e \in (u:Q)$. Similarly, we obtain that $x^{\tau}_e = 0$ for every $e \in (v:Q)$. Therefore, $\tau$ is contained in $Q$.
              
            \end{proof}

            \begin{figure}[htb!]
              \centering
              \includegraphics[width=0.4\columnwidth]{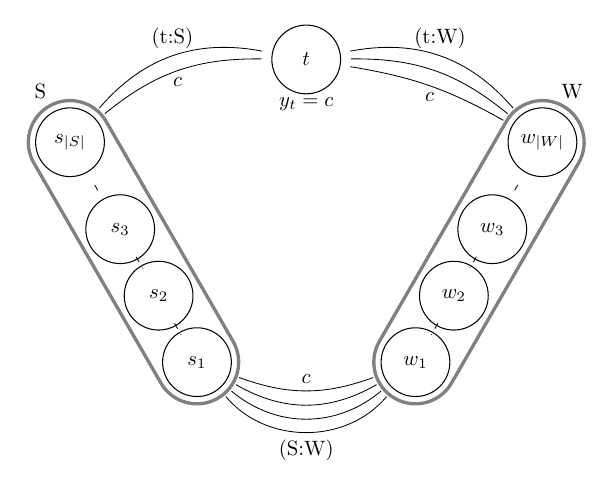}
              \caption{Illustration of the scenario in Lemma~\ref{thm:srk}.}\label{fig:situation}
            \end{figure}

            The next result generalizes the main theorem of shrinking in~\cite{Padberg1990b}. The principal idea is to use a constant, $c$, to extend the rules of the original paper (where $\forall v \in V$ satisfies $y_v=1$) for vertices that have fractional value. We also need an additional hypothesis about the vector $(y[W], x[W])$ obtained by shrinking the subset $W$, the ``complement'' of $S$, which is not required for the TSP because it is trivially satisfied by Hamiltonian cycles.

            \begin{lemma}\label{thm:srk}
              Given a vector $(y, x) \notin P^G_{C}$, let $\{S, W, \{t\}\}$ be a partition of $V$ with $2 \leq |S|$ and $c$ be a constant where $0<c\leq1$  such that:
              \begin{enumerate}[(i)]
                \item $y_v = c$ $\forall v \in S \cup \{t\}$\label{hip:srk-1}
                \item $x(E(S)) = c \cdot (|S|-1)$\label{hip:srk-2}
                \item $x(t:S) = c$\label{hip:srk-3}
                \item $(y[W], x[W]) \in L^{G[W]}_{C}$\label{hip:srk-4}
                \item No cycle in the convex combination of $(y[W], x[W])$ is contained in $S$\label{hip:srk-5}
              \end{enumerate}
              Then it is safe to shrink $S$ for $(y, x)$.
            \end{lemma}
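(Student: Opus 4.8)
The plan is to prove the contrapositive: assuming $(y[S],x[S])\in P^{G[S]}_{C}$, I would build $(y,x)$ explicitly as a convex combination of cycles of $G$, so $(y,x)\in P^{G}_{C}$ --- contradicting $(y,x)\notin P^{G}_{C}$ and hence showing $S$ is safe to shrink. A preliminary observation fixes all the edge weights I will need. Since the edges incident to any $v\in S$ in $G[W]$ carry the same $x[W]$-weight as the edges incident to $v$ in $G$ carry $x$-weight (and likewise at $t$), hypothesis~(\ref{hip:srk-4}) forces the degree equations~\eqref{cp:deg} for $(y,x)$ on $S\cup\{t\}$; together with (\ref{hip:srk-1})--(\ref{hip:srk-3}) this yields $x(\delta(S))=2c$ and $x(S:W)=x(t:W)=c$. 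Consequently $x[W]([w,t])=x[W](w:S)=x[W](t:S)=c>0$ and $x[S]([s,t])=x[S](s:W)=x[S](t:W)=c>0$, which is exactly the configuration that lets me invoke Lemma~\ref{lemma:srk2} twice: once for $(y[W],x[W])$ in $G[W]$ (partition $\{S,\{w\},\{t\}\}$) and once for $(y[S],x[S])$ in $G[S]$ (partition $\{W,\{s\},\{t\}\}$).

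Next I would analyse the $W$-side. Fix a representation $(y[W],x[W])=\sum_{j}\mu_{j}(y,x)^{\sigma_{j}}+\mu_{0}(0,0)$ with $\sigma_{j}$ cycles of $G[W]$ in which, by hypothesis~(\ref{hip:srk-5}), no $\sigma_{j}$ is contained in $S$. Lemma~\ref{lemma:srk2} then excludes its case~(i), so every $\sigma_{j}$ contains $w$, $t$ and the edge $[w,t]$ and meets each of $(w:S)$, $(t:S)$ exactly once; hence $\sigma_{j}=[w,t]\cup[t,b_{j}]\cup H_{j}\cup[a_{j},w]$ with $b_{j},a_{j}\in S$ and $H_{j}$ a path inside $S$ from $b_{j}$ to $a_{j}$. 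A short count --- $\sum_{j}\mu_{j}=y[W](t)=c$ (so $\mu_{0}=1-c$), and $\sum_{v\in S}y[W](v)=c|S|=|S|\sum_{j}\mu_{j}$ with $y[W](v)=c$ throughout --- forces $V(\sigma_{j})\supseteq S$ for every $j$, so each $H_{j}$ is a Hamiltonian path of $S$ (and $b_{j}\neq a_{j}$, since $|S|\geq2$). Reading the representation off edge by edge then gives the marginal identities $\sum_{j:\,b_{j}=b}\mu_{j}=x_{[t,b]}$, $\sum_{j:\,a_{j}=a}\mu_{j}=x(a:W)$ for $a,b\in S$, and $\sum_{j:\,e\in H_{j}}\mu_{j}=x_{e}$ for $e\in E(S)$; the $\sigma_{j}$ thus serve as templates for routing a cycle through all of $S$.

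Then I would analyse the $S$-side and lift. Fix $(y[S],x[S])=\sum_{k}\lambda_{k}(y,x)^{\tau_{k}}$. Lemma~\ref{lemma:srk2} splits the $\tau_{k}$ into those with $V(\tau_{k})\subset W$, which are already cycles of $G$ avoiding $S$ and which I keep with weight $\lambda_{k}$, and those of the form $\tau_{k}=[t,s]\cup[s,v_{k}]\cup P_{k}\cup[u_{k},t]$ with $u_{k},v_{k}\in W$ and $P_{k}$ a path in $W$; these latter carry total weight $c$ and satisfy $\sum_{k:\,v_{k}=v}\lambda_{k}=x(S:v)$ for $v\in W$. For each such $\tau_{k}$ and each template $j$ with $x_{[a_{j},v_{k}]}>0$ I form the $G$-cycle $\rho_{j,k}=[t,b_{j}]\cup H_{j}\cup[a_{j},v_{k}]\cup P_{k}\cup[u_{k},t]$, a simple cycle on $\{t\}\cup S\cup V(P_{k})$ of length at least $|S|+2\geq4$. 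What remains is to choose weights $\pi(j,k)\geq0$ for these pairs with $\sum_{k}\pi(j,k)=\mu_{j}$, $\sum_{j}\pi(j,k)=\lambda_{k}$, and $\sum_{j,k:\,a_{j}=a,\,v_{k}=v}\pi(j,k)=x_{[a,v]}$ for all $a\in S$, $v\in W$ --- the last condition automatically confines $\rho_{j,k}$ to pairs with $x_{[a_{j},v_{k}]}>0$ and is what will make the reconstruction correct on the cross-edges.

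Finally I would show $\pi$ exists and verify the assembly. Existence is a transportation argument: the subvector of $x$ on $(S:W)$ is itself a nonnegative coupling of the marginals $\{x(a:W)\}_{a\in S}$ and $\{x(S:v)\}_{v\in W}$, while $\{\mu_{j}\}$ refines the first along the classes $\{j:a_{j}=a\}$ and $\{\lambda_{k}\}$ refines the second along $\{k:v_{k}=v\}$; splitting each $\mu_{j}$ over the $v$'s and each $\lambda_{k}$ over the $a$'s to meet the demands $x_{[a,v]}$, then taking normalized products within each $(a,v)$-block, yields a feasible $\pi$. Then, class by class --- the vertices of $S$, of $W$, and $t$, and the edge sets $E(S)$, $(t:S)$, $(S:W)$, $(t:W)$, $E(W)$ --- one checks that the retained type-(i) cycles (weights $\lambda_{k}$) together with the $\rho_{j,k}$ (weights $\pi(j,k)$) reproduce $(y,x)$, every check collapsing to one of the marginal identities collected above; the weights total $(1-c)+c=1$. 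I expect the gluing in this last step to be the main obstacle: fabricating one joint distribution $\pi$ simultaneously compatible with the two independently obtained decompositions and rebuilding the cross-edge weights $x_{[a,v]}$. Everything else --- the edge-weight computations, the counting that makes each $H_{j}$ Hamiltonian, and the final case check --- is routine bookkeeping.
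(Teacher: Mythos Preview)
Your proposal is correct and follows essentially the same ``patch-and-weight'' strategy as the paper: decompose $(y[S],x[S])$ and $(y[W],x[W])$ via Lemma~\ref{lemma:srk2}, splice the $W$-paths and $S$-paths at a cross-edge $[a_j,v_k]$, and weight the resulting cycles so that all marginals are reproduced. The only cosmetic differences are that the paper writes down the coupling explicitly as $\pi(j,k)=x_{[a_j,v_k]}\cdot\frac{\mu_j}{x(a_j:W)}\cdot\frac{\lambda_k}{x(S:v_k)}$ rather than invoking a transportation argument, and it does not pause to observe that each $H_j$ is a Hamiltonian path of $S$ (a true but inessential fact, since the edge-by-edge verification goes through without it).
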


            \begin{proof}
              Based on the hypotheses~\ref{hip:srk-1}),~\ref{hip:srk-2}) and~\ref{hip:srk-3}) of the lemma and the identity~\eqref{cp:eq} we obtain that $x(S:W)=c$ and $x(t:W)=c$, as illustrated in Figure~\ref{fig:situation}.

              Suppose for contradiction that $S$ is not shrinkable, so $(y[S],x[S]) \in P^{G[S]}_{C}$. Since $x_{[s,t]}=x(s:W)=x(t:W)$, based on Lemma~\ref{lemma:srk2}, the vector $(y[S], x[S])$ can be written as:
              \begin{equation}
                (y[S], x[S]) = \sum_{\substack{\zeta \in \mathcal{W}_s}} \alpha_{\zeta}  {(y,x)}^{\zeta} + \sum_{\substack{\zeta \in \mathcal{W}_0}} \alpha^0_{\zeta}  {(y,x)}^{\zeta}
              \end{equation}
              \noindent where $\mathcal{W}_s$ is the set of cycles visiting the shrunk vertex $s$ having $\alpha_{\zeta} > 0$ and $\mathcal{W}_0$ is the set of cycles contained in $W$ having $\alpha_{\zeta}^0>0$. Note that $\mathcal{W}_0$ might be an empty set. The coefficients satisfy  ${\sum_{\substack{\zeta \in \mathcal{W}_s}} \alpha_{\zeta} + \sum_{\zeta \in \mathcal{W}_0} \alpha^0_{\zeta}=1}$.

              By hypothesis the vector $(y[W], x[W])$ belongs to $ L^{G[W]}_{C}$, so $(y[W], x[W])$ can be written as a convex combination of cycles of $\mathcal{C}_{G[W]}$ and the vector $(0, 0)$. Because of the Lemma~\ref{lemma:srk2}
              and by the hypothesis~\ref{hip:srk-5}) the vector $(y[W], x[W])$ can be written as:
              \begin{equation}
                (y[W], x[W]) = \sum_{\substack{\eta \in \mathcal{S}_w}} \beta_{\eta}  {(y,x)}^{\eta} + \beta_{(0,0)} (0,0)
              \end{equation}
              \noindent where $\mathcal{S}_w$ is the set of cycles visiting $w$ (the vertex to which $W$ is contracted to) having $\beta_{\eta}>0$, $\beta_{(0,0)}\geq 0$ and $\sum_{\substack{\eta \in \mathcal{S}_w}}\beta_{\eta}  + \beta_{(0,0)} = 1$.

              Now, considering $x(t:s)=x(t:w)=c$ we have that:
              \begin{equation}
                c = \sum_{\substack{\zeta \in \mathcal{W}_s}} \alpha_{\zeta}  = \sum_{\substack{\eta \in \mathcal{S}_w}} \beta_{\eta}
              \end{equation}
              and from the fact that the coefficients sum up to one, we have that:
              \begin{equation}
                1 - c  = \sum_{\substack{\eta \in \mathcal{W}_0}} \alpha^0_{\eta} = \beta_{(0,0)}
              \end{equation}

              To prove the lemma we follow the ``patch-and-weight'' strategy used in~\cite{Padberg1990b} for the $P^G_{TSP}$ whose goal is to reconstruct the cycles and coefficients of the convex combination of the vector $(y, x)$. According to the vertices in $W$, we can partition $\mathcal{W}_s$ into $|W|$ pairwise disjoint subsets (some of them which be empty). For $j\in \{1,\ldots,|W|\}$ let us call $\mathcal{W}^j_s$ the subset of cycles in $\mathcal{W}_s$ containing the edge $[s, w_j]$, and denote by $\zeta^j_1, \ldots, \zeta^j_{k_j}$ the cycles of $\mathcal{W}^j_s$ and by $\beta^j_1, \ldots, \beta^j_{k_j}$ their coefficients in the convex combination. In the same way, we can partition $\mathcal{S}_w$ into $|S|$ subsets calling $\mathcal{S}^i_w$ the subset of cycles in $\mathcal{S}_w$ containing the edge $[s_i, w]$. We denote by $\eta^i_1, \ldots, \eta^i_{h_i}$ the cycles of $\mathcal{S}^i_w$ and by $\alpha^i_1, \ldots, \alpha^i_{h_i}$ their coefficients in the convex combination.

              The cycles of the convex combination of $(y,x)$ are constructed in two steps. In the first step, $|\mathcal{S}_w|$ copies of each cycle in $\mathcal{W}_s$ are created. With this goal, for each $j \in \{1, \ldots, |W|\}$ and for each $l \in \{1, \ldots, k_j\}$, create $|S|$ copies of the cycle $\zeta^j_l$, and denote them by $\{\tau^{ij}_l\}$ for $i \in \{1,\ldots,|S|\}$. Then, for each $j \in \{1, \ldots, |W|\}$, for each $l \in \{1, \ldots, k_j\}$ and for each $i \in \{1, \ldots, |S|\}$ create $h_i$ copies of $\tau^{ij}_{l}$, and denote them by $\{\tau^{ij}_{ml}\}$ for $m\in \{1,\ldots,h_i\}$. At this point we have $|\mathcal{W}_s| \cdot |\mathcal{S}_w|$ cycles that belong to $G[S]$. In the second step, these cycles of $G[S]$ are extended to cycles of $G$. To that end, consider each cycle $\tau^{ij}_{ml}$ and remove the edges $[t,s]$ and $[s, w_j]$ and join the resulting path with the path in $G[W]$ obtained from the cycle $\eta^i_m$ by removing the edges $[w, t]$ and $[s_i, w]$, and add the edge $[s_i, w_j]$ to obtain the extension of $\tau^{ij}_{ml}$ to $G$.

              The coefficients of the constructed $\tau^{ij}_{ml}$ cycles are defined in the following way:

              \begin{equation}
                \lambda^{ij}_{ml} =
                \frac{x_{[s_j, w_i]} \cdot \alpha^j_l \cdot \beta^i_m}{\sum_{r=1}^{k_j} \alpha^j_r \cdot \sum_{r=1}^{h_i} \beta^i_r}
              \end{equation}
              \noindent where $i \in \{1, \ldots, |S|\}$, $j \in \{1, \ldots, |W|\}$,  $m \in \{1, \ldots, h_i\}$ and $l \in \{1, \ldots, k_j\}$. It can be verified that the coefficients defined this way sum $c$ in total:
              \begin{equation}
                \sum_{i,j,m,l}\lambda^{ij}_{ml} =
                \sum_{i,j} x_{[s_j, w_i]} \sum_{m,l} \frac{\alpha^j_l \cdot \beta^i_m}{\sum_{r=1}^{k_j} \alpha^j_r \cdot \sum_{r=1}^{h_i} \beta^i_r} = \sum_{i,j} x_{[s_j, w_i]} = x(S:W) = c
              \end{equation}

              Then the vector $(y, x)$ can be obtained as a convex combination of the cycles in $\mathcal{W}_0$ and $\{\tau^{ij}_{ml}\}$ with coefficients $\{\alpha^0_\zeta\}$ and $\{\lambda^{ij}_{ml}\}$, respectively. We conclude $(y, x) \in P^G_{C}$ which is a contradiction.
              
            \end{proof}

            The lemma gives a sufficient condition for a set to be shrinkable, but still it is not practical. The next theorem gives three practical scenarios to make use of Lemma~\ref{thm:srk}. Beforehand, let us obtain a useful result for $L^G_{C}$. Consider the undirected version of the Assignment Polytope (without loops) $P^1_A$ defined as:
            \begin{equation}
              P^1_{A}  :=  \{(y,x)\in \mathbb{R}^{V\times E}: (y,x) \textnormal{~satisfies~\eqref{cp:deg},~\eqref{cp:logical},~\eqref{cp:ebnd}, $y=1$}\}
            \end{equation}
            \noindent
            It is a well-known result of the literature that $P^G_{TSP}=P^1_A$ for $3 \leq |V| \leq 5$ (see~\cite{Grotschel1979}). This relationship is the key to obtaining the shrinking rules for the $P^G_{TSP}$ in~\cite{Padberg1990b}. So, we would like to obtain a similar result for $L^G_C$ and $P^G_A$.
            However, $L^G_C \neq P_A$ when $4 \leq |V|$, as shown in the counterexample of Figure~\ref{fig:assig}. The vector defined in the figure belongs to $P^G_A$, but it does not belong to $L^G_C$, because it cannot  be expressed as a convex combination of cycles.

            \begin{figure}[htb!]
              \begin{center}
                \includegraphics[width=0.25\columnwidth]{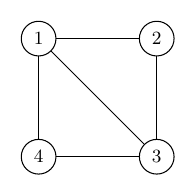}
                \caption{An example of a solution that belongs to $P^G_A$ but not to $L^G_C$ when $|V| = 4$ (it can be easily extended for $|V| \geq 4$ by means of subdivisions). All the edges in the figure have value $\frac{1}{2}$. The values of the vertices satisfy the degree equations.}\label{fig:assig}
              \end{center}
            \end{figure}
            Nevertheless, we have the following lemma which is enough to prove Theorem~\ref{corol:srk}.
            \begin{lemma}\label{thm:assgim5}
              Let $G=(V,E)$ be a graph and $c$ be a constant such that $3 \leq |V| \leq 5$ and $0 < c \leq 1$. If $(y, x) \in P_A$ such that $y_v=c$ for all $v \in V$, then $ (y,x) \in L_{C}$.
            \end{lemma}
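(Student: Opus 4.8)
The plan is to reduce the claim, by a single rescaling, to the small-graph identity $P^G_{TSP}=P^1_A$ recalled just above. Given $(y,x)\in P_A$ with $y_v=c$ for every $v\in V$ and $0<c\le 1$, I would set $(y',x'):=\tfrac{1}{c}(y,x)\in\mathbb{R}^{V\times E}$, which is well defined since $c>0$. Each of the conditions \eqref{cp:deg}, \eqref{cp:logical} and \eqref{cp:ebnd} is invariant under multiplication of $(y,x)$ by a positive scalar, so all three are satisfied by $(y',x')$; moreover $y'_v=y_v/c=1$ for every $v\in V$. Hence $(y',x')$ satisfies \eqref{cp:deg}, \eqref{cp:logical}, \eqref{cp:ebnd} together with $y'=1$, i.e.\ $(y',x')\in P^1_A$.

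Using now the hypothesis $3\le|V|\le 5$, the relation $P^1_A=P^G_{TSP}$ yields $(y',x')\in P^G_{TSP}$, and the inclusion $P^G_{TSP}\subset P^G_C$ from Section~\ref{sec:cp} then gives $(y',x')\in P^G_C$. Finally, since $0<c\le 1$, the identity $(y,x)=c\,(y',x')+(1-c)\,(0,0)$ exhibits $(y,x)$ as a convex combination of a point of $P^G_C$ and the vector $(0,0)$; by the definition $L_C=conv\{P^G_C,(0,0)\}$ this gives $(y,x)\in L_C$, as claimed.

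No real difficulty arises beyond the correct use of the identity $P^G_{TSP}=P^1_A$, and this is precisely what forces $|V|\le 5$: for $|V|\ge 6$ a pair of vertex-disjoint triangles produces an integral vector of $P^1_A$ (with $y=1$) that is not a convex combination of tours, since every tour $\tau$ satisfies $x^\tau(E(S))\le|S|-1$ for each proper subset $S$. The only point to be stated carefully is that the rescaled vector lands in $P^1_A$ itself, not merely in the cone it generates: this is immediate because none of \eqref{cp:deg}, \eqref{cp:logical}, \eqref{cp:ebnd} involves an upper bound that the scaling could violate, while the remaining requirement $y'=1$ holds thanks to the hypothesis $y_v=c$.
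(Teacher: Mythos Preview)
Your proof is correct and essentially identical to the paper's own argument: rescale by $1/c$ to land in $P^1_A$, invoke $P^1_A=P^G_{TSP}$ for $3\le|V|\le 5$, and then recover $(y,x)$ as the convex combination $c\,(y',x')+(1-c)\,(0,0)\in L_C$. The only cosmetic difference is that the paper phrases the last step via $P^G_{TSP}\subset L^G_C$ and convexity of $L^G_C$, while you go through $P^G_{TSP}\subset P^G_C$ and then the definition $L_C=\mathrm{conv}\{P^G_C,(0,0)\}$; these are equivalent.
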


            \begin{proof}
              It is straightforward that if $(y,x) \in P_A$ such that $y_v=c$ for all $v \in V$, then $\frac{1}{c}(y,x) \in P^1_A$. By the classical result in~\cite{Grotschel1979}, since $3 \leq |V| \leq 5$, the equality $P^1_A = P_{TSP}$ is satisfied. Since  $P^G_{TSP}$ is contained in $L^G_{C}$, the vector $\frac{1}{c} (y,x)$ belongs to $L^G_{C}$. Then, since both $(0, 0)$ and $\frac{1}{c} (y,x)$ belong to $L^G_{C}$, which is convex, and $0 \leq c \leq 1$ we have that $(y,x) \in L_{C}$.
              
            \end{proof}

            \begin{lemma}\label{lema:notcontinS}
              Given a graph $G$ such that $|V| = 5$, a vector $(y, x) \in L^G_{{C}}$ and $0 \leq c \leq 1$, suppose that $\lambda_{(0,0)} = 1-c$. Let $\{S, \{t\}, \{w\}\}$ be a partition of $V$ such that $x_{[t,w]} = x(t:S) = x(w:S) = c$, then every cycle $\tau$ in $\mathcal{C}^G$ such that $\lambda_{\tau} > 0$ is not contained in $S$.
            \end{lemma}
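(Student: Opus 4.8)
The plan is to obtain this as a direct corollary of Lemma~\ref{lemma:srk2} combined with a one-line edge-counting argument. First I would dispose of the degenerate case $c=0$: here $\lambda_{(0,0)}=1$, so $\sum_{\tau}\lambda_\tau=0$, no cycle occurs with positive coefficient, and the statement holds vacuously. So assume $c>0$. Then $x_{[t,w]}=x(t:S)=x(w:S)=c>0$, and since $\{S,\{t\},\{w\}\}$ is a partition of $V$, the hypotheses of Lemma~\ref{lemma:srk2} are satisfied verbatim with $Q=S$, $u=t$, $v=w$ (here $(y,x)\in L^G_C$ replaces the membership assumption of that lemma, which is all its proof uses).

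Next I would fix a representation realizing the assumed coefficient, $(y,x)=\sum_{\tau\in\mathcal{C}_G}\lambda_\tau (y,x)^\tau+\lambda_{(0,0)}(0,0)$ with $\lambda_{(0,0)}=1-c$, so that $\sum_{\tau}\lambda_\tau=c$. By Lemma~\ref{lemma:srk2}, every cycle $\tau$ with $\lambda_\tau>0$ is either of type (i), contained in $S$, or of type (ii), crossing each of $(t:S)$, $(w:S)$ and $[t,w]$ exactly once. I would then read off the value of $x(t:S)$ from this convex combination: the $(0,0)$ term contributes $0$; a type-(i) cycle contributes $0$, because all its edges lie in $E(S)$ while $t\notin S$, so it meets no edge of $(t:S)$; and a type-(ii) cycle contributes exactly $\lambda_\tau$. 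Hence $c=x(t:S)=\sum_{\tau\text{ type (ii)}}\lambda_\tau$.

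Comparing this with $\sum_{\tau}\lambda_\tau=c$ yields $\sum_{\tau\text{ type (i)}}\lambda_\tau=0$, and since every coefficient is nonnegative, no type-(i) cycle can appear with a positive coefficient. Thus every $\tau$ with $\lambda_\tau>0$ is not contained in $S$, which is the claim. I do not anticipate a genuine obstacle: the only care needed is to verify that the three edge-sum equalities and the partition structure match the hypotheses of Lemma~\ref{lemma:srk2} exactly, and to treat $c=0$ separately as above. It is worth remarking that the restriction $|V|=5$ (equivalently $|S|=3$) plays no role in this argument; it is merely the regime in which the lemma will be invoked when proving Theorem~\ref{corol:srk}.
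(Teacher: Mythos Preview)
Your proof is correct and follows essentially the same route as the paper's. Both arguments use Lemma~\ref{lemma:srk2} to obtain the dichotomy (cycle contained in $S$ vs.\ cycle crossing $(t:S)$, $(w:S)$, $[t,w]$ exactly once), then compare the edge count $x(t:S)=c$ against the total cycle mass $\sum_\tau\lambda_\tau=c$ to force the ``contained in $S$'' coefficients to vanish. Your version is slightly more streamlined: you invoke the dichotomy of Lemma~\ref{lemma:srk2} directly, whereas the paper first partitions cycles by the number of vertices they meet in $S$ before applying that lemma---a detour that is not really needed. Your explicit treatment of $c=0$ and your remark that $|V|=5$ is immaterial to the argument are both accurate additions.
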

            \begin{proof}
              Since $\{S, \{t\}, \{w\}\}$ is a partition of $V$, we have that $|S| = 3$ and $|V-S| = 2$. Hence, every cycle in $\mathcal{C}^G$ has vertices in $S$. According to the number of visited vertices of $S$, we can partition $\mathcal{C}^G$ into $3$ subsets $\{\mathcal{C}_1, \mathcal{C}_2 ,\mathcal{C}_3\}$. Furthermore, the set $\mathcal{C}_3$ can be partitioned into two subsets, $\mathcal{C}^{in}_3$ and  $\mathcal{C}^{out}_3$, determined by whether the cycles are fully contained in $S$ or not. Since $(y, x)$ belongs to $L^G_{{C}}$, there is a convex combination of cycles of $\mathcal{C}^G$ whose coefficients satisfy
              \begin{equation}
                \sum_{\tau \in \mathcal{C}_1} \lambda^1_{\tau} + \sum_{\tau \in \mathcal{C}_2} \lambda^2_{\tau}+ \sum_{\tau \in \mathcal{C}^{out}_3} \lambda^{3o}_{\tau} + \sum_{\tau \in \mathcal{C}^{in}_3}  \lambda^{3i}_{\tau}  + \lambda_{(0,0)}=1\label{eq:notcontinS1}
              \end{equation}
              Since the cycles in $\mathcal{C}_1$, $\mathcal{C}_2$ and  $\mathcal{C}^{out}_3$ have edges in $(t:S)$ and  $(w:S)$, by the Lemma~\ref{lemma:srk2}, each cycle has exactly one edge in the mentioned edge sets. Now, consider the hypothesis that $x(t:S)=c$ (or $x(w:S)=c$), so the coefficients also satisfy the following identity:
              \begin{equation}
                \sum_{\tau \in \mathcal{C}_1} \lambda^1_{\tau} + \sum_{\tau \in \mathcal{C}_2} \lambda^2_{\tau} + \sum_{\tau \in \mathcal{C}^{out}_3} \lambda^{3o}_{\tau} =c\label{eq:notcontinS2}
              \end{equation}
              By hypothesis, we have that $\lambda_{(0, 0)}=1-c$ and by~\eqref{eq:notcontinS1} and~\eqref{eq:notcontinS2}, we obtain that $\lambda^{3i}_{\tau}=0$ for all $\tau \in \mathcal{C}^{in}_3$, which means that every cycle in $\mathcal{C}^G$ contained in $S$ has null coefficient.
              
            \end{proof}

            \begin{theorem}[Rules C1, C2 and C3]\label{corol:srk}
              Given a vector $(y,x) \notin P^G_{C}$, let $S \subset V$ with $2 \leq |S| \leq 3$, $t \in V-S$ and $0 < c \leq 1$ be such that:
              \begin{enumerate}[(i)]
                \item $y_v = c$ $\forall v \in S \cup \{t\}$
                \item $x(E(S)) = c \cdot (|S|-1)$
                \item $x(t:S) = c$
              \end{enumerate}
              Then it is safe to shrink $S$ for $(y, x)$.
            \end{theorem}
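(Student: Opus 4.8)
The plan is to deduce the theorem from Lemma~\ref{thm:srk}. Set $W := V \setminus (S \cup \{t\})$, so that $\{S, W, \{t\}\}$ is a partition of $V$; I will check that it, together with the given constant $c$, satisfies all five hypotheses of Lemma~\ref{thm:srk}, and then the conclusion is immediate. Hypotheses~\ref{hip:srk-1})--\ref{hip:srk-3}) are exactly the assumptions (i)--(iii) of the theorem, so only~\ref{hip:srk-4}) and~\ref{hip:srk-5}) need to be established. First I would observe that $W$ is nonempty: applying~\eqref{cp:eq} to $S$ gives $x(\delta(S)) = 2y(S) - 2x(E(S)) = 2c|S| - 2c(|S|-1) = 2c$, whereas $W = \emptyset$ would make $\delta(S) = (S:t)$ and hence $x(\delta(S)) = x(t:S) = c$, contradicting $c > 0$. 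Consequently $3 \le |V[W]| = |S| + 2 \le 5$.

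Next I would identify the contracted vector $(y[W], x[W])$ and show it lies in $P^{G[W]}_A$. Applying~\eqref{cp:eq} to $S$ and then to $S \cup \{t\}$ together with (ii)--(iii) yields $x(S:W) = c$ and $x(t:W) = c$, so $x(\delta(W)) = 2c$ and therefore the vertex $w$ into which $W$ is contracted has $y[W](w) = x(\delta(W))/2 = c$; thus $y[W] \equiv c$ on $V[W]$. The degree equations~\eqref{cp:deg} then hold at every $v \in S \cup \{t\}$ because contracting $W$ leaves $x(\delta(v))$ unchanged, and at $w$ because $x[W](\delta(w)) = x(\delta(W)) = 2c = 2 y[W](w)$; the bounds~\eqref{cp:vbnd}--\eqref{cp:ebnd} are clear. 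The logical inequalities~\eqref{cp:logical} are inherited on the edges of $E(S \cup \{t\})$, and on a contracted edge $[v,w]$ one needs $x(v:W) \le c = y[W](v)$: for $v = t$ this is the equality $x(t:W) = c$, and for $v \in S$ the degree equation at $v$ combined with (ii) gives $x(v:W) \le x_f$, where $f$ is the unique edge of $E(S)$ not incident with $v$, and $x_f \le c$ by the logical inequality for $(y,x)$ --- the computation being identical (and even shorter) when $|S| = 2$. Hence $(y[W], x[W]) \in P^{G[W]}_A$.

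Since moreover $y[W] \equiv c$ and $3 \le |V[W]| \le 5$, Lemma~\ref{thm:assgim5} applied to $G[W]$ yields $(y[W], x[W]) \in L^{G[W]}_C$, which is hypothesis~\ref{hip:srk-4}). For hypothesis~\ref{hip:srk-5}), if $|S| = 2$ then no cycle can be contained in $S$ at all, since every cycle has at least three vertices by~\eqref{cp:egt3}, so it holds vacuously. If $|S| = 3$ then $|V[W]| = 5$, so $\tfrac{1}{c}(y[W],x[W]) \in P^1_A = P^{G[W]}_{TSP}$ (the latter equality holding because $|V[W]| \le 5$); hence $\tfrac{1}{c}(y[W],x[W])$ is a convex combination of Hamiltonian cycles of $G[W]$, and rescaling shows $(y[W], x[W])$ is a convex combination of those tours together with $(0,0)$ carrying coefficient exactly $1 - c$. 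Since $x[W]([t,w]) = x(t:W) = c$, $x[W](t:S) = x(t:S) = c$ and $x[W](w:S) = x(S:W) = c$, Lemma~\ref{lema:notcontinS} applies to this convex combination and gives that no cycle with positive coefficient is contained in $S$ (which is clear in any case, since every tour of $G[W]$ visits $t \notin S$); this is hypothesis~\ref{hip:srk-5}). With all five hypotheses verified, Lemma~\ref{thm:srk} yields that $S$ is safe to shrink for $(y,x)$, establishing Rules C1, C2 and C3.

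The step I expect to be the crux is hypothesis~\ref{hip:srk-5}) in the case $|S| = 3$: one must rule out that the contracted solution can only be decomposed using a triangle living entirely inside $S$, and this is precisely where the classical identity $P^1_A = P^G_{TSP}$ for graphs on at most five vertices is indispensable, as it forces the coefficient of $(0,0)$ to be $1 - c$ and thus leaves no mass for such a triangle. A secondary, purely computational point is the verification of the logical inequalities on the contracted edges $[v,w]$, which is what necessitates the case distinction on $|S|$ and the restriction $|S| \le 3$.
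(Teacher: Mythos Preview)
Your proof is correct and follows essentially the same route as the paper's: reduce to Lemma~\ref{thm:srk} by setting $W=V\setminus(S\cup\{t\})$, use Lemma~\ref{thm:assgim5} to obtain hypothesis~\ref{hip:srk-4}), and handle hypothesis~\ref{hip:srk-5}) by a case distinction on $|S|$ with Lemma~\ref{lema:notcontinS} (or, as you observe, the even simpler remark that tours of $G[W]$ visit $t$) for $|S|=3$. You actually supply more detail than the paper does---the nonemptiness of $W$, the verification that $(y[W],x[W])\in P^{G[W]}_A$, and the explicit justification of $\lambda_{(0,0)}=1-c$ needed to invoke Lemma~\ref{lema:notcontinS}---all of which the paper simply asserts.
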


            \begin{proof}
              Let $W = V-(S \cup \{t\})$ be a subset of $V$. If the hypotheses are satisfied, note that $W$ is non-empty. Since $2 \leq |S| \leq 3$, we have that $4 \leq |V[W]| \leq 5$. Notice that, $y_v=c$ for all the vertices of $V[W]$ and $(y[W], x[W]) \in P^{G[W]}_A$. Under these hypotheses, by Lemma~\ref{thm:assgim5}, the vector $(y[W], x[W])$ belongs to $L^{G[W]}_{C}$. When $|S|=2$, it does not exist any cycle contained in $S$. When $|S|=3$, as a consequence of Lemma~\ref{lema:notcontinS}, we have that it does not exist a cycle in the convex combination of $(y[W], x[W])$ contained in $S$. Therefore, the hypotheses of Lemma~\ref{thm:srk} are satisfied and $S$ is shrinkable.
              
            \end{proof}

            From Theorem~\ref{corol:srk}, three shrinking rules can be derived, which are summarized in Figure~\ref{fig:srk}: the rules C1 and C2 correspond to the case $|S|=2$ and the rule C3 to $|S|=3$.

            \begin{figure}[htb!]
              \centering
              \includegraphics[width=\columnwidth]{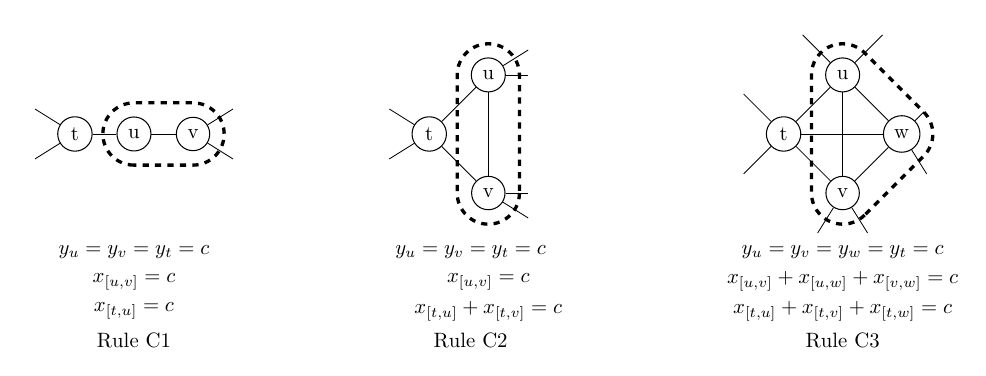}
              \caption{Illustration of the three shrinking rules derived from the Theorem~\ref{corol:srk}}\label{fig:srk}
            \end{figure}

            It is easy to see that rule C2 dominates the rule C1, in fact it is just a particular case of it. The reason to split them, is that the cost of checking C1 is lower than the cost of C2. By contrast, rule C3 is not dominated by the rules C1 and C2. In Figure~\ref{fig:sec1}, an example is given of a vector  $(y,x) \in P_A$ in which rule C3 can be applied but not C1 and C2. For instance, if we consider $S=\{1,2,3\}$, $W=\{4,5,6\}$ and $t=7$, then $S$ is shrinkable by rule C3. Since the vertices and edges have different values, there is no shrinkable set that can be identified by rule C1 or C2.
            \begin{figure}[htb!]
              \centering
              \hspace*{0.2\columnwidth}
              \includegraphics[width=0.7\columnwidth]{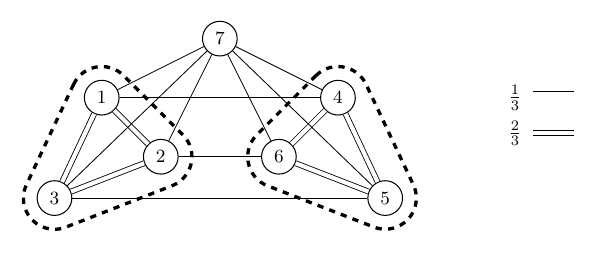}
              \caption{Example of a pair $G$ and $(y,x)\in P_A$ where rule C3 can be applied but not rules C1 nor C2. The values of the edges are the ones detailed in the legend and all the vertices have value 1.}\label{fig:sec1}
            \end{figure}

            A useful property of the rules derived from Theorem~\ref{corol:srk} is that the value of the vertices is inherited in the shrunk graphs.

            \begin{lemma}\label{lemma:yc}
              Under the hypotheses of Theorem~\ref{corol:srk}, $y[S](v[S]) = y_v$ for all $v\in V$.
            \end{lemma}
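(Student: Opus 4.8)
The plan is to split the verification according to whether the vertex $v$ lies in the shrunk set $S$, since by the definition of $v[S]$ (the specialization of~\eqref{def:srkset} to $Q=\{v\}$) the two cases behave quite differently. If $v\notin S$, then $v[S]=\{v\}$ and $v\in V\cap V[S]$, so $y[S](v[S])=y[S](v)=y_v$ holds immediately from the definition of the shrunk vector. In particular this already covers $t$ and every vertex of $W$, so the only real content is the case $v\in S$.

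For $v\in S$ we have $S\cap\{v\}\neq\emptyset$, hence $v[S]=\{s\}$ and therefore $y[S](v[S])=y[S](s)=x(\delta(S))/2$ by the definition of the shrunk vector. Since hypothesis~(i) gives $y_v=c$ for every $v\in S$, the lemma reduces to the single identity $x(\delta(S))=2c$. To establish it I would use the always-valid identity $x(\delta(S))=\sum_{w\in S}x(\delta(w))-2x(E(S))$ together with the degree equations~\eqref{cp:deg} --- available because the separation is carried out for $(y,x)\in P^G_A$ --- to replace $x(\delta(w))$ by $2y_w=2c$ for each $w\in S$; combined with hypothesis~(ii), $x(E(S))=c\,(|S|-1)$, this yields $x(\delta(S))=2c|S|-2c(|S|-1)=2c$. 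Equivalently, one may invoke~\eqref{cp:eq} directly and write $x(\delta(S))=2y(S)-2x(E(S))=2c|S|-2c(|S|-1)=2c$.

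I do not expect any genuine obstacle here: the statement is essentially a bookkeeping computation, and the only point worth stressing is that it relies on the degree equations holding at the vertices of $S$ (so that the star-weight of $S$ equals $2c$), which is precisely why hypotheses~(i) and~(ii) are posed as equalities rather than inequalities. Note also that neither hypothesis~(iii) nor the bound $|S|\leq 3$ enters the argument; only~(i) and~(ii) are used.
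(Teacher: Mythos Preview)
Your proof is correct and follows essentially the same approach as the paper's: split into $v\notin S$ (immediate from the definition of the shrunk vector) and $v\in S$ (reduce to $x(\delta(S))=2c$ via the degree identity~\eqref{cp:eq} with hypotheses~(i) and~(ii)). The paper's proof is simply a terse one-line version of yours, writing $2y_s=x(\delta(S))=2y_v$ without spelling out the intermediate computation $x(\delta(S))=2y(S)-2x(E(S))=2c|S|-2c(|S|-1)=2c$.
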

            \begin{proof}
              For every $v\in V-S$, we have $y[S](v[S]) = y_v$ by definition. Since $2y_s=x(\delta(S))=2y_v$ for $v\in S$ we obtain the result of the lemma.
              
            \end{proof}

            In the preprocess of separation algorithms, it is desirable to perform multiple consecutive safe shrinkings. For that aim, we need to analyse what happens with the hypotheses of Theorem~\ref{corol:srk} after the contraction of a shrinkable set. More precisely, we need to see when the shrunk vector belongs to $P^G_A$.

            \begin{lemma}\label{lemma:srkPa}
              Let $S$ be a shrinkable set for $(y, x) \in P^G_A$ obtained from Theorem~\ref{corol:srk} using the $\{S, W, \{t\}\}$ partition. Then, $(y[S], x[S])$ satisfies the degree equations and the logical constraints associated with every edge in $E(W) \cup (t:V)$. In addition, we have either
              \begin{enumerate}[i)]
                \item $(y[S], x[S]) \in P^{G[S]}_A$, or
                \item $\exists w \in W$ such that $y_w < y_s$ and $y_w < x_{[w,s]} \leq y_s$\label{lemma:srkPa-2}
              \end{enumerate}
            \end{lemma}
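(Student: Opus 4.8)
The plan is to verify the defining inequalities of $P^{G[S]}_A$ for $(y[S],x[S])$ family by family, so as to isolate the one family that can fail and read the dichotomy off it. First I would record the numerology around the shrunk vertex $s$. By the well-known identity $x(\delta(S)) = \sum_{v\in S}x(\delta(v)) - 2x(E(S))$ together with the degree equations and hypotheses (i)--(ii) of Theorem~\ref{corol:srk}, $x(\delta(S)) = 2c|S| - 2c(|S|-1) = 2c$, hence $y[S](s) = x(\delta(S))/2 = c$ (this is Lemma~\ref{lemma:yc}); in particular $y[S]$ takes only values in $\{y_v : v\in V-S\}\cup\{c\}\subset[0,1]$. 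Splitting $\delta(S) = (S:W)\cup(S:t)$ and using $x(t:S)=c$ gives $x(S:W)=c$, and splitting $\delta(t)=(t:S)\cup(t:W)$ with $x(\delta(t))=2c$ gives $x(t:W)=c$ --- the same two identities that appear in the proof of Lemma~\ref{thm:srk}. Since $x(S:t)=c>0$, the edge $[t,s]$ is present and $E[S] = E(W)\cup(t:W)\cup\{[t,s]\}\cup\{[s,w] : w\in W,\ x(S:w)>0\}$; the edges of $G[S]$ lying in $E(W)\cup(t:V[S])$ are exactly $E(W)\cup(t:W)\cup\{[t,s]\}$, leaving the $[s,w]$ with $w\in W$ as the only others.

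Next I would dispatch the constraints that hold unconditionally. For the degree equations in $G[S]$: at $v\in W$ the coboundary of $v$ consists of its original $\delta_G(v)$-edges inside $(W-v)\cup\{t\}$ together with $[s,v]$ when present, and $x[S]([s,v])=x(S:v)$, so $x[S](\delta(v)) = x(\delta_G(v)) = 2y_v = 2y[S](v)$; at $t$, $x[S](\delta(t)) = x(t:W) + x[S]([t,s]) = c + x(S:t) = 2c = 2y[S](t)$; and at $s$, $x[S](\delta(s)) = x(S:W) + x(S:t) = 2c = 2y[S](s)$. For the logical inequalities of the edges in $E(W)\cup(t:W)$: both the edge value and the two endpoint values are inherited unchanged from $(y,x)\in P^G_A$. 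For $[t,s]$: $x[S]([t,s]) = x(S:t) = c = y[S](t) = y[S](s)$, so both associated logical inequalities read $c-c\ge 0$. Finally, $x[S](e)\ge 0$ holds since each $x[S]([s,v])=x(S:v)$ is a sum of nonnegative terms, and $0\le y[S]\le 1$ was noted above. This already gives the first assertion: the only inequalities in the $P^{G[S]}_A$-description not yet checked are the logical ones on the edges $[s,w]$, $w\in W$.

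It then remains to treat, for each $w\in W$ with $x(S:w)>0$, the two logical inequalities $y[S](s)\ge x[S]([s,w])$ and $y[S](w)\ge x[S]([s,w])$ for the edge $[s,w]$. Since $x[S]([s,w]) = x(S:w)\le x(S:W) = c = y[S](s)$, the first always holds, so a violation can only be $y_w = y[S](w) < x(S:w)$. If this occurs for no $w$, every logical inequality holds and $(y[S],x[S])\in P^{G[S]}_A$, which is case~i). Otherwise pick such a $w$: then $y_w < x(S:w) = x_{[w,s]} \le c = y_s$, which yields both $y_w < y_s$ and $y_w < x_{[w,s]}\le y_s$, i.e.\ case~ii). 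I do not expect a genuine obstacle: the whole argument is verification, and the only points requiring care are keeping the three kinds of $x[S]$-components apart (edges inherited from $E(W)\cup(t:W)$, the single edge $[t,s]$, and the edges $[s,w]$) and feeding in the quantitative identities $x(S:W)=x(t:W)=c$.
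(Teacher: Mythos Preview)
Your proof is correct and follows essentially the same route as the paper's: verify the degree equations and the logical constraints in $G[S]$ family by family, observe that the only ones that can fail are $y_w \ge x[S]([s,w])$ for $w\in W$, and read the dichotomy off from $x[S]([s,w])=x(S:w)\le x(S:W)=c=y_s$. The paper compresses all of this into a few lines (invoking Lemma~\ref{lemma:yc} for $y_s=c$ and leaving the degree and inherited logical checks to the reader), whereas you spell out each case explicitly; the underlying argument is the same.
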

            \begin{proof}
              From the definition of the shrunk vector, it is clear that $(y[S], x[S])$ satisfies the degree equations. Since $v\in S$ satisfies $y_v \leq 1$, $y_s=y_v$ also satisfies $y_s \leq 1$. Moreover, $x_{[t, s]} = y_s = y_t$.
              If $x(w:S) \leq y_w$ for all $w \in W$ then $(y[S], x[S])$ satisfies the logical constraints and $(y[S], x[S]) \in P^{G[S]}_A$. If the previous is not true, there exists a vertex $w\in W$ such that $x(w:S)> y_w$ and $y_w < y_s$ (because by hypothesis $(y,x) \in P^{G}_A$). Therefore, the logical constraint $x_{[w,s]}\leq y_w$ is violated for $(y[S], x[S])$ by a vertex $w \in W$ such that $y_w < y_s$.
              
            \end{proof}

            There are two scenarios where the shrunk vector always belongs to $P^G_A$. First, when all the vertices of $V$ have the same $y$ value, as is the case when $(y, x) \in P^G_{TSP}$, and secondly, when only rule C1 is applied. The next theorem shows that if $(y, x) \in P^G_A$, it is possible to shrink a subset $S$ obtained by the rules of Theorem~\ref{corol:srk} and continue with further safe shrinkings regardless of whether or not $(y[S], x[S])$ belongs to $P^{G[S]}_A$.

            \begin{theorem}\label{thm:multisrk}
              Given a vector $(y, x) \in P^G_A$, it is safe to consecutively apply the shrinking rules derived from Theorem~\ref{corol:srk}.
            \end{theorem}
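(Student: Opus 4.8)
The plan is to induct on the number of consecutive shrinks. Write $G_0=G$, $(y_0,x_0)=(y,x)$, and for $i\ge 1$ let $S_i$ be the $i$-th set shrunk (so $S_i$ satisfies hypotheses (i)--(iii) of Theorem~\ref{corol:srk} in $G_{i-1}$ with constant $c_i$ and apex $t_i$), $G_i=G_{i-1}[S_i]$ and $(y_i,x_i)=(y_{i-1}[S_i],x_{i-1}[S_i])$; as the separation setting requires, assume $(y_0,x_0)\notin P^{G_0}_C$, so the goal is $(y_k,x_k)\notin P^{G_k}_C$ for every $k$. The obstruction is that, by Lemma~\ref{lemma:srkPa}, an intermediate $(y_i,x_i)$ may fail to lie in $P^{G_i}_A$, and then Theorem~\ref{corol:srk} cannot be invoked verbatim at the next step. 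I would get around this by strengthening the induction hypothesis to: \emph{$(y_i,x_i)\notin P^{G_i}_C$, and whenever $(y_i,x_i)\notin P^{G_i}_A$ there is an edge $[p,q]$ of $G_i$ with $y_{i,p}<y_{i,q}$ and $y_{i,p}<x_{i,[p,q]}\le y_{i,q}$} --- a violated logical constraint seated on the smaller-$y$ endpoint of an edge whose endpoints carry distinct $y$-values.

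Before the induction I would record two facts. First, every $(y_i,x_i)$ satisfies the degree equations~\eqref{cp:deg} (shrinking preserves them, cf.\ Lemma~\ref{lemma:srkPa}), so by~\eqref{cp:eq} and hypotheses (i)--(iii) one gets $x_{i-1}(\delta(S_i))=2c_i$ and hence $y_{i,s_i}=c_i$ for the shrunk vertex $s_i$ (this is Lemma~\ref{lemma:yc}). Second, shrinking $S_{i+1}$ destroys an edge $[p,q]$ of $G_i$ only if both $p,q\in S_{i+1}$; otherwise it survives, at most with one endpoint replaced by $s_{i+1}$ and with $x$-value no smaller than before. For the inductive step I would split on whether $(y_i,x_i)\in P^{G_i}_A$. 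If it is, then $S_{i+1}$ meets all the hypotheses of Theorem~\ref{corol:srk} for $(y_i,x_i)\in P^{G_i}_A\setminus P^{G_i}_C$, so $S_{i+1}$ is safe to shrink and $(y_{i+1},x_{i+1})\notin P^{G_{i+1}}_C$; Lemma~\ref{lemma:srkPa} then gives either $(y_{i+1},x_{i+1})\in P^{G_{i+1}}_A$ or a violated logical constraint on an edge $[w,s_{i+1}]$ with $y_{i+1,w}<y_{i+1,s_{i+1}}$ and $y_{i+1,w}<x_{i+1,[w,s_{i+1}]}\le y_{i+1,s_{i+1}}$, which is the required form.

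The substantive case is $(y_i,x_i)\notin P^{G_i}_A$, where the hypothesis supplies a violated edge $[p,q]$ with $y_{i,p}<y_{i,q}$. Since all vertices of $S_{i+1}\cup\{t_{i+1}\}$ share the value $c_{i+1}$, the endpoints $p$ and $q$ cannot both lie in $S_{i+1}$ (indeed not both in $S_{i+1}\cup\{t_{i+1}\}$), so by the second fact $[p,q]$ survives in $G_{i+1}$ in one of three shapes, and in each I would re-establish the hypothesis: (a) $p,q\notin S_{i+1}$: the edge and both $y$-values are unchanged; (b) $p\in S_{i+1}$, $q\notin S_{i+1}$: the edge becomes $[s_{i+1},q]$ with $x_{i+1,[s_{i+1},q]}=x_i(S_{i+1}:q)\ge x_{i,[p,q]}>y_{i,p}=c_{i+1}=y_{i+1,s_{i+1}}$ and $y_{i+1,s_{i+1}}=c_{i+1}<y_{i,q}=y_{i+1,q}$; (c) $q\in S_{i+1}$, $p\notin S_{i+1}$: symmetric, now with $y_{i+1,p}=y_{i,p}<c_{i+1}=y_{i+1,s_{i+1}}$. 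Each case leaves a violated logical constraint of the prescribed form, so $(y_{i+1},x_{i+1})\notin P^{G_{i+1}}_A\supseteq P^{G_{i+1}}_C$ and the induction goes through.

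The hard part is precisely this last case --- arguing that once a shrunk vector has left the assignment polytope it never re-enters it in a way that could invalidate a later use of the rules. What makes it work is that the logical violation delivered by Lemma~\ref{lemma:srkPa} always lives on an edge whose endpoints carry different $y$-values (the shrunk vertex with the large common value $c$ of the set just contracted, the offending neighbour a strictly smaller one), while by hypothesis (i) a subsequent shrink set is $y$-monochromatic on $S\cup\{t\}$ and hence cannot swallow both endpoints and erase the violation. Everything else reduces to the value bookkeeping $y_{i,s_i}=c_i$ and $x_i(S_i:q)\ge x_{i,[p,q]}$ isolated above.
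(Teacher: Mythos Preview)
Your argument is correct and takes a genuinely different route from the paper's. The paper proceeds by re-opening the proof of Theorem~\ref{corol:srk}: it observes that the hypothesis $(y,x)\in P^G_A$ enters that proof only (i) through Lemma~\ref{lemma:srk2}, where the logical constraints are invoked only for edges whose endpoints share the common value $c$, and (ii) through the requirement $(y[N],x[N])\in P^{G[N]}_A$ for the complement $N$ of the next shrinkable set; it then checks that both uses remain valid because, by Lemma~\ref{lemma:srkPa}, every violated logical constraint in the shrunk vector lives on an edge with distinct $y$-endpoints and hence cannot interfere. You instead sidestep Theorem~\ref{corol:srk} entirely once the vector has left $P_A$: you show the witnessing violated logical constraint cannot be absorbed by the $y$-monochromatic set $S_{i+1}\cup\{t_{i+1}\}$ and therefore persists, so the vector stays outside $P_A\supseteq P_C$ forever. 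Your route is more elementary and self-contained, since it never revisits the internals of the earlier proof; the paper's route is more structural in that it shows the full shrinking mechanism of Theorem~\ref{corol:srk} continues to apply, not merely that membership in $P_C$ fails. Two side remarks: your case~(b) is actually vacuous, since $x_i(S_{i+1}:W_{i+1})=c_{i+1}$ together with nonnegativity forces $x_{i,[p,q]}\le c_{i+1}$ whenever $p\in S_{i+1}$ and $q\in W_{i+1}$, contradicting $x_{i,[p,q]}>y_{i,p}=c_{i+1}$; and the upper bound $x_{i,[p,q]}\le y_{i,q}$ in your stated invariant is never used in the inductive step and could simply be dropped.
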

            \begin{proof}
              Let $S$ be a subset obtained from Theorem~\ref{corol:srk} such that $(y[S], x[S]) \notin P^{G[S]}_A$. By Lemma~\ref{lemma:srkPa} we know that the only violated logical constraints of $(y[S], x[S])$ consist of edges whose vertices, $s$ and $v\in W$, have different values $y_v<y_s$. Notice that in the proof of Theorem~\ref{corol:srk} the hypothesis that the logical constraints are satisfied is used twice. First in Lemma~\ref{lemma:srk2}, which is applied for vertices having the same value. Secondly in Theorem~\ref{corol:srk}, where it is assumed $(y[N], x[N])\in P^{G[N]}_A$ for a given subset $N$ of $V[S]$. In order to see that this last hypothesis is always satisfied by every shrinkable set candidate, let us suppose that $\{M, N, \{r\}\}$ is a partition of $V[S]$ that satisfies hypotheses~\ref{hip:srk-1}),~\ref{hip:srk-2}) and~\ref{hip:srk-3}) of Theorem~\ref{corol:srk}. Then there are two possible cases: $v\in M\cup\{r\}$ and $s\in N$, or vice versa. The hypothesis $(y[N], x[N])\in P^{G[N]}_A$ is satisfied in both cases, because $x_{[n,v]} \leq y_n=y_u$ for $u\in M\cup\{r\}$.
              
            \end{proof}

            Another interesting scenario occurs when there is at least a vertex $v\in V$ satisfying $y_v=1$, as happens in the context of cycle problems with depot. In all these problems, the case~\ref{lemma:srkPa-2}) of Lemma~\ref{lemma:srkPa} has a special meaning as shown in Theorem~\ref{thm:srkPa2}.

            \begin{lemma}\label{lemma:egt}
              If $(y, x) \in \mathbb{R}^{V \times E}$ satisfies the degree equations~\eqref{cp:deg} and $u,v \in V$ are two vertices such that $x_{[u,v]} > y_u$ then $x(\delta(\{u, v\})) < 2 y_v$.
            \end{lemma}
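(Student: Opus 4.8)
The plan is to compute $x(\delta(\{u,v\}))$ explicitly by means of the set identity already introduced in the paper and then read off the inequality from the hypothesis. First I would apply the identity $x(\delta(S)) = \sum_{w \in S} x(\delta(w)) - 2\,x(E(S))$ with $S = \{u, v\}$. Since the statement refers to the quantity $x_{[u,v]}$, the edge $[u,v]$ belongs to $E$, so $E(\{u,v\}) = \{[u,v]\}$ and therefore $x(E(\{u,v\})) = x_{[u,v]}$. This gives
\[
  x(\delta(\{u,v\})) = x(\delta(u)) + x(\delta(v)) - 2\,x_{[u,v]}.
\]
Next I would substitute the degree equations~\eqref{cp:deg}, that is $x(\delta(u)) = 2y_u$ and $x(\delta(v)) = 2y_v$, to obtain
\[
  x(\delta(\{u,v\})) = 2y_u + 2y_v - 2\,x_{[u,v]} = 2y_v + 2\bigl(y_u - x_{[u,v]}\bigr).
\]

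Finally, the hypothesis $x_{[u,v]} > y_u$ yields $y_u - x_{[u,v]} < 0$, hence $x(\delta(\{u,v\})) < 2y_v$, which is exactly the claim. There is essentially no obstacle here beyond the bookkeeping of verifying that $E(\{u,v\})$ consists of precisely the single edge $[u,v]$; once the identity is applied, the rest is a one-line substitution followed by sign-chasing. I would note in passing that combining this bound with $x(\delta(v)) = 2y_v$ shows $x(\delta(\{u,v\})) < x(\delta(v))$, which is the form in which the result is used when analysing case~\ref{lemma:srkPa-2}) of Lemma~\ref{lemma:srkPa} in the subsequent Theorem~\ref{thm:srkPa2}.
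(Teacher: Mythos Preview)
Your proof is correct and follows essentially the same route as the paper's own argument: both expand $x(\delta(\{u,v\}))$ via the star-set identity and the degree equations to obtain $x(\delta(\{u,v\})) = 2y_u + 2y_v - 2x_{[u,v]}$, and then conclude using $x_{[u,v]} > y_u$. The paper merely compresses this into the single chain $2y_u < 2x_{[u,v]} = 2y_u + 2y_v - x(\delta(\{u,v\}))$, while you spell out the intermediate steps and the rewriting $2y_v + 2(y_u - x_{[u,v]})$, but there is no substantive difference.
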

            \begin{proof}
              As $(y, x)$ satisfies the degree equations:
              \begin{equation}
                2y_u < 2x_{[u,v]} = 2y_u + 2y_v - x(\delta(\{u, v\}))
              \end{equation}
              
            \end{proof}

            \begin{theorem}\label{thm:srkPa2}
              Given a vector $(y, x)\in P^G_A $, let $O=\{v \in V: y_v = 1\}$ be the subset of vertices with value equal to one and $S$ be a shrinkable set for $(y, x)$ obtained from Theorem~\ref{corol:srk} such that $O-S \neq \emptyset$. Then, we have either
              \begin{enumerate}[i)]
                \item $(y[S], x[S]) \in P^{G[S]}_A$, or
                \item $\exists w \in V-S$ such that, for every $u\in S$ and $v \in O-S$, the SEC $\langle S\cup\{w\}, u, v\rangle$ is violated by $(y, x)$.
              \end{enumerate}
            \end{theorem}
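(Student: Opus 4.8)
The plan is to read the dichotomy off Lemma~\ref{lemma:srkPa}, applied with the partition $\{S, W, \{t\}\}$, $W := V - (S \cup \{t\})$, that realises $S$ as a shrinkable set from Theorem~\ref{corol:srk}. First I would invoke that lemma: if its first alternative holds then $(y[S], x[S]) \in P^{G[S]}_A$ and we are in case~i) of the theorem, so suppose instead that case~\ref{lemma:srkPa-2}) of Lemma~\ref{lemma:srkPa} holds. It furnishes a vertex $w \in W \subseteq V-S$ with $y_w < y_s$ and $y_w < x_{[w,s]} \le y_s$. Recalling that $y_s = x(\delta(S))/2 = c$ and that, by the definition of the shrunk vector, $x_{[w,s]} = x(w:S)$, this says $x(w:S) > y_w$ and $y_w < c \le 1$; in particular $w \notin O$, so for every $v \in O-S$ we have $v \ne w$ and $v \notin S$, hence $v \notin S \cup \{w\}$, whereas every $u \in S$ belongs to $S \cup \{w\}$. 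This $w$ is the candidate for case~ii); it remains to show that $\langle S \cup \{w\}, u, v\rangle$ is violated for all such $u$ and $v$.

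Fix $u \in S$ and $v \in O-S$ (nonempty by hypothesis), so that $y_u = c$ and $y_v = 1$. The crux is the bound $x(\delta(S \cup \{w\})) < 2c$. I would get it from Lemma~\ref{lemma:egt}: $(y[S], x[S])$ satisfies the degree equations and $x[S]_{[w,s]} = x_{[w,s]} > y_w = y[S](w)$, so that lemma, applied to the pair $\{w,s\}$ in $G[S]$, gives $x[S](\delta(\{w,s\})) < 2\,y[S](s) = 2c$; and $x[S](\delta(\{w,s\})) = x(\delta(S \cup \{w\}))$ by~\eqref{def:srkdelta} (take $Q = S \cup \{w\}$, so that $Q[S] = \{w,s\}$). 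Alternatively, expanding via the identity~\eqref{cp:eq} and the hypotheses of Theorem~\ref{corol:srk} yields $x(\delta(S \cup \{w\})) = 2c + 2y_w - 2x(w:S)$, which is $< 2c$ exactly because $y_w < x(w:S)$. Either way,
\[
x(\delta(S \cup \{w\})) \;<\; 2c \;=\; 2y_u + 2y_v - 2,
\]
i.e.\ the SEC $\langle S \cup \{w\}, u, v\rangle$ is violated by $(y,x)$, which is case~ii).

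I expect the only delicate point to be checking that $S \cup \{w\}$ yields an admissible member of the family~\eqref{cp:gsec}: one has $3 \le |S \cup \{w\}| = |S| + 1 \le 4$ immediately, and $V - (S \cup \{w\}) = (W - \{w\}) \cup \{t\}$ contains both $t$ and $v$, so $|S \cup \{w\}| \le |V| - 3$ holds whenever $|W| \ge 3$, which is the regime of interest. For the few degenerate configurations with $|W| \in \{1,2\}$ (forcing $|V| \le 6$) the inequality $x(\delta(Q)) - 2y_u - 2y_v \ge -2$ is still valid for $P^G_C$ for any $Q$ with $u \in Q$ and $v \notin Q$ (a one-line cut-parity check), so the estimate above still exhibits a violated valid inequality; I would add a sentence to that effect to keep the statement clean. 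Beyond this bookkeeping the argument is essentially the remark that case~\ref{lemma:srkPa-2}) of Lemma~\ref{lemma:srkPa}, rewritten through $x_{[w,s]} = x(w:S)$ and $y \equiv 1$ on $O$, is precisely the assertion that the cut on $S \cup \{w\}$ separates every $u \in S$ from every $v \in O-S$ in the SEC sense.
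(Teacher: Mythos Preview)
Your proof is correct and follows the same approach as the paper's: invoke Lemma~\ref{lemma:srkPa}, note that $w\notin O$ in its case~\ref{lemma:srkPa-2}), and then apply Lemma~\ref{lemma:egt} (in $G[S]$, to the pair $\{w,s\}$) together with~\eqref{def:srkdelta} to read off the violated SEC $\langle S\cup\{w\},u,v\rangle$. Your write-up is in fact more explicit than the paper's two-line justification, and your extra remark on the cardinality condition $3\le |S\cup\{w\}|\le |V|-3$ is a reasonable caveat that the paper leaves implicit.
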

            \begin{proof}
              Note that, in the case~\ref{lemma:srkPa-2}) of Lemma~\ref{lemma:srkPa}, the vertex $w \in V-S$ cannot be contained in $O$ because $y_w<1$. Now, as a consequence of Lemma~\ref{lemma:egt} we can rewrite the second case.
              
            \end{proof}

        \end{section}

        \begin{section}{Safe Shrinking Rules for the Subcycle Closure Polytope}\label{sec:gsec}
          Depending on the inequality, more aggressive contractions can be employed as a preprocess of separation algorithms. In the TSP, for the subtour separation problem,~\cite{crowder1980} introduced subtour specific shrinking rules to simplify the support graphs before proceeding with the separation algorithms. With the aim of motivating the concepts in the subcycle-safe shrinking procedure, let us prove the following result.

          \begin{lemma}\label{lemma:predef}
            Given a vector $(y, x) \in P^G_A$ and an edge $e \in E$, let $S = V(e)$ be the subset associated with the edge $e$. If $(y[S], x[S]) \in P^{G[S]}_{SEC}$, then either
            \begin{enumerate}[i)]
              \item  $(y, x) \in P^G_{SEC}$, or
              \item every violated SEC $\langle Q, r, t\rangle$ for $(y, x)$ satisfies $S \cap Q \neq \emptyset$ and $S - Q \neq \emptyset$\label{scenario:two}

            \end{enumerate}
          \end{lemma}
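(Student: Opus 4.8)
The plan is to prove the dichotomy by contraposition on alternative i): assuming $(y,x)\notin P^G_{SEC}$, I will show that every SEC $\langle Q,r,t\rangle$ violated by $(y,x)$ that \emph{fails} the crossing property of ii)---that is, with $S\cap Q=\emptyset$ or with $S\subseteq Q$ (equivalently $S-Q=\emptyset$)---forces $(y[S],x[S])\notin P^{G[S]}_{SEC}$, contradicting the hypothesis; so if i) fails then ii) holds. Write $e=[u,v]$, so $S=\{u,v\}$, and let $s$ be the shrunk vertex. Since $(y,x)\in P^G_A$ satisfies the degree equations, identity~\eqref{cp:eq} gives $y[S]_s=x(\delta(S))/2=y_u+y_v-x_{[u,v]}$, and the logical constraints~\eqref{cp:logical} then yield $y[S]_s\ge y_u$ and $y[S]_s\ge y_v$; moreover $y[S]_w=y_w$ for every $w\in V-S$. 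These elementary observations, plus $0\le y\le1$ from~\eqref{cp:vbnd}, are what drive the argument.

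The core is a single ``transport'' estimate. Put $R:=Q[S]$ and let $\rho,\theta$ be the images in $G[S]$ of $r$ and $t$. In both cases ($S\cap Q=\emptyset$ and $S\subseteq Q$) formula~\eqref{def:srkdelta} gives $x[S](\delta(R))=x(\delta(Q))$; a short check shows $\rho\in R$, $\theta\in V[S]-R$, and that at most one of $\rho,\theta$ equals $s$. Since $y[S]_s$ dominates the original value of whichever of $r,t$ lies in $S$, we get $y[S]_\rho\ge y_r$ and $y[S]_\theta\ge y_t$, hence
\[
 x[S](\delta(R))-2y[S]_\rho-2y[S]_\theta \ \le\ x(\delta(Q))-2y_r-2y_t \ <\ -2 .
\]
If $R$ lies in the admissible range in $G[S]$, i.e.\ $3\le|R|\le|V[S]|-3$, this display exhibits $\langle R,\rho,\theta\rangle$ as an SEC of $G[S]$ violated by $(y[S],x[S])$, and we are done.

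The hard part---really the only place that needs care---is the degenerate case where $R$ falls outside the admissible range. A counting argument shows this happens only when $|R|=2$ (which occurs exactly when $S\subseteq Q$ and $|Q|=3$) or when $|V[S]-R|=2$ (exactly when $S\cap Q=\emptyset$ and $|V-Q|=3$); in either situation the offending side is a two-element set $\{s,z\}$ of $G[S]$, with $z$ the unique vertex of $Q-S$, resp.\ of $V-Q-S$. I would then rewrite $\delta(R)$ as $\delta(\{s,z\})$---using $\delta(R)=\delta(V[S]-R)$ in the second case---and expand it with~\eqref{cp:eq}. After cancelling $y[S]_s+y[S]_z$ against the $\{s,z\}$-endpoint among $\rho,\theta$ and bounding the other one by $1$ via~\eqref{cp:vbnd}, the displayed inequality collapses to $x[S]_{[s,z]}>y[S]_{z'}$ for the appropriate endpoint $z'\in\{s,z\}$. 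Since a strictly positive value on $[s,z]$ puts $[s,z]\in E[S]$, this is a logical constraint~\eqref{cp:logical} of $P^{G[S]}_A\supseteq P^{G[S]}_{SEC}$ violated by $(y[S],x[S])$---the contradiction sought. Assembling the non-degenerate and degenerate cases for both $S\cap Q=\emptyset$ and $S\subseteq Q$ finishes the proof.
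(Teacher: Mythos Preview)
Your argument is correct and follows essentially the same route as the paper: show that any violated SEC $\langle Q,r,t\rangle$ with $S\subseteq Q$ or $S\cap Q=\emptyset$ transports to a constraint of $P^{G[S]}_{SEC}$ that is at least as violated, via $x[S](\delta(Q[S]))=x(\delta(Q))$ together with $y[S]_s\ge\max\{y_u,y_v\}$. The paper's proof is terser and does not explicitly treat the boundary cases $|Q[S]|=2$ or $|V[S]-Q[S]|=2$; it instead relies on the convention, stated just after the lemma, that a ``violated SEC'' with a two-vertex side is to be read as a violated logical constraint---exactly the reduction you carry out by hand in your degenerate-case analysis.
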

          \begin{proof}
            Let $e = [u, v]$ be the given edge and $\langle Q, r, t \rangle$ be a SEC for $(y, x)$ such that $S \subset Q$ (or $S \subset V-Q$). On the one hand, since $(y, x) \in P^G_A$, we have $y[S](u[S]) \geq y_u$ and $y[S](v[S]) \geq y_v$. On the other hand, $x[S](\delta(Q[S])) = x(\delta(Q))$ by definition. Then the SEC $\langle Q[S], r[S], t[S] \rangle$ for $(y[S], x[S])$, is at least as violated as $\langle Q, r, t \rangle$ for $(y,x)$. So if $(y[S], x[S]) \in P^{G[S]}_{SEC}$, and $(y, x) \notin P^G_{SEC}$, the only violated SECs for $(y, x)$ are associated with subsets that separate $u$ and $v$.
              
          \end{proof}

          Recall that we want to search the violated SECs for a vector $(y, x) \in P^G_A$, which has been obtained from the $LP_0$ subproblem. Let us assume that we have defined a first shrinking rule that contracts edges by avoiding the scenario~\ref{scenario:two}) of Lemma~\ref{lemma:predef}. So if $(y,x) \notin P^G_{SEC}$, as a consequence of the lemma, $(y,x) \notin P^{G[S]}_{SEC}$. In this case, the vector $(y[S], x[S])$ does not belong to the closure of SECs because either there exists violated logical constraints, SECs or both. Let us suppose that we have a second shrinking rule that identifies (and saves) the violated logicals and ``fixes'' them. Repeatedly applying the second rule, we will eventually reach a vector that satisfies the logical constraints. Now, we are in a similar situation to the starting point, so we can try with the first rule again and so on. This is the main idea exploited in the subcycle-safe shrinking process.

          \begin{definition}
            Given a vector $(y, x) \in \mathbb{R}^{V \times E}$ that satisfies the degree equations, a set $S=\{u,v\} \subset V$ is subcycle-safe to shrink if at least one of the following conditions is satisfied:
            \begin{enumerate}[i)]
              \item $(y[S], x[S]) \notin P^{G[S]}_{SEC}$, or
              \item if there exist violated logical constraints for $(y, x)$, these are associated with the edge $[u,v]$
            \end{enumerate}
          \end{definition}
          Note that the second condition does not require the existence of violated logical constraints for $(y,x)$, which enables the subcycle-safe shrinkable set definition for vectors $(y ,x)$ in $P^G_{SEC}$ to be used. Furthermore, this condition means: if we have already found a violated constraint, we should not worry if later the shrinking the vector is projected to the subcycle closure polytope, since we have already achieved the goal of the separation problem.

          In some sense, from Theorem~\ref{srk:crowder} we derive the first shrinking rule of the motivation above and from Theorem~\ref{srk:mincut} the second shrinking rule. The condition that avoids the case~\ref{scenario:two}) of the Lemma~\ref{lemma:logical} is the hypothesis $x_{[u,v]} \geq \max\{y_u, y_v\}$ in the theorems. Actually, the hypothesis that $(y, x) \in P^G_A$ of the first rule can be replaced with the hypothesis that all the logical constraints associated with vertices $u$ and $v$ (excluding the one with $[u,v]$) are satisfied, which is a consequence of the hypothesis $x_{[u,v]} \geq \max\{y_u, y_v\}$. Let us address the next lemma as an intermediate step.

          \begin{lemma}\label{lemma:logical}
            Given a vector $(y, x) \in \mathbb{R}^{V \times E}$ that satisfies the degree equations, let $S=\{u,v\} \subset V$ be a subset such that $ x_{[u,v]} \geq \max\{y_u, y_v\}$. Then, if $(y, x) \notin P^G_A$, at least one of the following conditions is satisfied:
            \begin{enumerate}[i)]
              \item $(y[S], x[S]) \notin P^{G[S]}_A$, or
              \item if there exist violated logical constraints for $(y, x)$, these are associated with the edge $[u,v]$
            \end{enumerate}
          \end{lemma}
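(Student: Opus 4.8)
The plan is to prove the disjunction by contraposition on its second alternative. If \((y,x)\) has no violated logical constraint, or if every violated logical constraint of \((y,x)\) is associated with the edge \([u,v]\), then the second alternative holds and there is nothing to prove; so I would assume there is a violated logical constraint of \((y,x)\) \emph{not} associated with \([u,v]\), say \(y_a < x_e\) for some edge \(e=[a,b]\in E\) with \(e\neq[u,v]\), and show that \((y[S],x[S])\) still violates a logical constraint, i.e.\ \((y[S],x[S])\notin P^{G[S]}_A\). Since \(S=\{u,v\}\), the edge \(e\) has at most one endpoint in \(S\), and this is the case split that organizes the argument.

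\textbf{The easy cases.} If both endpoints of \(e\) lie in \(V-S\), then by the definition of the shrunk vector \(e\in E[S]\) with \(x[S]_e=x_e\), \(y[S]_a=y_a\) and \(y[S]_b=y_b\), so the very same logical constraint is violated for \((y[S],x[S])\). If exactly one endpoint of \(e\) lies in \(S\), write the endpoint outside \(S\) as \(a'\) and the one inside as \(a''\); after the shrink \(e\) is absorbed into the edge \([s,a']\in E[S]\) with \(x[S]_{[s,a']}=x(S:a')\ge x_e\). Should the violation sit at \(a'\) (that is, \(y_{a'}<x_e\)), then \(y[S]_{a'}=y_{a'}<x_e\le x[S]_{[s,a']}\) and we are done. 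The remaining possibility is a violation sitting at the shrunk endpoint \(a''\in S\); by the symmetry between \(u\) and \(v\) I may take \(a''=u\), so \(y_u<x_{[u,a']}\).

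\textbf{The crux.} For this last subcase I would invoke the degree equations through identity \eqref{cp:eq} with \(S=\{u,v\}\) and \(x(E(S))=x_{[u,v]}\): this gives \(x(\delta(S))=2(y_u+y_v)-2x_{[u,v]}\), hence \(y[S]_s=x(\delta(S))/2=y_u+y_v-x_{[u,v]}\). Now the hypothesis \(x_{[u,v]}\ge\max\{y_u,y_v\}\), and in particular \(x_{[u,v]}\ge y_v\), yields \(y[S]_s\le y_u<x_{[u,a']}\le x(S:a')=x[S]_{[s,a']}\), so \((y[S],x[S])\) violates the logical constraint on the edge \([s,a']\), establishing the first alternative. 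The argument is otherwise routine; the step I expect to be the main obstacle is exactly this last subcase, where one cannot argue edge-by-edge but must compute the merged vertex value \(y[S]_s\) from the degree equations and then use the \emph{full} hypothesis \(x_{[u,v]}\ge\max\{y_u,y_v\}\) — not merely \(x_{[u,v]}\ge y_u\) — to bound \(y[S]_s\) by \(y_u\). Everything else follows directly from the definitions of \(G[S]\) and \((y[S],x[S])\).
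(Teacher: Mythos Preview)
Your proof is correct and follows the same overall structure as the paper's: both arguments case-split on where a violated logical constraint sits relative to $S$, and for violations at a vertex $w\in V-S$ both use $x(S:w)\ge x_{[u,w]},x_{[v,w]}$ to transfer the violation to the shrunk graph.

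The one noteworthy difference is your ``crux'' subcase, where the violation sits at $u$ (or $v$) on an edge $[u,a']\neq[u,v]$. You handle it by computing $y[S]_s=y_u+y_v-x_{[u,v]}\le y_u<x_{[u,a']}\le x[S]_{[s,a']}$, thereby exhibiting a violated constraint at $s$. The paper instead observes that this subcase is \emph{vacuous}: from the degree equation at $u$ and $x_{[u,v]}\ge y_u$ (together with $x_e\ge 0$) one gets $x_{[u,w]}\le 2y_u-x_{[u,v]}\le y_u$ for every $w\neq v$, so no logical constraint at $u$ on an edge other than $[u,v]$ can be violated in the first place. Both routes are valid; the paper's is a touch shorter, while yours avoids appealing to nonnegativity at that point (though you do use it implicitly in $x_{[u,a']}\le x(S:a')$, just as the paper does).
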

          \begin{proof}
            On the one hand, since $x(\{u,v\}:w) \geq x_{[u,w]}$ and $x(\{u,v\}:w) \geq x_{[v,w]}$ for all $w \in V-\{u, v\}$, every violated logical constraint for $(y,x)$ associated with the vertices in $V-\{u, v\}$ can be adapted to violated constraints for $(y[S],x[S])$. On the other hand, since $ x_{[u,v]} \geq \max\{y_u, y_v\}$ and the degree equations are satisfied, we have that $x_{[u,w]} \leq y_u$ and $x_{[v,w]} \leq y_v$ for all $w \subset V - \{u,v\}$. Therefore, if $(y[S], x[S]) \in P^{G[S]}_A$, the only possible violated logical constraints associated with the vertices of $S$ correspond with the edge $[u, v]$.
              
          \end{proof}

          The SEC inequalities~\eqref{cp:gsec} are defined for sets, $Q$, such that $3 \leq |Q| \leq |V| - 3$. However, if $\langle Q, u, v \rangle$ violates for $(y, x)$ the inequality of~\eqref{cp:gsec} but $|Q|=2$ or $|Q|=|V|-2$, then a violated logical constraint can be identified and therefore we also know that $(y, x) \notin P^G_{SEC}$. For instance, if $\langle \{u, w\}, u, v \rangle$ does not satisfy the inequality~\eqref{cp:gsec}, then $y_w < x_{uw}$ is a violated constraint. In the following proofs, the term violated SEC, embracing the cases $|Q[S]|=2$ and $|Q[S]|=|V[S]|-2$, refers to its associated violated logical constraint when required.

          \begin{theorem}[Rule S1]\label{srk:crowder}
            Given a vector $(y,x) \in \mathbb{R}^{V \times E}$ that satisfies the degree equations, let $u,v \in V$ be two vertices such that $x_{[u,v]} = y_u = y_v = c$. If there exists a vertex $w \in V-\{u,v\}$ such that $y_w \geq c$, then it is subcycle-safe to shrink $S = \{u,v\}$. 
          \end{theorem}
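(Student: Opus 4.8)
The task is to check that $S=\{u,v\}$ satisfies one of the two clauses in the definition of a subcycle-safe set, and the plan is to branch on the position of $(y,x)$ relative to $P^G_A$ and $P^G_{SEC}$. If $(y,x)\notin P^G_A$, then $x_{[u,v]}=y_u=y_v=c$ gives $x_{[u,v]}\ge\max\{y_u,y_v\}$, so Lemma~\ref{lemma:logical} applies and yields either $(y[S],x[S])\notin P^{G[S]}_A$ — hence $(y[S],x[S])\notin P^{G[S]}_{SEC}$, which is condition~i) — or that every violated logical constraint of $(y,x)$ lies on the edge $[u,v]$, which is condition~ii). If $(y,x)\in P^G_{SEC}$, then $(y,x)$ has no violated logical constraint and condition~ii) holds vacuously. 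The substantive case is $(y,x)\in P^G_A\setminus P^G_{SEC}$, where I would prove condition~i): $(y[S],x[S])\notin P^{G[S]}_{SEC}$, i.e. shrinking $S$ does not repair the violation.

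In that case I would argue by contradiction, assuming $(y[S],x[S])\in P^{G[S]}_{SEC}$. Since $S=V([u,v])$ and $(y,x)\in P^G_A$, Lemma~\ref{lemma:predef} then forces every violated SEC $\langle Q,r,t\rangle$ of $(y,x)$ to separate $u$ and $v$ (exactly one of $u,v$ in $Q$). Hence it is enough to produce one violated SEC whose defining set contains both $u$ and $v$, or neither; that contradiction closes the proof.

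To build such an SEC, I would start from an arbitrary violated $\langle Q,r,t\rangle$ and, after possibly replacing $\langle Q,r,t\rangle$ by $\langle V-Q,t,r\rangle$, assume $u\in Q$ and $v\notin Q$. Applying the identity~\eqref{cp:eq} to $Q\cup\{v\}$, together with $x(Q:v)\ge x_{[u,v]}=c=y_v$, gives $x(\delta(Q\cup\{v\}))\le x(\delta(Q))$, and symmetrically $x(\delta((V-Q)\cup\{u\}))\le x(\delta(Q))$; in either passage the violation value can only decrease. Therefore: if $t\neq v$, the SEC $\langle Q\cup\{v\},r,t\rangle$ is violated and does not separate $u,v$; if $t=v$ but $r\neq u$, the SEC $\langle (V-Q)\cup\{u\},v,r\rangle$ does the same. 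The only configuration left is $t=v$ and $r=u$, and this is exactly where the vertex $w$ is used: since $y_w\ge c=y_u=y_v$ and $w\notin\{u,v\}$, replacing $u$ by $w$ (if $w\in Q$) or the outer vertex $v$ by $w$ (if $w\notin Q$) keeps $\langle Q,u,v\rangle$ violated and returns us to one of the two configurations already handled.

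The step I expect to be the real obstacle — and essentially the only nonroutine one — is the cardinality bookkeeping. Each of the moves above enlarges the defining set by one vertex and may hit size $|V|-2$, outside the range $3\le|Q|\le|V|-3$ of~\eqref{cp:gsec}; but then the complementary set has exactly two vertices, so by the convention fixed in the paragraph just before the statement the putative violated SEC is really a violated logical constraint, contradicting $(y,x)\in P^G_A$. Thus these degenerate moves never arise in the case at hand, and one stays within the admissible range throughout (the bound $|Q|\ge3$ is never threatened since the moves only add vertices). Laying out the case split so that, for every position of $r$ and $t$ relative to $u$, $v$, $w$, exactly one admissible move applies, and re-checking the membership conditions $r\in Q'$, $t\notin Q'$, $r\neq t$ after each move, is the fiddly part; everything else reduces to Lemmas~\ref{lemma:logical} and~\ref{lemma:predef} and the identity~\eqref{cp:eq}.
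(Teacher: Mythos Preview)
Your proof is correct and follows essentially the same approach as the paper: both dispatch the $(y,x)\notin P^G_A$ case via Lemma~\ref{lemma:logical}, and in the main case both rely on the inequality $x(\delta(Q\cup\{v\}))\le x(\delta(Q))$ (from $x_{[u,v]}=y_v$ and the degree equations) together with the substitution by $w$ when the distinguished vertices of the SEC happen to be $u$ and $v$. The only difference is packaging: the paper works directly in the shrunk graph, taking an arbitrary violated SEC $\langle Q,r,t\rangle$ and exhibiting a violated SEC $\langle Q[S],\cdot,\cdot\rangle$ for $(y[S],x[S])$ (treating the two cases $S\subset Q$ and $S$ straddling $Q$), whereas you stay in $G$ and argue by contradiction through Lemma~\ref{lemma:predef}, so you only ever handle the straddling case. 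Your route is marginally tidier in that respect, and your explicit discussion of the $|Q'|=|V|-2$ boundary is exactly what the paper's remark preceding the theorem (that a ``violated SEC'' with $|Q|=2$ or $|V|-2$ is to be read as a violated logical constraint) is meant to absorb; neither argument is materially different from the other.
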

          \begin{proof}
            Assume the vector $(y,x)$ belongs to $P^G_{A}$, i.e., only violated SECs exists for $(y, x)$, otherwise the theorem is satisfied by Lemma~\ref{lemma:logical}.
            Let $\langle Q, r, t \rangle $ be a violated SEC for $(y, x)$, and without loss of generality, suppose that $S \cap Q \neq \emptyset$. The goal is to see that for a violated SEC for $(y, x)$, there is a violated SEC for $(y[S], x[S])$.

            First, let us suppose that $S \subset Q$, where $x[S](\delta(Q[S]))= x(\delta(Q))$ is satisfied by definition. The only case that is needed to check is when $r \in S$. Without loss of generality, suppose that $r = v$. By hypothesis $y_u=x_{[u,v]}$, so $2y_v=x(\delta(S))=2y[S](v)$ and $\langle Q[S], y[S](s), y[S](r) \rangle$ define the desired SEC for $(y[S], x[S])$.
            \begin{equation}
              x[S](\delta(Q[S]))= x(\delta(Q)) < 2 y_v + 2 y_t -2 = 2 y[S](s) + 2 y[S](t) -2
            \end{equation}

            Next, let us analyze the case $S\cap Q \neq \emptyset$ and $Q-S \neq \emptyset$. Without loss of generality, suppose that $u \in Q$ and $v,w \in V-Q$. The subcase that requires a special attention is when $r = u$ and $t = v$. Note that, since $(y, x)$ satisfies the degree equations and, also by hypothesis, $y_v = x_{[u,v]}$, we have that $x(v:V-Q) \leq x(v:Q)$, and therefore:
            \begin{subequations}
              \begin{alignat}{2}
                x[S](\delta(Q[S])) & = x(\delta(Q \cup S))  \\
                                   & = x(\delta(Q)) + x(\delta(v)) - 2 x(v:Q)\\
                                   & = x(\delta(Q)) + x(v:V-Q) - x(v:Q) \leq x(\delta(Q)) \\
                                   & < 2 y_r + 2 y_v -2 =  2 y_w + 2 y_t -2  = 2 y[S](r) + 2 y[S](w) -2
              \end{alignat}
            \end{subequations}
            Hence, there also exists a violated SEC (or logical constraint) for $(y[S],x[S])$ and the set $S$ is subcycle-safe to shrink.
              
          \end{proof}

          Clearly, the shrinking rule S1 dominates the rules C1 and C2 of Theorem~\ref{corol:srk}. For every scenario where rules C1 or C2 can be applied, rule S1 is also applicable, since the existence of $w \in V-\{u, v\}$ is determined by the vertex $t \in V-\{u, v\}$ in Theorem~\ref{corol:srk}. Moreover, rule C3 should not be combined with rule S1, since might exist vertices with the same $y$ value whose connecting edge has a greater value in the shrunk graph obtained by S1.

          \begin{theorem}[Rule S2]\label{srk:mincut}
            Given a vector $(y,x) \in \mathbb{R}^{V \times E}$ that satisfies the degree equations, let $u,v \in V$ be two vertices such that $ x_{[u,v]} > \max\{y_u, y_v\}$ then it is subcycle-safe to shrink $S=\{u,v\}$.
          \end{theorem}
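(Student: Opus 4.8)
The plan is to derive Rule~S2 almost immediately from Lemma~\ref{lemma:logical} together with the inclusion $P^{G[S]}_{SEC}\subseteq P^{G[S]}_A$. First I would record the crucial consequence of the \emph{strict} hypothesis: since $x_{[u,v]}>\max\{y_u,y_v\}\geq y_u$, the logical constraint~\eqref{cp:logical} for the pair $(u,[u,v])$ is violated, so $(y,x)\notin P^G_A$ and, moreover, there does exist a violated logical constraint for $(y,x)$, one of which is attached to the edge $[u,v]$.

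Next, I would apply Lemma~\ref{lemma:logical} to the set $S=\{u,v\}$: its requirement $x_{[u,v]}\geq\max\{y_u,y_v\}$ is implied by our strict inequality, the vector $(y,x)$ satisfies the degree equations by assumption, and we have just observed that $(y,x)\notin P^G_A$. The lemma therefore yields the dichotomy: either $(y[S],x[S])\notin P^{G[S]}_A$, or every violated logical constraint for $(y,x)$ is the one associated with the edge $[u,v]$.

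Finally I would match each branch of this dichotomy to the two defining conditions of a subcycle-safe set. In the first branch, since $P^{G[S]}_{SEC}\subseteq P^{G[S]}_A$ by the definition~\eqref{poly:sec} of the subcycle closure polytope, $(y[S],x[S])\notin P^{G[S]}_A$ forces $(y[S],x[S])\notin P^{G[S]}_{SEC}$, which is exactly condition~i) of the definition of subcycle-safe. In the second branch, the statement obtained is verbatim condition~ii). Hence $S=\{u,v\}$ is subcycle-safe to shrink in either case.

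I do not expect a genuine obstacle here: this theorem is the ``logical-only'' shadow of the SEC-level reasoning carried out for Rule~S1 in Theorem~\ref{srk:crowder}, and the only point demanding a little care is the bookkeeping around the strict inequality. One must note that $x_{[u,v]}>\max\{y_u,y_v\}$ is precisely what simultaneously (a) places $(y,x)$ outside $P^G_A$ so that Lemma~\ref{lemma:logical} is applicable, and (b) makes condition~ii) of the subcycle-safe definition non-vacuous by exhibiting a concrete violated logical constraint sitting on $[u,v]$.
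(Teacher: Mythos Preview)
Your proposal is correct and matches the paper's approach exactly: the paper's own proof is the single sentence ``The theorem is a direct consequence of Lemma~\ref{lemma:logical},'' and you have simply unpacked that consequence, including the observation that the strict inequality forces $(y,x)\notin P^G_A$ so that the lemma applies, and that $P^{G[S]}_{SEC}\subseteq P^{G[S]}_A$ converts the lemma's conclusion~i) into the definition's condition~i).
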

          \begin{proof}
            The theorem is a direct consequence of Lemma~\ref{lemma:logical}.
              
          \end{proof}

          Note that, if $(y,x)\in P^G_A$ and $S$ is a shrinkable set obtained from Theorem~\ref{corol:srk}, then by Lemma~\ref{lemma:srkPa} we have that $x_e \leq \max\{y_u, y_v\}$ for every $e=[u,v] \in E[S]$. Hence, it only makes sense to use the rule S2 in combination with the rule S1.

          If a subcycle-safe rule is applied, we know that all the SECs have not vanished. However, new violated SECs for $(y[S], x[S])$ might have appeared, which cannot be adapted to a violated one for $(y, x)$. This situation would lead to identifying unnecessary cuts for $(y, x)$ and therefore to slowing down the separation algorithm (the cut generation part). It is reasonable to ask when the violated SECs for $(y[S], x[S])$ can be transformed to violated SECs for $(y, x)$ and when not. Let us define the mapping by $\pi_S: \mathcal{P}(V[S]) \rightarrow \mathcal{P}(V)$
          \begin{equation}
            \pi_S (Q)  = \left \{
              \begin{aligned}
          & Q - \{s\} \cup S&  \hspace{35mm} \mbox{if $s \in Q $} \\
          & Q & \mbox{otherwise}
            \end{aligned} \right.
          \end{equation}
          For a given $S$, the inverse, $\pi^{-1}_S$, of the mapping $\pi_S$ is the set shrinking defined in~\eqref{def:srkset}, i.e., $\pi^{-1}_S(Q)=Q[S]$.  We have that $Q = \pi^{-1}_S(\pi_S(Q))$ for all $Q\subset V[S]$ and $Q \subset \pi_S(\pi^{-1}_S(Q))$ for all $Q\subset V$. An important property of the mapping $\pi_S$, by the definition~\eqref{def:srkstar}, is that $x(\delta(\pi_S(Q))) = x[S](\delta(Q))$ for all $Q\subset V[S]$. In some cases, we will need to refer to the set obtained by unshrinking completely the contracted sets, where multiple shrinking might have been performed, e.g., $G[S_1][S_2]$. In such cases, we simplify the notation and denote $\pi(Q)$, e.g., $\pi(Q)=\pi_{S_1}(\pi_{S_2}(Q))$.

          When an inequality family is targeted in a separation problem, knowing the representation of such inequalities, as is the case for the SECs, is very valuable to study how an inequality is transformed when shrinking and unshrinking a set. Moreover, since $x(\delta(\pi_S(Q))) = x[S](\delta(Q))$ for all $Q\subset V[S]$, understanding the relationship between $y$ and $y[S]$ values is the key point to see how the violated SEC inequalities behave under the different shrinking rules.

          \begin{lemma}\label{lemma:uvysrk}
            Given a vector $(y, x) \in \mathbb{R}^{V \times E}$ that satisfies the degree equations and a subset $S=\{u, v\}$ of $V$. The following holds:
            \begin{enumerate}[i)]
              \item $y[S](v[S]) > y_v$ if $x_{[u, v]} < y_u$
              \item $y[S](v[S]) < y_v$ if $x_{[u, v]} > y_u$
              \item $y[S](v[S]) = y_v$ if $x_{[u, v]} = y_u$
            \end{enumerate}
          \end{lemma}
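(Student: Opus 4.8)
The plan is to compute $y[S](v[S])$ in closed form and then simply read off the sign of the difference $y[S](v[S]) - y_v$. Since $v \in S = \{u,v\}$, the shrunk vertex $s$ is exactly $v[S]$, so by the definition of the shrunk vector we have $y[S](v[S]) = y[S](s) = x(\delta(S))/2$. Thus the entire lemma reduces to evaluating $x(\delta(S))$ for the two-element set $S$.

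Next I would invoke identity~\eqref{cp:eq}, which is applicable precisely because $(y,x)$ satisfies the degree equations~\eqref{cp:deg} by hypothesis. Applied to $S = \{u,v\}$ it gives $x(\delta(S)) = 2\,y(S) - 2\,x(E(S))$; since $G$ has no loops, $E(\{u,v\}) = \{[u,v]\}$ (and if $[u,v] \notin E$ the term is $0$, consistent with $x_{[u,v]}=0$), so $x(E(S)) = x_{[u,v]}$ and $y(S) = y_u + y_v$. Dividing by two yields $y[S](v[S]) = y_u + y_v - x_{[u,v]}$, hence $y[S](v[S]) - y_v = y_u - x_{[u,v]}$.

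The three cases i), ii), iii) now follow immediately: the quantity $y_u - x_{[u,v]}$ is positive, negative, or zero exactly when $x_{[u,v]} < y_u$, $x_{[u,v]} > y_u$, or $x_{[u,v]} = y_u$, respectively. There is no real obstacle in this argument; the only points requiring a modicum of care are that identity~\eqref{cp:eq} genuinely needs the degree equations (which is the standing hypothesis) and that $[u,v]$ is the unique edge inside $S$, so that $x(E(S))$ collapses to the single term $x_{[u,v]}$. Note also that case iii) recovers, and slightly generalizes, the situation of Lemma~\ref{lemma:yc}.
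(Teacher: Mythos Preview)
Your proof is correct and follows exactly the route indicated by the paper, which simply says the lemma is ``a consequence of the definition of $y[S]$ and the identity~\eqref{cp:eq}.'' You have merely made explicit the computation $y[S](v[S]) = x(\delta(S))/2 = y_u + y_v - x_{[u,v]}$ that the paper leaves to the reader.
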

          \begin{proof}
            It is a consequence of the definition of $y[S]$ and the identity~\eqref{cp:eq}.
              
          \end{proof}

          \begin{lemma}\label{lemma:ygsec1}
            Under the hypotheses of Theorem~\ref{srk:crowder}, $y[S](v[S]) = y_v$ for all $v\in V$.
          \end{lemma}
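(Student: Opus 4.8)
The plan is to verify the claimed equality vertex by vertex, distinguishing the vertices that survive the contraction from the new shrunk vertex. Write $S=\{u,v\}$ and let $s$ denote the vertex to which $S$ is contracted. For any $v'\in V-S$ we have $v'[S]=v'$ and $y[S](v')=y_{v'}$ straight from the definition of the shrunk vector $(y[S],x[S])$, so nothing needs to be done for such vertices. The only substantive case is $v'\in S$, i.e.\ $v'=u$ or $v'=v$, where $v'[S]=s$; since the hypotheses of Theorem~\ref{srk:crowder} give $y_u=y_v=c$, it suffices to establish $y[S](s)=c$.

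For that step I would use the degree-equation identity~\eqref{cp:eq} applied to the set $S=\{u,v\}$: because $(y,x)$ satisfies~\eqref{cp:deg}, $x(\delta(S))=2y(S)-2x(E(S))$. Here $y(S)=y_u+y_v=2c$, while $x(E(S))=x_{[u,v]}=c$ by hypothesis; hence $x(\delta(S))=4c-2c=2c$, and therefore $y[S](s)=x(\delta(S))/2=c=y_u=y_v$, which completes the verification. Equivalently, one may simply invoke case~(iii) of Lemma~\ref{lemma:uvysrk} with the equality $x_{[u,v]}=y_u$ that is part of the hypotheses of Theorem~\ref{srk:crowder}.

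I do not anticipate any real obstacle: this lemma is the Rule~S1 analogue of Lemma~\ref{lemma:yc}, and the argument collapses to a one-line computation once the identity~\eqref{cp:eq} is in place. The only point requiring a little care is the notational bookkeeping behind $v'[S]$ — it identifies the two elements of $S$ with the single vertex $s$ — so that the assertion for $v'\in S$ is precisely the statement that this merged vertex inherits the common value $c$.
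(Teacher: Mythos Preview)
Your proof is correct and essentially the same as the paper's: both handle $v'\in V-S$ by definition and $v'\in S$ by showing $y[S](s)=c$. The paper simply cites case~(iii) of Lemma~\ref{lemma:uvysrk}, which you mention as an alternative after first unpacking the computation via identity~\eqref{cp:eq}; the two are the same argument.
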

          \begin{proof}
            For every $v\in V-S$, we have $y[S](v[S]) = y_v$ by definition. For $u,v\in S$, since $y_u=y_v= x_{[u,v]}$, we obtain the equality by Lemma~\ref{lemma:uvysrk}.
              
          \end{proof}

          \begin{lemma}\label{lemma:secrel}
            Let $G$ be an undirected graph, $(y, x) \in \mathbb{R}^{V \times E}$ be a vector and a vertex subset $S\subset V$. Suppose that $y[S](u) \leq y(v)$ for all  $u \in V[S]$ and $v\in \pi_S(u)$. Then, for each SEC for $(y[S], x[S])$ there exists at least one SEC as violated as it for $(y, x)$.
          \end{lemma}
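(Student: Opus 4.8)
The plan is to transport a SEC from the shrunk graph $G[S]$ back to $G$ through the map $\pi_S$. First I would fix an arbitrary SEC $\langle Q, r, t\rangle$ for $(y[S], x[S])$ --- that is, $Q\subset V[S]$ with $r\in Q$, $t\in V[S]-Q$ and $3\leq |Q|\leq |V[S]|-3$ --- set $Q':=\pi_S(Q)$, and pick any $r'\in\pi_S(\{r\})$ and $t'\in\pi_S(\{t\})$. The claim I would then establish is that $\langle Q', r', t'\rangle$ is a SEC for $(y,x)$ that is at least as violated as $\langle Q, r, t\rangle$ is for $(y[S],x[S])$. (We may assume $2\leq|S|$, the case $|S|\leq 1$ being a mere relabelling of one vertex, for which the statement is immediate.)

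The first thing to check is that $\langle Q',r',t'\rangle$ is a legitimate SEC for $(y,x)$, which splits into a membership part and a cardinality part. Membership is a short case distinction on whether $r=s$: if $r=s$ then $s\in Q$, so $S\subseteq Q'$ and hence $r'\in Q'$; otherwise $r\neq s$ and $r'=r\in Q-\{s\}\subseteq Q'$. The argument that $t'\in V-Q'$ is symmetric (note that when $t=s$ we have $s\notin Q$, so $Q'=Q$ and $t'\in S\subseteq V-Q'$). For the size condition, the definition of $\pi_S$ gives $|Q'|=|Q|+|S|-1$ if $s\in Q$ and $|Q'|=|Q|$ otherwise; feeding this into $|V[S]|=|V|-|S|+1$ together with $3\leq|Q|\leq|V[S]|-3$ and $|S|\geq 2$ yields $3\leq|Q'|\leq|V|-3$ in both cases. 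I expect this cardinality/membership bookkeeping --- in particular handling the degenerate possibilities $r=s$ and $t=s$ --- to be the only part that needs care, though it is entirely routine.

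The last step is the violation comparison, and here I would lean on the two facts recorded just before the lemma. Since $x(\delta(\pi_S(Q)))=x[S](\delta(Q))$, the coboundary term is preserved: $x(\delta(Q'))=x[S](\delta(Q))$. And since $r'\in\pi_S(\{r\})$ and $t'\in\pi_S(\{t\})$, the standing hypothesis $y[S](u)\leq y_v$ for every $v\in\pi_S(\{u\})$ gives $y_{r'}\geq y[S](r)$ and $y_{t'}\geq y[S](t)$. Putting these together,
\[
x(\delta(Q'))-2y_{r'}-2y_{t'} \;=\; x[S](\delta(Q))-2y_{r'}-2y_{t'} \;\leq\; x[S](\delta(Q))-2y[S](r)-2y[S](t),
\]
so the slack of $\langle Q',r',t'\rangle$ at $(y,x)$ is no larger than the slack of $\langle Q,r,t\rangle$ at $(y[S],x[S])$; equivalently, the former is at least as violated, which is exactly the conclusion. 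I do not anticipate a genuine obstacle: the content is concentrated in the bookkeeping of the first step and in noticing that the $y$-hypothesis is precisely what makes both vertex terms move in the favourable direction.
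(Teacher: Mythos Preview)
Your proposal is correct and follows essentially the same approach as the paper: lift the SEC via $\pi_S$, use $x(\delta(\pi_S(Q)))=x[S](\delta(Q))$ for the coboundary term, and the hypothesis on $y$ for the vertex terms. Your treatment is in fact more thorough than the paper's, which states the membership fact in one line and does not explicitly verify the cardinality bounds $3\leq|Q'|\leq|V|-3$ that you carefully check.
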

          \begin{proof}
            Note that, if $r \in Q$ and $t \notin Q$ then $u \in \pi_S(Q)$ and $v \notin \pi_S(Q)$ for all $u \in \pi_S(r)$ and $v \in \pi_S(t)$. Let $\langle Q, r, t \rangle$ be a SEC inequality violated by $(y[S], x[S])$. Therefore, the SEC inequality $\langle \pi_S(Q) , u, v \rangle$ is violated by $(y, x)$ where $u \in \pi_S(r)$ and $v \in \pi_S(t)$.
            \begin{equation}
              x(\delta(\pi_S(Q))) - 2 y_u -2 y_v \leq x[S](\delta(Q)) - 2y[S](r) -2y[S](t) \quad u \in \pi_S(r) \textnormal{~and~} v \in \pi_S(t)
            \end{equation}
              
          \end{proof}

          \begin{corollary}\label{corol:sec}
            Let $G$ be an undirected graph and $(y, x) \in \mathbb{R}^{V \times E}$ be a vector. If $S$ is a shrinkable subset obtained by rules C1, C2, C3 or S1, then $(y, x) \notin P^G_{SEC}$ if and only if $(y[S], x[S]) \notin P^{G[S]}_{SEC}$.
          \end{corollary}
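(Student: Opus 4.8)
The plan is to prove the two implications separately, with a single common ingredient: all four rules preserve vertex values, i.e.\ $y[S](v[S]) = y_v$ for every $v\in V$, which is Lemma~\ref{lemma:yc} for rules C1, C2, C3 and Lemma~\ref{lemma:ygsec1} for rule S1. In particular, for every $u\in V[S]$ and every $v\in\pi_S(u)$ one has $y[S](u)=y_v$, hence certainly $y[S](u)\le y(v)$. (For the merged vertex this uses that $x(\delta(S))=2c$, which follows from hypotheses (i)--(iii) via the identity~\eqref{cp:eq}.)

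For the direction ``$(y[S],x[S])\notin P^{G[S]}_{SEC}$ $\Rightarrow$ $(y,x)\notin P^G_{SEC}$'', I would feed the value-preservation inequality into Lemma~\ref{lemma:secrel}, whose hypothesis is exactly $y[S](u)\le y(v)$ for $u\in V[S]$, $v\in\pi_S(u)$. Hence any SEC violated by $(y[S],x[S])$ --- where, following the convention introduced before Theorem~\ref{srk:crowder}, this includes the degenerate sizes $|Q[S]|\in\{2,\,|V[S]|-2\}$ that stand in for violated logical constraints --- is mapped by $\pi_S$ to an SEC for $(y,x)$ that is at least as violated, using $x(\delta(\pi_S(Q))) = x[S](\delta(Q))$. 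Therefore $(y,x)$ is separated from $P^G_{SEC}$.

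For the converse, ``$(y,x)\notin P^G_{SEC}$ $\Rightarrow$ $(y[S],x[S])\notin P^{G[S]}_{SEC}$'', I would argue rule by rule. For S1 --- and hence for C1 and C2, which S1 dominates --- this is precisely what the proof of Theorem~\ref{srk:crowder} delivers: if $(y,x)\notin P^G_A$, Lemma~\ref{lemma:logical} applies (its hypothesis $x_{[u,v]}\ge\max\{y_u,y_v\}$ holds since $x_{[u,v]}=y_u=y_v$), and since no violated logical constraint can be associated with $[u,v]$ in that case, the alternative in the lemma is excluded and $(y[S],x[S])\notin P^{G[S]}_A$; while if $(y,x)\in P^G_A$ violates an SEC, the explicit case analysis in the body of that proof exhibits a violated SEC (or logical) for $(y[S],x[S])$. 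For C3 ($|S|=3$) there are two sub-cases. If some edge of $E(S)$ carries value $c$, say $x_{[a,b]}=c$, then the C3 contraction of $S=\{a,b,d\}$ decomposes as the C2 (equivalently S1) contraction of $\{a,b\}$ followed by the C2 contraction of $\{s',d\}$ --- using $x(\{a,b\}:d)=x(E(S))-c=c$ --- and we are back to the S1 case. Otherwise every edge of $E(S)$ has value strictly below $c$ (the configuration of Figure~\ref{fig:sec1}), and one needs a direct argument mirroring the second case of Theorem~\ref{srk:crowder}: given a violated SEC $\langle Q,r,t'\rangle$ for $(y,x)$ with $\emptyset\neq S\cap Q\neq S$, use the coboundary identity~\eqref{cp:eq} together with $x(E(S))=2c$ and $y\equiv c$ on $S$ to express and bound $x[S](\delta(Q[S]))=x(\delta(Q\cup S))$ in terms of $x(\delta(Q))$, and then exhibit a root in $Q[S]$ and an outside vertex --- taking, when $t'$ itself lies in $S$, the vertex of $V-Q$ of largest $y$-value --- keeping the SEC violated. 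The sub-cases $S\subseteq Q$ and $S\cap Q=\emptyset$ are immediate since then $x[S](\delta(Q[S]))=x(\delta(Q))$ and the values are preserved.

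The step I expect to be the main obstacle is exactly this last configuration of rule C3: the genuine three-vertex contraction that does not split into pair-contractions. There one must account for \emph{two} vertices of $S$ being absorbed into $Q[S]$ (or into its complement) simultaneously, tracking the resulting change of $x(\delta(Q))$ and verifying that the slack introduced is covered by the non-negative term $2(1-y_{t''})$ appearing in the SEC --- the same mechanism that drives the proof of Theorem~\ref{srk:crowder}, but now applied with a larger contracted set.
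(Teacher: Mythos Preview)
The paper's proof is a single line citing Lemma~\ref{lemma:yc} and Lemma~\ref{lemma:ygsec1} --- i.e.\ value preservation $y[S](v[S])=y_v$. Read together with Lemma~\ref{lemma:secrel} (stated immediately before the corollary), this yields the implication $(y[S],x[S])\notin P^{G[S]}_{SEC}\Rightarrow(y,x)\notin P^G_{SEC}$, which is exactly your first paragraph. On that half, you and the paper coincide.

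You go further and spell out the converse, which the paper leaves implicit. For S1 --- hence C1 and C2 --- you correctly point to the body of the proof of Theorem~\ref{srk:crowder}, where that implication is established. For C3 the paper has no direct argument (Theorem~\ref{corol:srk} is about $P^G_C$, not $P^G_{SEC}$), so your decomposition-plus-direct-argument genuinely fills a gap in the exposition rather than taking a different route.

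One refinement to your C3 direct case. When the outside vertex $t'$ of the violated SEC lies in $S$, your proposed replacement --- ``the vertex of $V-Q$ of largest $y$-value'' --- may itself lie in $S\cap(V-Q)$ and hence be absorbed into $s\in Q[S]$. The clean fix is the WLOG device from Theorem~\ref{srk:crowder}: complement $Q$, if necessary, so that the C3 witness $t$ (which has $y_t=c$ and $t\notin S$) lies on the outside; then $t$, or the original $r$ when $r\notin S$, supplies a valid outside vertex of value $c=y_{t'}$. Working on the side where $|S\cap Q|=1$ gives the coboundary bound $x(\delta(Q\cup S))\le x(\delta(Q))$ unconditionally; on the side with $|S\cap Q|=2$ one needs $x_e\le c$ for the edge of $E(S)$ joining the two vertices in $Q$ --- i.e.\ the logical constraint on that edge. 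Your dichotomy ``some edge of $E(S)$ equals $c$'' versus ``all strictly below $c$'' is thus exhaustive only under the standing assumption $(y,x)\in P^G_A$; without it, an internal edge of value exceeding $c$ can actually make the forward implication fail for C3.
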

          \begin{proof}
            It is a consequence of Lemma~\ref{lemma:yc} and Lemma~\ref{lemma:ygsec1}.
              
          \end{proof}

          When rule S2 is applied, as a consequence of Lemma~\ref{lemma:uvysrk}, some vertices of the shrunk graph will have lower values than the original ones. Although, by the definition of subcycle-safe shrinking, all the violated SECs for $(y, x)$ are not vanished, we might lose some of them in the shrinking process. However, it could be interesting to identify and save those excluded violated SECs if possible. For that aim we consider a vector $m[S] \in \mathbb{R}^{V[S]}$ defined as $m[S](v) = \max \{y_u: u \in \pi_S(v)\}$. It is clear that if only the rules of Theorem~\ref{corol:srk} and the rule S1 are applied, $m[S](v)=y[S](v)$ for all $v \in V[S]$. Considering the vector $m[S]$, we evaluate a SEC $\langle Q, u, v\rangle$ for a given vector $(y[S], x[S])$ by the expression
          \begin{equation}
            x[S](\delta(Q)) - 2 m[S](u) - 2 m[S](v) \geq -2 \label{ineq:max}
          \end{equation}
          \noindent and only if this is violated, we save the SEC $\langle Q, u, v\rangle$ for $(y, x)$.

        \end{section}

        \begin{section}{Exact Separation Algorithms for SECs}\label{sec:sep-alg}
          In this section, we present two exact separation algorithms for SECs in cycle problems. Given a vector $(y, x)\in P^G_A$, an algorithm which finds violated SECs for $(x,y)$ is called a separation algorithm. A separation algorithm is called exact if it always finds violated inequalities when they exist, otherwise it is called heuristic. Let $\bar{G}=(\bar{V},\bar{E})$ be the support graph of the given vector $(y,x)$ where
          \begin{subequations}
            \begin{align}
              \bar{V} := \{v\in V: y_v >0\} \\
              \bar{E} := \{e\in E: x_e >0\}
            \end{align}
          \end{subequations}

          Before delving into the separation algorithms in depth, we need to make an observation which has important consequences for SEC separation problems in cycle problems.
          In the TSP, the $y$ values are fixed to 1, so the constraints in the family~\eqref{cp:gsec} only depend on the star-set value of subsets of vertices. For this reason, the SEC separation problem for the TSP is closely related with the minimum cut problem, particularly, the most violated SEC for $(y,x)$ is in correspondence with the global minimum cut of $\bar{G}$. However, in cycle problems in general, the SECs $\langle C,v,d \rangle$ obtained from the global minimum cut of $\bar{G}$, $x(C:V-C)$, might not be violated, although other violated SECs for $(y,x)$ can exist. This scenario is shown in the example in Figure~\ref{fig:why}. The global minimum cut in the figure is obtained by $C=\{4\}$ and because $|C|<3$, by definition~\eqref{cp:gsec}, there is no violated SEC inequality of type $\langle C,v,u \rangle$ (or equivalently of type $\langle V-C,v,u \rangle$).
          However, the SECs  $\langle \{2, 3, 8\}, 2, 6 \rangle$ (or $\langle \{1, 4, 5, 6, 7, 9\}, 6, 2 \rangle$),  $\langle \{2, 3, 4, 8\}, 2, 6 \rangle$ (or $\langle \{1, 5, 6, 7, 9\}, 6, 2 \rangle$) and $\langle \{2, 3, 4, 5, 8\}, 2, 6 \rangle$ (or $\langle \{1, 6, 7, 9\}, 6, 2 \rangle$) are violated for the vector $(y,x)$ represented in Figure~\ref{fig:why}.
          \begin{figure}[htb!]
            \centering
            \hspace*{0.2\columnwidth}
            \includegraphics[width=0.8\columnwidth]{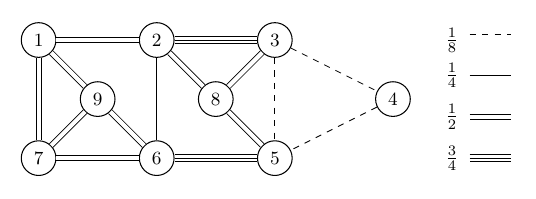}
            \caption{An example of a vector $(y, x)$ where the associated SEC with the global minimum cut of the support graph is not violated, while violated SECs for the vector exist. The edge values of the vector $(y, x)$ are detailed in the legend, while the vertex values are derived by the degree equations.}\label{fig:why}
          \end{figure}

          The straightforward exact algorithm to find violated SECs for $(y,x)$, consists of solving $\binom{|\bar{V}|}{2}$ number of $(s,t)$-minimum cuts problems on $\bar{G}$, one for each pair of different vertices, and then evaluating the associated inequality~\eqref{cp:gsec} using the $y$ values of the pair of vertices. When using the push-relabel algorithm in~\cite{golberg88} with highest-level selection and global relabeling heuristics to solve the $(s,t)$-minimum cut problems (or better said, to solve its dual: the $(s,t)$-maximum flow problems), the straightforward exact strategy has a $O(|\bar{V}|^4 \sqrt{|\bar{E}|})$ time complexity. Note that for cycle problems in general, the algorithm in~\cite{haoorlin1992} cannot be used to find the most violated SEC.\@ Although this algorithm solves the global minimum cut in $O(|\bar{V}|^2 \sqrt{|\bar{E}|})$ steps, which might be very useful, particularly for the TSP, in a general cycle problem the global minimum cut might not correspond with a violated SEC as shown above.

          The proposed separation algorithms in this paper, the Dynamic Hong's algorithm and the Extended Padberg-Gr{\"o}tschel algorithm, are two exact algorithms for cycle problems that run in $O(|\bar{V}|^3 \sqrt{|\bar{E}|})$. They are motivated by two observations made in~\cite{GTSP1997}. First, for a given pair of different vertices $u,v \in V$, the most violated SEC, $\langle Q, u, v \rangle$, corresponds to the subset $Q$ such that $(Q:V-Q)$ is a $(u,v)$-minimum cut. Secondly, for a given subset $Q$, the most violated SEC, $\langle Q, u, v \rangle$, corresponds to the vertices $u = \arg \max\{y_w: w \in Q\}$ and $v = \arg \max\{y_w: w \in V-Q\}$. The next two algorithms exploit these two observations, in order to guarantee that the most violated SEC for $(y,x)$ is identified.

          \begin{subsection}{Dynamic Hong's Exact Separation Algorithm}

            The Hong's exact approach, which emerged in the context of the TSP, consists of solving only $|\bar{V}|-1$ number of $(s,t)$-minimum cut problems, by fixing a random vertex, $s$, as the source of all the minimum cut problems, at the expense of possibly losing a subset of violated cuts, see~\cite{hong1972linear}.

            This exact approach can be extended for cycle problems, by selecting $s$ as a vertex of $\bar{V}$ with maximum $y$ value. Based on the second observation in~\cite{GTSP1997}, an $s$ selected this way will belong to the most violated SEC corresponding to every subset $Q$. However, since to define a SEC we need to select another vertex in $\bar{V}-\{s\}$, based on the first observation, we consider for each $t \in \bar{V}-\{s\}$ the subset $Q$ such that $(Q:V-Q)$ is a $(s,t)$-minimum cut. This shows that the extension of the Hong's approach for cycle problems is also an exact separation algorithm.

            Let us suppose that the vertices $\bar{V}=\{\bar{v}_1, \ldots, \bar{v}_{|\bar{V}|}\}$ are ordered decreasingly by $y$ and define the source $s_i=\bar{v}_1$ and the sink $t_i=\bar{v}_{i+1}$ for all $i\in\{1, \ldots, |\bar{V}|-1\}$. In~\cite{fischetti5} and~\cite{berube2009}, after each ($s_i$,$t_i$)-minimum cut, $(Q:V-Q)$, they increase the weight of the edge $[s_i,t_i]$ by $2-x(\delta(Q))$, in order to prevent collecting the same SEC in subsequent iterations. A disadvantage of this strategy is that the degree equations are not satisfied anymore. In Theorem~\ref{thm:gsec-exact} we achieve the same objective by shrinking the set $\{s_i,t_i\}$, with the extra feature of reducing the size of the graph for the following iterations.

            The underlying idea of Theorem~\ref{thm:gsec-exact} comes from the shrinking rule for minimum cut problems, Theorem $3.3$, in~\cite{Padberg1990a}. This theorem says that the edges having a value greater than or equal to the upper bound of the minimum cut can be contracted. However, this rule is not safe for SECs in cycle problems. For instance, based on Theorem $3.3$, in Figure~\ref{fig:why} we would shrink the set $\{2, 6\}$ because the value of the edge $[2,6]$ is equal to the global minimum cut value $x(C:V-C)$.
            However, because all the violated SECs in the figure consider the vertices $2$ and $6$ as disjoint ones, it is not safe to shrink the set $\{2,6\}$.

            \begin{lemma}\label{lemma:gsec-exact}
              Given a vector $(y,x) \in \mathbb{R}^{V \times E}$ that satisfies the degree constraints and four vertices $u, v, u^{'}, v^{'} \in \bar{V}$ such that $y_u+y_v \geq y_{u^{'}}+y_{v^{'}}$, let $(Q:\bar{V}-Q)$ be a $(u,v)$-minimum cut and $(Q^{'}:\bar{V}-Q^{'})$ be a $(u^{'},v^{'})$-minimum cut in $\bar{G}$. If $\langle Q^{'}, u^{'}, v^{'} \rangle$ is a strictly more violated SEC than $\langle Q, u, v \rangle$, then both $u,v$ vertices belong either to $Q^{'}$ or $\bar{V}-Q^{'}$.
            \end{lemma}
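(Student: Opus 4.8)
The plan is to argue by contradiction, and the whole argument is really just a comparison of two scalar inequalities. First I would make explicit what ``strictly more violated'' means in terms of the SEC inequality~\eqref{cp:gsec}. The violation of a SEC $\langle Q, u, v \rangle$ is the quantity $2y_u + 2y_v - 2 - x(\delta(Q))$, so the hypothesis that $\langle Q', u', v' \rangle$ is strictly more violated than $\langle Q, u, v \rangle$ reads
\[
2y_{u'} + 2y_{v'} - x(\delta(Q')) > 2y_u + 2y_v - x(\delta(Q)).
\]
Note that this comparison makes sense regardless of whether either SEC is actually violated, and that $x(\delta(Q))$ is the same whether the coboundary is taken in $G$ or in $\bar{G}$, since edges outside $\bar{E}$ carry zero $x$-weight; so there is no ambiguity in the cut values.

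Next I would suppose, for contradiction, that $u$ and $v$ lie on opposite sides of the cut $(Q' : \bar{V}-Q')$; without loss of generality $u \in Q'$ and $v \in \bar{V}-Q'$. Then $(Q' : \bar{V}-Q')$ is itself a $(u,v)$-cut of $\bar{G}$, and since $(Q : \bar{V}-Q)$ was chosen to be a $(u,v)$-\emph{minimum} cut, its value cannot exceed that of $Q'$, i.e.\ $x(\delta(Q)) \le x(\delta(Q'))$, equivalently $x(\delta(Q')) - x(\delta(Q)) \ge 0$.

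Finally, combining the two facts and rearranging the violation inequality gives
\[
2(y_{u'} + y_{v'}) - 2(y_u + y_v) > x(\delta(Q')) - x(\delta(Q)) \ge 0,
\]
hence $y_{u'} + y_{v'} > y_u + y_v$, which contradicts the hypothesis $y_u + y_v \ge y_{u'} + y_{v'}$. Therefore $u$ and $v$ cannot be separated by $Q'$, i.e.\ both belong to $Q'$ or both to $\bar{V}-Q'$, as claimed. There is no genuine technical obstacle here; the only point that needs a little care is translating ``more violated'' into the correct inequality — keeping the additive constant $-2$ and all signs straight — and then using the defining minimality of the $(u,v)$-cut $Q$ exactly once. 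The monotonicity hypothesis $y_u + y_v \ge y_{u'} + y_{v'}$ is precisely what closes the gap.
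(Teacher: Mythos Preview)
Your proof is correct and is essentially the same argument as the paper's: both combine the ``strictly more violated'' inequality with the hypothesis $y_u+y_v\ge y_{u'}+y_{v'}$ to compare $x(\delta(Q))$ and $x(\delta(Q'))$, and then invoke the minimality of the $(u,v)$-cut $Q$. The only cosmetic difference is that the paper first deduces $x(\delta(Q))>x(\delta(Q'))$ directly and then concludes $Q'$ cannot separate $u,v$, whereas you phrase the same chain as a contradiction.
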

            \begin{proof}
              Suppose that $\langle Q^{'}, u^{'}, v^{'} \rangle$ is a strictly more violated SEC than $\langle Q, u, v \rangle$, then:
              \begin{subequations}
                \begin{align}
                  x(\delta(Q)) - 2y_u -2y_v +2 & >  x(\delta(S)) - 2y_{u^{'}} -2y_{v^{'}} + 2 \\
                  x(\delta(Q))  & >  x(\delta(S)) + 2y_u +2y_v- 2y_{u^{'}} -2y_{v^{'}} \\
                  x(\delta(Q))  & >  x(\delta(S))
                \end{align}
              \end{subequations}
              Since $x(\delta(Q))=x(Q:\bar{V}-Q)$ is the value of the $(u,v)$-minimum cut and $x(\delta(Q^{'}))$ is strictly smaller than it, then both $u$ and $v$ belong either to $Q^{'}$ or $V-Q^{'}$.
              
            \end{proof}

            \begin{theorem}[Rule S3]\label{thm:gsec-exact}
              Given a vector $(y, x)\in \mathbb{R}^{V \times E}$ satisfying the degree equations, consider $u, v \in \bar{V}$ such that $\min \{y_u, y_v\} \geq y_w$ for all $w\in \bar{V}-\{u, v\}$. Then, after solving the $(u, v)$-minimum cut problem and collecting, if any, the associated violated SECs, it is subcycle-safe to shrink $S=\{u, v\}$.
            \end{theorem}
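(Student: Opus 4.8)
The strategy is to verify, case by case, one of the two alternatives in the definition of a subcycle-safe set, reading the statement operationally: since the algorithm has just recorded the SEC $\langle Q,u,v\rangle$ coming from the $(u,v)$-minimum cut $(Q:\bar V-Q)$, alternative~i) (``$(y[S],x[S])\notin P^{G[S]}_{SEC}$'') is allowed to be weakened to ``$(y[S],x[S])\notin P^{G[S]}_{SEC}$ or $\langle Q,u,v\rangle$ is a most violated SEC of $(y,x)$''. First I would dispose of the vectors with violated logical constraints, exactly as in the proofs of Rules~S1 and~S2. If some violated logical constraint of $(y,x)$ does not involve the edge $[u,v]$, it survives the contraction: a violated inequality $x_e>y_p$ with $e\cap\{u,v\}=\emptyset$ is literally unchanged, while for $e=[u,b]$ with $b\neq v$ (the case $e\ni v$ being symmetric) the hypothesis $\min\{y_u,y_v\}\ge y_b$ forces $y_u\ge y_b$, hence $x_{[u,b]}>y_b$ also fails, and then $x[S]([s,b])=x_{[u,b]}+x_{[v,b]}\ge x_{[u,b]}>y_b=y[S](b)$, so $(y[S],x[S])\notin P^{G[S]}_A$, a fortiori $\notin P^{G[S]}_{SEC}$ and alternative~i) holds. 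If instead every violated logical constraint involves $[u,v]$, alternative~ii) holds. Hence I may assume $(y,x)\in P^G_A$; by the degree equations this yields $y[S](s)=y_u+y_v-x_{[u,v]}\ge\max\{y_u,y_v\}\ge y_w$ for every $w\in\bar V$ (cf.~Lemma~\ref{lemma:uvysrk}).

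Under this assumption, if $(y[S],x[S])\notin P^{G[S]}_{SEC}$ then alternative~i) holds, so suppose $(y[S],x[S])\in P^{G[S]}_{SEC}$. The argument of Lemma~\ref{lemma:predef} applies to the pair $\{u,v\}$ (it only uses $y[S](s)\ge\max\{y_u,y_v\}$ and $x[S](\delta(Q'[S]))=x(\delta(Q'))$, so $[u,v]$ need not even be an edge), giving two possibilities: either $(y,x)\in P^G_{SEC}$, in which case there is nothing to prove, or every violated SEC $\langle Q',r,t\rangle$ of $(y,x)$ has exactly one of $u,v$ in $Q'$. In the latter case let $\langle Q^\ast,r^\ast,t^\ast\rangle$ be a most violated SEC of $(y,x)$. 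By the first observation of~\citep{GTSP1997} one may take $(Q^\ast:\bar V-Q^\ast)$ to be an $(r^\ast,t^\ast)$-minimum cut, and since $u,v$ carry the two largest $y$-values, $y_u+y_v\ge y_{r^\ast}+y_{t^\ast}$; moreover $Q^\ast$ separates $u$ and $v$ by the previous paragraph. Lemma~\ref{lemma:gsec-exact}, applied in contrapositive form with $(u',v')=(r^\ast,t^\ast)$, then tells us that $\langle Q^\ast,r^\ast,t^\ast\rangle$ is \emph{not} strictly more violated than $\langle Q,u,v\rangle$. As a violated SEC exists by assumption, $\langle Q,u,v\rangle$ is itself violated and was collected by the algorithm, so a most violated SEC of $(y,x)$ has already been produced; shrinking $S=\{u,v\}$ can then only discard violated SECs that separate $u$ and $v$, none of which is more violated than the one already in hand.

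The main obstacle, and the place where care is required, is precisely this last bookkeeping: the literal definition of a subcycle-safe set has only alternatives~i) and~ii), so one must justify that ``$\langle Q,u,v\rangle$ is a most violated SEC of $(y,x)$'' is the correct surrogate for alternative~i)---equivalently, that embedding Rule~S3 inside Dynamic Hong's scheme (choose the two vertices of largest $y$-value, solve one minimum cut, collect, shrink, iterate) still returns a most violated SEC of the original vector. A minor point is to handle the degenerate cardinalities $|Q^\ast|\in\{1,2,|\bar V|-2,|\bar V|-1\}$: in the branch $(y,x)\in P^G_A$ a SEC of such size is never violated, or, by the paper's convention, it is the logical constraint already treated; and the inequality $x(\delta(Q^\ast))\ge x(\delta(Q))$ that is implicit in Lemma~\ref{lemma:gsec-exact} holds as soon as $Q^\ast$ separates $u$ and $v$.
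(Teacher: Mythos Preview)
Your argument is correct and rests on the same engine as the paper's, namely Lemma~\ref{lemma:gsec-exact}, but the paper's entire proof is the single sentence ``The theorem is a direct consequence of Lemma~\ref{lemma:gsec-exact}.'' You supply what the paper leaves implicit: the case analysis disposing of violated logical constraints (where you correctly adapt the argument to exploit $\min\{y_u,y_v\}\ge y_b$ rather than the hypothesis $x_{[u,v]}\ge\max\{y_u,y_v\}$ that was available in Rules~S1 and~S2), the bridge via Lemma~\ref{lemma:predef} reducing to SECs that separate $u$ and $v$, and the contrapositive application of Lemma~\ref{lemma:gsec-exact} at the end. You also explicitly flag the tension between the literal two-clause definition of subcycle-safe and the theorem's operational phrasing ``after collecting''---a subtlety the paper simply glosses over by appealing to the intended meaning spelled out after the definition. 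In short, the two proofs share the same skeleton; yours fleshes out the joints that the paper takes for granted.
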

            \begin{proof}
              The theorem is a direct consequence of Lemma~\ref{lemma:gsec-exact}.
              
            \end{proof}

            The dynamic Hong's algorithm is based on Theorem~\ref{thm:gsec-exact}, and it takes its name because the source, $s$, for the $(s,t)$-minimum cut problems might not be the same as in the classical approach. The algorithm works as follows: suppose that the vertices of $\bar{V}$ are ordered decreasingly by $y$, and set for the first minimum cut problem $s_1=\bar{v}_1$ and $t_1=\bar{v}_2$. Next, we solve the ($s_1$,$t_1$)-minimum cut problem, evaluate the obtained SEC candidates and, thereafter, shrink $\{s_1,t_1\}$. To proceed with the subsequent iteration, we need to know if the ordering of the vertices has changed after the $\{s_1,t_1\}$ shrinking, so we consider the Lemma~\ref{lemma:uvysrk}. When the logical constraint $x_{[s_1,t_1]} \leq y_{s_1}$ is satisfied, we have that $y[\{s_1,t_1\}](s_1[\{s_1,t_1\}]) \geq y_{t_1} \geq y_v$ for all $v\in \bar{V}-\{s_1,t_1\}$, and, hence, the vertex $s_1[\{s_1,t_1\}]$ will be ``again'' the source of the subsequent minimum cut problem. However, when $x_{[s_1,t_1]} > y_{s_1} $, it might happen that $y[\{s_1,t_1\}](s_1[\{s_1,t_1\}])  < y_v$ for some $v \in \bar{V}-\{s_1,t_1\}$. In this situation, after shrinking the set $\{s_1,t_1\}$, we will need to reorder the vertices of $\bar{V}[\{s_1,t_1\}]$  decreasingly by $y$ (rearrange  $s_1[\{s_1,t_1\}]$ in the set $\bar{V}$). So now, to proceed, we set as $s_2$ and $t_2$, the first two vertices of $\bar{V}[\{s_1,t_1\}]$, continue by solving the ($s_2$,$t_2$)-minimum cut problem, evaluating the possible violated SECs and shrinking $\{s_2,t_2\}$, and so on.

          \end{subsection}

          \begin{subsection}{Extended Padberg-Gr{\"o}tschel Exact Separation Algorithm}

            \cite{Padberg1985}, showed a different exact separation algorithm for SECs in the TSP, whose key component is the multitermal flow algorithm proposed in~\cite{gomoryhu1961}. A multitermal flow algorithm is solved, in turn, using the so-called Gomory-Hu tree, which can be constructed solving a $|\bar{V}|-1$ number of $(s,t)$-minimum cut problems.

            In~\cite{GTSP1997} it was mentioned that an analogue approach to the one given for the TSP might be used for the SECs in the cycle problems, but no details were given to illustrate how this approach should be extended. However, note that the adaptation of the Padberg-Gr{\"o}tschel approach for cycle problems is not trivial. The algorithm in~\cite{Padberg1985} for the TSP\@ relies on the correspondence between the most violated subtour elimination constraint for $(y,x)$ and the global minimum cut of $\bar{G}$, which is not always the case in general cycle problems (this might not even be violated while other exist).

            In cycle problems, Gomory-Hu trees were used to find violated SECs in~\cite{bauer2002} for the Cardinality Constrained Cycle Problem (CCCP) and in~\cite{jepsen2014} for the Capacited Profitable Tour Problem (CPTP).
            Nevertheless, in absence of details of the approach used to identify the violated SECs, we understand that in both papers the selected inequality corresponds with the global minimum cut.
            Therefore, these separation algorithms for SECs should be considered as heuristics.
            As far as we know, an exact extension for the Padberg-Gr{\"o}tschel separation algorithm for SECs in cycle problems has not been detailed in the literature.

            In order to extend the separation algorithm for cycle problems, we need to construct a Gomory-Hu tree, $T=(\bar{V},A_T)$, of the support graph $\bar{G}$ with weights $(y,x)$. However, unlike in the original approach, the tree $T$ has to be constructed as a directed rooted tree, where the root is set as a vertex of $\bar{V}$ with maximum $y$ value. Let us denote by $\Delta(v)$ the set of descendant vertices of $v\in \bar{V}$ and by $r$ the root of the tree $T$. We consider that every vertex is descendant of itself, i.e., $v \in \Delta(v)$.  Suppose that the arcs of $A_T$ are in the descendant orientation, and call $h_e$ the head vertex of an arc $a$. Given $a\in A_T$, we define
            \begin{subequations}
              \begin{align}
                u_a = & \arg \max \{y_v:v \in \Delta(h_a)\} \\
                v_a = & \arg \max \{y_v:v \in \bar{V}-\Delta(h_a)\}
              \end{align}
            \end{subequations}
            which identifies the vertices, $u_a$ and $v_a$, with the maximum $y$ value for each of the two connected components of the graph $(\bar{V}, A_T-\{a\})$. Note that, from the way that we have chosen the root, we can assume that $v_a=r$. Then, once the directed rooted Gomory-Hu tree is constructed, the violated SECs are collected in $O(\bar{V})$ computational time. With that aim, we check for each arc $a \in A_T$ ($|A_T|< |\bar{V}|$) if the inequality $w_a -2y_{u_a} -2y_r \geq -2$ is violated, being $w_a$ the weight of the arc $a$ in the Gomory-Hu tree $T$ representing the $(s,t)$-minimum cut for the two extreme vertices of the arc $a$. If this happens, the violated SEC is defined by $\langle \Delta(h_a), u_a, r \rangle$.

            Note that this can be done efficiently because the $u_a$ vertices of the arcs can be updated without an extra computational overhead. At every step of the Gomory-Hu algorithm, when a new arc is added to the tree, the descendant vertices are identified, which can be grasped to update the $u_a$ vertices. Also, with a proper implementation of the Gomory-Hu algorithm, it is possible to maintain the subset that contains the selected $r$ as the root of the subsequent trees. For more details, see the pseudocode in the Appendix A.

            In a similar way to the extension of Hong's approach, it can be shown that the extension of Padberg-Gr{\"o}tschel is exact for cycle problems. In this case, the root vertex $r$ plays the role of $s$, whereas each arc $a\in A_T$ identifies simultaneously a vertex in $V-\{r\}$, $t=h_a$, and its associated $(s,t)$-minimum cut. Furthermore, it goes one step beyond, based on the second observation, it considers $u_a$ instead of $h_a$. Hence, the number of violated cuts found by the extension of the classical Hong's approach is dominated by the extension of the Padberg-Gr{\"o}tschel approach.

            According to our experiments in Section~\ref{sec:exp}, the Extended Padberg-Gr{\"o}tschel approach consumes a much lower computational time than the Extended Hong approach, although both approaches have the same worst case running time complexity. This happens because the subsequent $(s,t)$-minimum cut problems are solved in subgraphs of $\bar{G}$ in the Gomory-Hu tree based approach. When the problem size increases, the time needed for the shrinking and unshrinking operations during the Gomory-Hu tree construction is insignificant compared to the time needed to solve the $(s,t)$-minimum cut problems. Therefore, in addition to potentially finding more violated SECs, the Extended Padberg-Gr{\"o}tschel is a faster exact separation algorithm than the Extended Hong's Algorithm.

            In Figure~\ref{fig:gomoryhu}, we illustrate the Extended Padberg-Gr{\"o}tschel approach to find the violated SECs for the vector $(y, x)$ defined in Figure~\ref{fig:why}. The weight $w_a$ of each $a \in A_T$ in the tree is detailed above the arcs, and the $y$ values of the vertices $u_a$ and $v_a$ are detailed inside a box, at the top and at the bottom respectively, near the head vertex of the arc. Two violated SECs are identified $\langle \{2,3,4,5,8\}, 2, 6 \rangle$ and $\langle \{2,3,8\}, 2, 6 \rangle$. Note that, if in this particular tree, the vertex $2$ is chosen to be the root, only the violated SEC $\langle \{1,6,7,9\}, 6, 2 \rangle$ (equivalent to $\langle \{2,3,8\}, 2, 6 \rangle$) is collected, which shows that the exact algorithm is sensible to the directed rooted Gomory-Hu tree construction.

            \begin{figure}[htb!]
              \begin{center}
                \includegraphics[width=0.7\columnwidth]{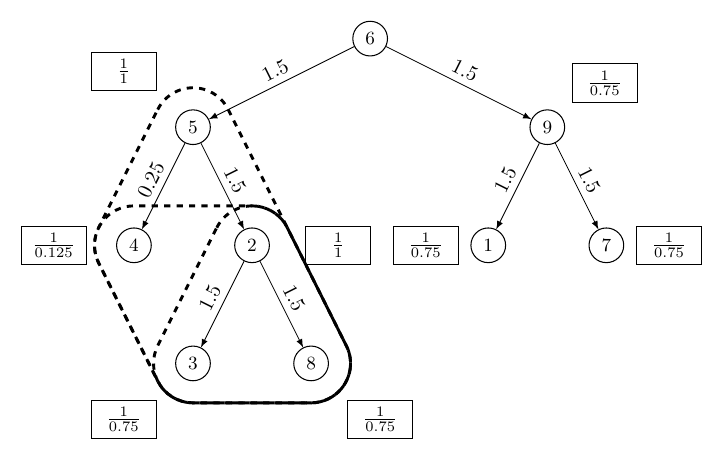}
                \caption{An example of the directed rooted Gomory-Hu tree for the SEC separation problem of Figure~\ref{fig:why}. The $u_a$ (below) and $v_a$ (above) values are detailed in the boxes. The arc weights are detailed next to the arcs.}\label{fig:gomoryhu}
              \end{center}
            \end{figure}
          \end{subsection}

          Although, the detailed approach until now always finds violated inequalities when they exist, extra violated SECs can be collected using a more exhaustive search whose cost is $O(|\bar{V}|^2)$. Observe that $x(\delta(\Delta(h_a) \cup \Delta(h_f))) \leq w_a+w_f$ for every $a,f \in A_T$. Then, we can define $y_{u(e,f)}=\max\{y_{u_a}, y_{u_f}\}$  and check if $w_a +w_f - 2y_{u(a, f)} -2y_r < -2$ for each pair arcs of $A_T$.
          This way, the violated SEC $\langle \{2,3,4,8\}, 2, 6 \rangle$ in Figure~\ref{fig:gomoryhu} can be identified. We have not made use of this kind of extra SECs in our experiments.

        \end{section}

        \begin{section}{Computational Results for Shrinking and Separation Algorithms for SECs}\label{sec:exp}

          In this section we describe the results of the computational experiments for the shrinking and the exact separation algorithms for SECs. These experiments have been designed with two goals in mind. First, to show the importance of the shrinking technique for cycle problems, and second, to evaluate the performance of different combination of shrinking and separation algorithms for SECs.

          The computational study of this section is inspired by two studies for the minimum cut algorithms:~\cite{junger2000} and~\cite{goldberg2001}. In both papers, the minimum cut algorithms are tested in instances originated, among others, from the solution of the TSP by a B\&C algorithm. Note that, as explained in Section~\ref{sec:sep-alg}, the global minimum cut algorithms tested in these papers are not suitable for our purpose.

          \cite{junger2000} studied the performance of different algorithms in combination with the shrinking rules defined for the minimum cut problems in~\cite{Padberg1990a}. Similarly, in this paper, we show the performance of the combination of shrinking rules and separation algorithms for SECs in cycle problems.~\cite{goldberg2001} compared different Gomory-Hu tree building strategies:~\cite{gusfield1990} implementation and three variants of the classical implementation. It was shown, for the SEC separation problem in the TSP, that the classical Gomory-Hu building based strategies outperform Gusfield's implementation, whereas they have not obtained significant differences among the variants of the classical implementation. The directed rooted Gomory-Hu tree algorithm presented in Section~\ref{sec:sep-alg} can be considered within the class of classical implementations.

          \begin{subsection}{Benchmark Instances}\label{sec:bench}
            The cycle problems could have a very large variety of origins, where the cycle constraints might be combined with additional constraints (e.g., a limit in the length of the cycle) and different objective functions (e.g., maximizing the profits and/or minimizing the length). These different natures of the cycle problems might vary the results obtained by each proposed strategy.
            However, we assume that in general terms the behaviour of the strategies for SECs is similar for all the cycle problems.
            So, instead of presenting an extensive comparison for different cycle problems, we focus our experiments on a well-known cycle problem, the Orienteering Problem (OP).

            With the purpose of evaluating our shrinking and separation algorithms for SECs, we have built the SEC separation instances by obtaining vectors $(y,x)\notin P^G_C$ during a B\&C algorithm for the OP\@. The OP instances are constructed based on the TSPLIB instances in~\cite{reinelt1991} following the approach in~\cite{fischetti5}. Particularly, we have chosen the TSPLIB instances selected in~\cite{goldberg2001}: pr76, att532, vm1084, rl1323, vm1748, rl5934, usa13509, d15112. Based on these 8 TSP instances, we have constructed 24 OP instances following the approach in the OP literature. The depot vertex is considered to be the first vertex of the TSPLIB instance, the maximum cycle length in the OP is set as half of the TSP value of the instance (values reported in~\cite{concorde}) and the profits of the vertices are generated in three different ways: Gen1, all the vertices have equal profit; Gen2, the scores are generated pseudorandomly; and Gen3, the vertices which are further from the depot vertex have a greater profit. Once the OP instances have been constructed, the SEC separation instances are generated by considering the first support graph during a B\&C algorithm for the OP which satisfies the degree constraints, the logical constraints and the connectivity. We have classified the instances into two equal-sized groups: Medium, instances whose original OP problem has less than 1500 vertices, and Large, the rest of the instances. All the used OP instances and SEC separation problem instances are available at~\cite{cpsrksec}.
          \end{subsection}

          \begin{subsection}{Shrinking Strategies for SECs}\label{sec:srk-strat}

            Relying on the results of Section~\ref{sec:srk} and Section~\ref{sec:gsec}, we have considered 5 different shrinking strategies for SECs.
            We have named the obtained strategies, by concatenating the names of the involved rules: C1, C1C2, C1C2C3, S1, S1S2. The pseudocodes of these strategies are detailed in Appendix A.

            In each strategy, each involved rule is applied exhaustively. For instance, for the rule C1, the hypotheses of Theorem~\ref{corol:srk} are checked for every possible set $S \subset \bar{V}$ and vertex $t \in \bar{V}-S$. Moreover, when a shrinkable set $S$ is found and shrunk, new shrinkable sets might appear in the graph obtained  after applying the shrinking. In order to handle these scenarios, we make use of a heap set, $H \subset \bar{V}$, which stores all the vertices that need to be checked to see whether they belong to a candidate $S$. For that, first, the set $H$ is initialized considering all the vertices of $\bar{V}$. During the search procedure, whenever the heap set $H$ is not empty, we draw one of its vertex, $v$, and consider it as contained in $S$. Then, we find neighbour vertices of $v$ that, if they incorporate to $S$, might make $S$ shrinkable. If a shrinkable set $S$ is found, first we remove the vertices in the set $S$ from $H$, and then we shrink the graph $\bar{G}$ and the vectors $(y, x)$ and $m$ (remember that $m_v = \max \{y_u: u \in \pi(v)\}$ for $v \in \bar{V}$). Immediately thereafter, we add the newly created vertex $s$ and its neighbours to the heap $H$. Additionally, when the support graph has vertices with value one, we check
            if violated SECs exist as suggested by Lemma~\ref{lemma:egt} and Theorem~\ref{thm:srkPa2}.

          \end{subsection}

          \begin{subsection}{Exact Separation Algorithms for SECs}\label{sec:sep-strat}
            We study the performance of four exact separation algorithms for SECs:
            \begin{enumerate}[i)]
              \item Algorithm \hypertarget{item:sep-eh}{EH}: Extended Hong's algorithm.
              \item Algorithm \hypertarget{item:sep-dh}{DH}: Dynamic Hong's algorithm.
              \item Algorithm \hypertarget{item:sep-dhi}{DHI}: Dynamic Hong's algorithm with internal shrinking.
              \item Algorithm \hypertarget{item:sep-epg}{EPG}: Extended Padberg-Gr{\"o}tschel algorithm.
            \end{enumerate}

            The Algorithm~\hyperlink{item:sep-eh}{EH} is the Hong separation algorithm extended for cycle problems in~\cite{GTSP1997}. The Algorithm~\hyperlink{item:sep-dh}{DH} refers to the Dynamic Hong separation algorithm explained in Section~\ref{sec:sep-alg}, i.e., after each minimum cut, we shrink the source and sink vertices based on rule S3. In Algorithm~\hyperlink{item:sep-dhi}{DHI}, in analogy to the approach used in~\cite{concorde} for the TSP, inside the~\hyperlink{item:sep-dh}{DH} separation algorithm, after shrinking the source and the sink vertices, we apply the given shrinking strategy to the newly obtained graph.
            The Algorithm~\hyperlink{item:sep-epg}{EPG} refers to the extended Padberg-Gr{\"o}tschel algorithm explained in Section~\ref{sec:sep-alg}.

            When a violated SEC, $\langle Q, u,v \rangle$, is found, we save in a repository only the $Q$ set of the violated SEC\@. During the whole separation procedure each $Q$ set is saved only once to avoid generating unnecessary cuts. Moreover, if $|Q| > |\bar{V}|/2$, we save $\bar{V}-Q$ instead of $Q$ in order to decrease memory resource requirements. Once the separation algorithm is completed, we generate the SEC cuts from the saved $Q$ sets in the following way: we consider for candidate vertices, $u$ and $v$, the vertices with maximum $y$ value inside $Q$, $M(Q)=\{u\in Q: y_u \geq y_v \;\forall v \in Q\}$, and outside $Q$, $M(\bar{V}-Q)=\{u\in \bar{V}-Q: y_u \geq y_v \;\forall v \in \bar{V}-Q\}$. Since the amount of generated SECs might be huge (producing memory problems), we consider only $k_{in}$ and $k_{out}$ randomly selected vertices from $M(Q)$ and $M(\bar{V}-Q)$, respectively. Note that in a cycle problem with depot, we have either $d \in M(Q)$ or $d \in M(\bar{V}-Q)$ for every $Q$, so it would be sufficient to select the depot instead of the randomly selected vertices. In other words, in these problems, it is enough to consider $u=d$ and $k_{in}=1$ if $d \in M(Q)$  and $v=d$ and $k_{out}=1$ otherwise. However, with the aim of obtaining insights about the SEC generation process in general cases, in the experiments, we have ignored that the OP is a cycle problem with depot.
          \end{subsection}

          The pseudocodes of the considered shrinking and separation strategies can be found in Appendix A
          and the source code of the implementation used for the experiments is publicly available at~\cite{cpsrksec}.

          \begin{subsection}{Results}\label{sec:exp-results}
            For the experiments, we have run 10 times each combination of shrinking and separation strategies with two objectives in mind: evaluate the influence of the random choices during the algorithm (ties are broken randomly when ordering $\bar{V}$; source and sink vertices are selected randomly in the Gomory-Hu tree construction) and obtain a better approximation of the running times. We have divided the process of finding the violated cuts into three parts: (1) the preprocess, which considers the shrinking carried out before the separation, (2) the separation, which consists of finding the $Q$ sets that define violated cuts, and (3) the generation of the violated SEC from the $Q$ sets. Since the SEC generation is closely related to the obtained $Q$ sets in the previous parts, and it is independent of the considered shrinking and separation strategies, we have limited the discussion of results to the preprocess and the separation parts.

            The computational results are summarized in two tables. In Table~\ref{tab:speedup}, we present the information about the graph simplification and the relative time needed by each combination of strategies compared to the reference strategy (Algorithm \hyperlink{item:sep-eh}{EH} with NO shrinking). In Table~\ref{tab:cxms}, we show the absolute values (on average) about the collected Q sets and the time needed (in milliseconds) by each combination of strategies. Although these tables give a general picture of the behaviour of the strategies, we consider that the results reflect what happens instance by instance. The detailed results of the experiments can be found in Appendix B.

            \begin{table}
              \begin{center}
\begin{tabular*}{\columnwidth}{@{\extracolsep{\fill}}llcccccc@{}}
\toprule
& & \multicolumn{2}{c}{Preprocess} & \multicolumn{4}{c}{Separation} \\ \cmidrule(lr){3-4}\cmidrule(lr){5-8}
& & \multicolumn{2}{c}{Graph Size} & \multicolumn{4}{c}{Speedup} \\ \cmidrule(lr){3-4}\cmidrule(lr){5-8}
Size & Shrinking & $\%|\bar{V}|$ & $\%|\bar{E}|$ & EH & DH & DHI & \multicolumn{1}{c}{EPG} \\ 
\midrule
Medium & NO  & $100.00$ & $100.00$ & $\phantom{00}1$ & $\phantom{00}9$ & $\phantom{00}9$ & $\phantom{00}9$ \\
 & C1  & $\phantom{0}42.55$ & $\phantom{0}50.61$ & $\phantom{00}6$ & $\phantom{0}29$ & $\phantom{0}23$ & $\phantom{0}19$ \\
 & C1C2  & $\phantom{0}39.73$ & $\phantom{0}46.40$ & $\phantom{00}7$ & $\phantom{0}32$ & $\phantom{0}27$ & $\phantom{0}20$ \\
 & C1C2C3  & $\phantom{0}39.73$ & $\phantom{0}46.40$ & $\phantom{00}7$ & $\phantom{0}33$ & $\phantom{0}25$ & $\phantom{0}20$ \\
 & S1  & $\phantom{0}22.88$ & $\phantom{0}26.43$ & $\phantom{0}16$ & $\phantom{0}57$ & $\phantom{0}51$ & $\phantom{0}28$ \\
 & S1S2  & $\phantom{0}21.26$ & $\phantom{0}24.53$ & $\phantom{0}17$ & $\phantom{0}60$ & $\phantom{0}53$ & $\phantom{0}27$ \\
\midrule Large & NO  & $100.00$ & $100.00$ & $\phantom{00}1$ & $\phantom{0}15$ & $\phantom{0}15$ & $\phantom{0}16$ \\
 & C1  & $\phantom{0}30.45$ & $\phantom{0}37.88$ & $\phantom{0}17$ & $107$ & $\phantom{0}74$ & $139$ \\
 & C1C2  & $\phantom{0}27.95$ & $\phantom{0}34.10$ & $\phantom{0}20$ & $122$ & $\phantom{0}86$ & $151$ \\
 & C1C2C3  & $\phantom{0}27.95$ & $\phantom{0}34.10$ & $\phantom{0}20$ & $121$ & $\phantom{0}80$ & $150$ \\
 & S1  & $\phantom{0}16.15$ & $\phantom{0}19.91$ & $\phantom{0}44$ & $221$ & $203$ & $215$ \\
 & S1S2  & $\phantom{0}14.34$ & $\phantom{0}17.43$ & $\phantom{0}53$ & $252$ & $227$ & $225$ \\
\bottomrule 
\end{tabular*}
                \caption{ Average speedup of the proposed algorithms using the Algorithm EH with no shrinking preprocess as a baseline.}\label{tab:speedup}
              \end{center}
            \end{table}

            In Table~\ref{tab:speedup} it can be seen that the graph is contracted considerably by means of the shrinking, especially in large problems. The largest contractions are achieved with strategy S1S2. An interesting point of the results is that with the rules derived from Theorem~\ref{corol:srk} (C1,C2,C3) the support graph is simplified significantly, which encourages us to apply the shrinking preprocess for other valid inequalities, such as combs.  Note that, rule C3 does not contract the graph more than what is already achieved by the combination of rules C2 and C3, see Section~\ref{sec:discus} for the discussion concerning this result.

            Regarding the speedup up obtained by the shrinking strategies, the results are clear and show the importance of performing the shrinking preprocess before the separation algorithms. If we observe the column related to Algorithm \hyperlink{item:sep-eh}{EH} in Table~\ref{tab:speedup}, the speedup obtained by each shrinking strategy is meaningful. In Medium instances, on average, the speedup is about $6$ times for the least aggressive strategy (C1), and $17$ times in Large instances. By means of the most aggressive strategy (S1S2) the speedup on average is $17$ for Medium-sized instances and $53$ in Large-sized instances.

            With respect to the time needed, the separation algorithms, Algorithm \hyperlink{item:sep-dh}{DH} and Algorithm \hyperlink{item:sep-epg}{EPG}, are both faster than the commonly used Algorithm \hyperlink{item:sep-eh}{EH}, which shows the relevance of the detailed exact separation algorithms in Section~\ref{sec:sep-alg}. If we compare Algorithm \hyperlink{item:sep-dh}{DH} and Algorithm \hyperlink{item:sep-epg}{EPG}, without considering any shrinking strategy, the speedups on average are similar
            (9 and 9 times, respectively) and Algorithm \hyperlink{item:sep-epg}{EPG} in larger instances (15 and 16 times, respectively). The table also suggests, based on the results of Algorithm \hyperlink{item:sep-dh}{DH} and Algorithm \hyperlink{item:sep-dhi}{DHI}, that it might not be convenient in the Dynamic Hong's separation algorithm to internally carry out extra shrinking procedures.

            Taking into account jointly the shrinking and separation strategies, the largest speedups are obtained when rules S1 and S2 are combined in the preprocess and, after that, alternatives to the standard Hong separation algorithms are used.
            In terms of running time, the Algorithm \hyperlink{item:sep-dh}{DH} with the S1S2 shrinking preprocess obtains the best results in the experiments, with an average speedup of $60$ in Medium-sized instances and $252$ in Large-sized instances. The results obtained by Algorithm \hyperlink{item:sep-epg}{EPG} with the S1S2 preprocess strategy are also very good, especially in Large-sized instances with an average speedup of $225$.

            \begin{table}[htb!]
              \centering
              \begin{tabular*}{\columnwidth}{@{\extracolsep{\fill}}llcccccccccc@{}}
                \toprule
& & \multicolumn{2}{c}{Preprocess} & \multicolumn{8}{c}{Separation} \\ \cmidrule(lr){3-4}\cmidrule(lr){5-12}
& & \multicolumn{2}{c}{All} & \multicolumn{2}{c}{EH} & \multicolumn{2}{c}{DH} & \multicolumn{2}{c}{DHI} & \multicolumn{2}{c}{EPG} \\ \cmidrule(lr){3-4}\cmidrule(lr){5-6}\cmidrule(lr){7-8}\cmidrule(lr){9-10}\cmidrule(lr){11-12}
                Size & Shrinking & \#Q & Time & \#Q & Time & \#Q & Time & \#Q & Time & \#Q & \multicolumn{1}{c}{Time} \\
                \midrule
                Medium & NO  & $\phantom{00}0.0$ & $\phantom{0}0.5$ & $\phantom{00}83.8$ & $\phantom{0}211.6$ & $\phantom{00}79.9$ & $\phantom{00}17.1$ & $\phantom{00}79.9$ & $\phantom{00}17.1$ & $\phantom{0}438.2$ & $\phantom{00}16.3$ \\
                       & C1  & $\phantom{00}0.0$ & $\phantom{0}0.8$ & $\phantom{00}27.8$ & $\phantom{00}30.2$ & $\phantom{00}58.4$ & $\phantom{000}5.2$ & $\phantom{00}58.4$ & $\phantom{000}6.5$ & $\phantom{0}149.0$ & $\phantom{000}7.8$ \\
                       & C1C2  & $\phantom{00}5.5$ & $\phantom{0}0.8$ & $\phantom{00}31.6$ & $\phantom{00}25.2$ & $\phantom{00}59.4$ & $\phantom{000}4.6$ & $\phantom{00}59.4$ & $\phantom{000}5.5$ & $\phantom{0}139.7$ & $\phantom{000}7.3$ \\
                       & C1C2C3  & $\phantom{00}5.5$ & $\phantom{0}0.9$ & $\phantom{00}31.6$ & $\phantom{00}25.5$ & $\phantom{00}59.4$ & $\phantom{000}4.5$ & $\phantom{00}59.4$ & $\phantom{000}5.9$ & $\phantom{0}139.8$ & $\phantom{000}7.4$ \\
                       & S1  & $\phantom{0}29.3$ & $\phantom{0}0.9$ & $\phantom{00}43.4$ & $\phantom{00}10.2$ & $\phantom{00}63.1$ & $\phantom{000}2.6$ & $\phantom{00}63.1$ & $\phantom{000}2.9$ & $\phantom{0}101.3$ & $\phantom{000}5.3$ \\
                       & S1S2  & $\phantom{0}35.1$ & $\phantom{0}0.9$ & $\phantom{00}48.8$ & $\phantom{000}9.5$ & $\phantom{00}69.0$ & $\phantom{000}2.5$ & $\phantom{00}69.9$ & $\phantom{000}2.8$ & $\phantom{00}98.3$ & $\phantom{000}5.3$ \\
                \midrule Large & NO  & $\phantom{00}0.0$ & $\phantom{0}9.9$ & $\phantom{0}679.4$ & $26578.2$ & $\phantom{0}372.6$ & $\phantom{}2140.0$ & $\phantom{0}372.6$ & $\phantom{}2140.0$ & $\phantom{}3395.1$ & $\phantom{}1828.8$ \\
                               & C1  & $\phantom{00}0.0$ & $\phantom{}22.5$ & $\phantom{0}154.2$ & $\phantom{}1513.4$ & $\phantom{0}266.8$ & $\phantom{0}203.7$ & $\phantom{0}266.8$ & $\phantom{0}295.8$ & $\phantom{0}756.6$ & $\phantom{0}146.7$ \\
                               & C1C2  & $\phantom{0}17.0$ & $\phantom{}22.8$ & $\phantom{0}166.8$ & $\phantom{}1320.0$ & $\phantom{0}271.7$ & $\phantom{0}179.3$ & $\phantom{0}271.7$ & $\phantom{0}257.2$ & $\phantom{0}717.9$ & $\phantom{0}135.2$ \\
                               & C1C2C3  & $\phantom{0}16.8$ & $\phantom{}23.2$ & $\phantom{0}166.6$ & $\phantom{}1321.0$ & $\phantom{0}271.5$ & $\phantom{0}181.0$ & $\phantom{0}271.5$ & $\phantom{0}277.1$ & $\phantom{0}717.8$ & $\phantom{0}136.2$ \\
                               & S1  & $169.2$ & $\phantom{}25.1$ & $\phantom{0}225.4$ & $\phantom{0}515.4$ & $\phantom{0}287.0$ & $\phantom{00}95.4$ & $\phantom{0}287.0$ & $\phantom{0}103.8$ & $\phantom{0}507.1$ & $\phantom{00}94.7$ \\
                               & S1S2  & $248.8$ & $\phantom{}25.3$ & $\phantom{0}293.1$ & $\phantom{0}427.2$ & $\phantom{0}372.2$ & $\phantom{00}83.5$ & $\phantom{0}374.3$ & $\phantom{00}91.5$ & $\phantom{0}528.0$ & $\phantom{00}91.1$ \\
                               \bottomrule
              \end{tabular*}
              \caption{On average, the number of $Q$ sets found and the time needed by strategy and size.}\label{tab:cxms}
            \end{table}

            Apart from the running time, an aspect to consider when making a choice about the separation algorithm is the number of violated cuts found. As we have already mentioned, in the cycle problems, the number of collected violated SECs is closely related with the Q sets obtained by the separation algorithms.
            Therefore, we have measured the obtained amount of Q sets instead of the number of violated SECs. In Table~\ref{tab:cxms}, the average number of Q sets and time of each combination of strategies is shown.

            The first aspect to note is that, by means of the shrinking preprocess, which is considerably faster than the exact separation procedure, we are able to find violated SECs in many instances (via Theorem~\ref{thm:srkPa2} and Lemma~\ref{lemma:egt}). These violated SECs might be enough for the separation goal and, in practice, we could skip the exact separation algorithm if violated inequalities are found in the preprocess. In the separation process, in general, the largest amount of Q sets are obtained by Algorithm \hyperlink{item:sep-epg}{EPG}, as was anticipated theoretically in Section~\ref{sec:sep-alg}. Note that, the quantity of obtained $Q$ sets is sensitive to the randomness of the shrinking and separation strategies (it can be concluded because $\#Q$ is not always an integer).

            In the view of these results, the S1S2 shrinking strategy is the best choice to use as the preprocess of SEC separation algorithms. Bearing in mind both the time and the obtained amount of $Q$ sets, either Algorithm \hyperlink{item:sep-dh}{DH} or Algorithm \hyperlink{item:sep-epg}{EPG} might be a good choice as the separation algorithm. However, it is not clear from these results which of the two exact approaches should be used in practice. It probably depends on the nature and the size of the cycle problem under consideration.

          \end{subsection}

          \begin{subsection}{Discussion}\label{sec:discus}

            Finally, we would like to open a discussion about the following concerns as a consequence of the computational results. It might be helpful, to look at the detailed computational results in Appendix B.
            to understand the motivation behind the discussion below.

            In Figure~\ref{fig:sec1}, an example of a vector $(y, x) \in P^G_A$ was shown where rule C3 can be applied but rules C1 nor C2 cannot. However, in the experiments, although rule C3 has been applied in some instances, we have not obtained any situation in which rule C3 was able to simplify the support graph more than with the rest of the rules.

            An open question is then to explain why rule C3 does not improve the results obtained by means of the rules C1 and C2. We believe that this is related with the planarity property of the support graphs, which is satisfied in the considered instances. Note that the graph in the example of Figure~\ref{fig:sec1} is not planar because the complete graph of 5 vertices, $K_5$, is a subgraph of it.
            \begin{conjecture}\label{conj:c3}
              Given a graph $G$, let $(y,x) \in P^G_A$ be a vector. If the support graph $\bar{G}$ of $(y,x)$ is planar, then the combination of the rules $C1$ and $C2$ dominate the rule $C3$.
            \end{conjecture}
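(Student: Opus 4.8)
The plan is to reduce the conjecture to a local statement: on a planar support graph, rule C3 is applicable to a set only when rule C1 or C2 is applicable as well (indeed, one can hope to simulate the C3 contraction by a short sequence of C1/C2 contractions). Since contracting a set of vertices yields a minor of $\bar{G}$, planarity is preserved by every application of C1, C2 and C3; hence, once this local fact is established, the termination of exhaustive shrinking gives that a graph from which no C1/C2 contraction is possible admits no C3 contraction either, which yields the claimed domination. Throughout we use, as in Theorem~\ref{corol:srk}, that the vector is separable, $(y,x)\notin P^G_C$.

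So suppose rule C3 contracts $S=\{p,q,r\}$ with witness $t$, i.e.\ conditions (i)--(iii) of Theorem~\ref{corol:srk} hold: $y_v=c$ for all $v\in S\cup\{t\}$, $x(E(S))=2c$ and $x(t:S)=c$. Put $\alpha=x_{[p,q]}$, $\beta=x_{[p,r]}$, $\gamma=x_{[q,r]}$, so $\alpha+\beta+\gamma=2c$ and $0\le\alpha,\beta,\gamma\le c$ because $(y,x)\in P^G_A$ satisfies the logical constraints. The first step is to dispose of the case in which an internal edge of $S$, say $[p,q]$, has weight exactly $c$. Then $\{p,q\}$ with witness $r$ satisfies (i)--(iii) (because $x(r:\{p,q\})=\beta+\gamma=2c-\alpha=c$), and after contracting it to $s'$ the pair $\{s',r\}$ with witness $t$ does so too, since $y_{s'}=c$, $x_{[s',r]}=\beta+\gamma=c$ and $x(t:\{s',r\})=x(t:S)=c$; contracting that pair produces exactly the graph and vector obtained by the C3 contraction of $S$. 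Thus in this case the C3 step is reproduced by two C1/C2 steps, with no appeal to planarity.

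It remains to treat the case $0<\alpha,\beta,\gamma<c$. Now all three pairwise edges are present, so $S$ induces a triangle of $\bar{G}$; by identity~\eqref{cp:eq} each vertex of $S$ sends to $V-S$ exactly the weight of the triangle edge opposite to it (so $x(p:V-S)=\gamma$, etc.), and $x(S:W)=x(t:W)=c$ with $W=V-(S\cup\{t\})\neq\emptyset$. This is precisely the pattern of Figure~\ref{fig:sec1}, where $\bar{G}$ contains a $K_5$. The core of the argument would be to show that this configuration is incompatible with planarity once C1 and C2 have been excluded: fixing a planar embedding, argue by contradiction using that every vertex has total incident weight $2y_v$ and therefore behaves like a capacity-$2y_v$ ``router'' for the fractional structure, together with the fact that $(y,x)\notin P^G_C$ provides a violated SEC, hence a nontrivial way in which this structure ``splits''; by tracing weight-carrying paths out of the triangle $S\cup\{t\}$ and splicing them with the triangle one should be able to exhibit a subdivision of $K_5$ or of $K_{3,3}$, contradicting Kuratowski's theorem. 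A discharging/Euler-formula count over the faces incident to $S$ and $t$ is a plausible alternative. The absence of C1/C2 contractions is what forbids the degenerate ways in which the external weight could be carried (configurations in $W$, or across the $S$--$W$ and $t$--$W$ cuts, that would themselves be C1/C2-contractible), and it is this restriction that should force the Kuratowski pattern.

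The main obstacle is this last step: the shrinking rules are local whereas planarity is a global property, so the real work is to turn ``$\bar{G}$ is planar'' into the existence of a C1- or C2-shrinkable pair somewhere in $\bar{G}$. The difficulty is compounded by the fact that $W$ is entirely unconstrained --- it may be arbitrarily large and its vertices need not carry the value $c$ --- so the routing of the $c$ units of weight from $S$ into $W$ and of the $c$ units from $t$ into $W$ has to be analysed without structural hypotheses on $W$; dealing with low-connectivity configurations (a cut vertex separating $S\cup\{t\}$ from part of $W$), making the embedding or discharging bookkeeping rigorous, and checking that the induction so obtained terminates and is insensitive to the order in which C1/C2 contractions are performed, is where a complete proof would have to do its heavy lifting.
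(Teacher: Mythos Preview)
The statement you are trying to prove is presented in the paper as a \emph{conjecture}, not a theorem: the authors offer no proof, only empirical evidence (C3 never helped on their planar test instances) and the observation that the single example where C3 strictly beats C1/C2 (Figure~\ref{fig:sec1}) contains a $K_5$. So there is no paper-proof to compare against; you are attacking an open problem.

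On the merits of your attempt: the reduction to the local claim ``planar and no C1/C2 applies $\Rightarrow$ no C3 applies'' is the natural strategy, and your handling of the degenerate case (some internal edge of $S$ has weight exactly $c$) is correct and clean --- there the C3 contraction factors through two C2 contractions, with no planarity needed. The genuine gap is exactly where you locate it: in the case $0<\alpha,\beta,\gamma<c$ you do not actually \emph{use} planarity anywhere; you state a hope (trace weight-carrying paths to exhibit a $K_5$/$K_{3,3}$ subdivision, or run a discharging argument) without a mechanism. The difficulty is real: $W$ carries no structural hypothesis, its vertices need not have $y$-value $c$, and the triangle on $S$ together with $t$ and the two cuts of value $c$ into $W$ is far from forcing a Kuratowski obstruction on its own --- the whole point is that the \emph{absence} of any C1/C2-shrinkable pair anywhere in $\bar G$ must be what creates the obstruction, and you have no argument linking that global negative condition to a forbidden minor. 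A secondary issue you gloss over: even if the local claim held, it shows only that the exhaustive C1/C2 fixed point is also C3-closed; concluding that C3 is \emph{never} useful along any C1/C2/C3 run quietly assumes a confluence property of the shrinking, which is precisely the content of the paper's other open Conjecture~\ref{conj:iso}.
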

            Note that the rules C1, C2, and C3 induce a contraction of an edge (a sequence of contractions for C3), which is a closed operation in planar graphs. Therefore, if $\bar{G}$ is planar then $\bar{G}[S]$ is also planar for every subset $S$ obtained from these rules. While working with the OP, we have empirically seen that in geometrical instances the support graph obtained within a B\&C is planar most of the time.

            Another interesting fact that can be extracted from the experiments is that the number of vertices and edges in the shrunk graph (the final result) is independent of the ordering of the considered rules and the shrinkable sets. This suggests the idea that the obtained shrunk graphs are isomorphic.
            \begin{conjecture}\label{conj:iso}
              Given a graph $G$, let $(y,x) \in P^G_A$ be a vector and SRK $\in \{$C1, C1C2, C1C2C3, S1, S1S2$\}$ be a fixed shrinking strategy, then the graphs obtained by applying SRK to $(y,x)$ are isomorphic.
            \end{conjecture}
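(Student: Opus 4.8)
The plan is to view repeated application of a fixed strategy $\mathrm{SRK}$ as an abstract rewriting system and to prove confluence via Newman's Lemma. Define a \emph{state} to be an isomorphism class of the tuple $(\bar G, y, x, m)$ consisting of a support graph together with its vertex values, edge values and the auxiliary vector $m$ used by S1S2; a \emph{rewrite step} chooses any vertex set $S$ that is $\mathrm{SRK}$-shrinkable in the current state (a pair for C1, C2, S1, S2; a triple for C3) and replaces the state by $(\bar G[S], y[S], x[S], m[S])$, using the contraction formulas of Section~\ref{sec:srk} and Section~\ref{sec:gsec}. I would actually prove the labelled version, from which the statement about the graphs follows. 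Every step strictly decreases $|\bar V|$, so the system is terminating and its normal forms are precisely the states in which $\mathrm{SRK}$ detects no shrinkable set; by Newman's Lemma it then suffices to establish \emph{local confluence}: whenever a state $H$ admits steps $H \to H_1$ (shrinking $S_1$) and $H \to H_2$ (shrinking $S_2$), there is a state reachable from both $H_1$ and $H_2$.

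First I would dispose of the case $S_1 \cap S_2 = \emptyset$. Contracting $S_1$ leaves the values $y,x,m$ unchanged on $\bar V \setminus (S_1 \cup S_2)$ and does not alter the neighbourhood structure inside $S_2$, so $S_2$ stays $\mathrm{SRK}$-shrinkable in $H_1$ and symmetrically $S_1$ in $H_2$; one only has to observe that the auxiliary witness required by C1, C2 or S1 (the vertex $t$, respectively a vertex of value $\ge c$) is not destroyed — it may be absorbed into the vertex produced by the other shrink, but its value, and hence the validity of the hypothesis, is inherited by Lemma~\ref{lemma:yc}, Lemma~\ref{lemma:ygsec1} and Lemma~\ref{lemma:uvysrk}. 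The two shrinkings then commute and $(H_1)[S_2] \cong (H_2)[S_1]$ closes the diamond.

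The core of the argument is the overlapping case. Two $\mathrm{SRK}$-shrinkable pairs can meet only in a single vertex, $S_1 = \{u,v\}$ and $S_2 = \{v,w\}$, while a pair of C3-triples can share one or two vertices; using the degree equations~\eqref{cp:deg} and the numeric conditions of the rules (for instance, $x_{[u,v]} = y_u = y_v = c$ together with $x_{[v,w]} = y_v = y_w = c$ forces $c$ to be common and $v$ to have exactly the neighbours $u$ and $w$ in $\bar G$), I would show that both branches contract the union $S_1 \cup S_2$ to a single vertex, possibly after one additional $\mathrm{SRK}$-step. That extra step is exactly where rules of different type inside a strategy interact: after shrinking $\{u,v\}$ the resulting edge to $w$ has value $x_{[u,w]}+x_{[v,w]}$, which may now strictly exceed $\max\{y_s,y_w\}$ and is then picked up by C2 inside C1C2 or by S2 inside S1S2, or else case~\ref{lemma:srkPa-2}) of Lemma~\ref{lemma:srkPa} occurs, the shrunk vector temporarily leaves $P^G_A$, and one relies on Theorem~\ref{thm:multisrk} to keep the process well defined. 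Comparing the resulting graphs and their $(y,x,m)$-labels against the state obtained by contracting the whole region $\{u,v,w\}$ at once yields the common descendant. I expect this finite case enumeration — keeping track of which witnesses survive and, in particular, of the behaviour of $m$ under S2 — to be the main obstacle, and the only place where the hypothesis $(y,x)\in P^G_A$ is genuinely needed.

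As an alternative route that could shortcut the confluence analysis, one could try to describe the normal form statically: produce, directly from $(y,x)$ and $\mathrm{SRK}$, a partition of $\bar V$ generated by an equivalence relation built from the rules' numeric conditions, and show both that contracting its classes leaves no $\mathrm{SRK}$-shrinkable set and that every $\mathrm{SRK}$-step only refines $\bar V$ toward that partition; uniqueness up to isomorphism would then be immediate. The danger is that $\mathrm{SRK}$-shrinkability is not a static property of $(y,x)$ but evolves as vertices merge, so writing down a correct generating relation may itself require essentially the local-confluence study above.
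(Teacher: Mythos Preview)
The statement you are attempting to prove is explicitly presented in the paper as an \emph{open conjecture} (Conjecture~\ref{conj:iso}); the authors give no proof. Their sole evidence is the empirical observation, extracted from the experiments of Section~\ref{sec:exp-results}, that across all tested instances the number of vertices and edges of the final shrunk graph was insensitive to the order in which shrinkable sets were processed. There is therefore no paper proof to compare your proposal against.

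Your plan via Newman's Lemma is a natural strategy for a statement of this shape, and the split into disjoint versus overlapping critical pairs is the right skeleton. But you should be aware that you are trying to settle an open question, not to reconstruct a known argument, and your proposal is still a sketch at the decisive point. The claim that in the overlapping case ``both branches contract the union $S_1\cup S_2$ to a single vertex, possibly after one additional $\mathrm{SRK}$-step'' is precisely what must be checked rule by rule, and it is not automatic: for the bare strategy C1 (without C2 or S2 available), after shrinking $\{u,v\}$ the new edge to $w$ has value $x_{[u,w]}+x_{[v,w]}$, and if $x_{[u,w]}>0$ this value strictly exceeds $c$, so no C1-step applies to $\{s,w\}$ and you have no other rule to fall back on. You would need to argue that this configuration cannot arise when both $\{u,v\}$ and $\{v,w\}$ are simultaneously C1-shrinkable in a vector of $P^G_A$, or else produce a different joinability argument. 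Similar care is needed for the C3 overlaps and for the interaction of S1 with S2 on the $m$-vector. If the conjecture is false, this overlapping case is exactly where a counterexample would live.
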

            If the conjecture is true, the complexity of the separation algorithm carried out in the shrunk graph does not depend on the different implementations of a shrinking strategy. As a consequence, in the future, we might focus on identifying the implementations of the shrinking strategies that might obtain the largest amount of $Q$ sets, especially for the preprocess, e.g., by reordering the vertices in the heap.

          \end{subsection}
        \end{section}

        \begin{section}{Conclusions and Future Work}

          In this paper, for cycle problems, we have successfully generalized the global (C1, C2 and C3) and SEC specific (S1, S2 and S3) shrinking rules proposed in the literature of the TSP\@. The obtained computational results for the shrinking in the OP are remarkable and, hence, very promising for other cycle problems. The results clearly show that the shrinking technique considerably improves the running time of the separation algorithm for SECs.
          This opens the possibility to investigate in two directions in cycle problems: (1) studying the shrinking for other valid cycle inequalities of the OP (e.g., combs) and (2)  evaluating for other cycle problems the shrinking technique in SEC separation problems.

          Part of the paper focuses on exact SEC separation algorithms for cycle problems. We have extended from the TSP two exact algorithms (Algorithm \hyperlink{item:sep-dh}{DH} and Algorithm \hyperlink{item:sep-epg}{EPG}). The proposed separation algorithms were shown to be more efficient in the OP than the exact algorithm used so far in the literature (the adaptation of the classical Hong's approach). The importance of the detailed extension of the Padberg-Gr{\"o}tschel approach, Algorithm \hyperlink{item:sep-epg}{EPG}, lies in the fact that in cycle problems, in general, the global minimum cut of a support graph might not generate a violated SEC, while violated SECs in the same graph exist.
          An example is given
          where this claim is shown,
          which implies that the adaptions of the Padberg-Gr{\"o}tschel approach used so far in the literature of cycle problems should be viewed as heuristic separation algorithms.
          Therefore, this might be the first exact extension of the Padberg-Gr{\"o}tschel approach in the literature for cycle problems.

        \end{section}

        {\textbf{Acknowledgements}}  \ \
        The first and third authors are partially supported by the projects BERC 2018{-}2021 (Basque Government) and by SEV{-}2017{-}0718 (Spanish Ministry of Economy and Competitiveness).
        The first author is also supported by the grant BES{-}2015{-}072036 (Spanish Ministry of Economy and Competitiveness) and the project ELKARTEK (Basque Government).
        The second author is supported by IT{-}1252{-}19 (Basque Government) and PPG17/32 and GIU17/011 (University of the Basque Country).
        The third author is also supported by IT1244{-}19 (Basque Government) and TIN2016{-}78365R (Spanish Ministry of Science and Innovation).
        This article has been partially written while the first author was at Operational Research Group at University of Brescia as a visiting PhD student under the supervision of Prof.\ Speranza. He wishes to sincerely thank the OR group and the university for their warm hospitality and an excellent atmosphere.
        We gratefully acknowledge the authors of the TSP solver Concorde for making their code available to the public, since it has been the working basis of our implementations.

        \bibliographystyle{plainnat}
        %\bibliography{bibliography}  %%% Remove comment to use the external .bib file (using bibtex).

  \clearpage
  \appendix
  \renewcommand*{\thesection}{\Alph{section}}

  \section*{Appendices}\label{appendix}

  \section{Pseudocodes of the Shrinking and Separation Strategies}\label{appendix:pseudo}
  \setcounter{table}{0}
  \renewcommand*{\thetable}{A.\arabic{table}}

  In this appendix, we detail the pseudocodes of the shrinking and separation strategies used in the computational experiments for Section~\ref{sec:exp}. These strategies are combinations of the shrinking rules proposed in Section~\ref{sec:srk} and Section~\ref{sec:gsec}, and the exact separation algorithms proposed in Section~\ref{sec:sep-alg}.

  The pseudocodes should be considered as illustrations of the implementations of strategies whose aim is to help the reader to understand how the strategies work.
  The source code in C of the computational implementations is available at~\cite{cpsrksec}. In Table~\ref{tab:symbol}, we detail the meaning of the symbols used in the pseudocodes.

  \begin{table}[htb!]
    \centering
    \begin{tabular}{cll}
      \toprule
      \multicolumn{2}{c}{Symbol} & \multicolumn{1}{c}{Meaning} \\
      \midrule
      $G= (V, E)$& & Input graph of the cycle problem \\
      $\bar{G}= (\bar{V},\bar{E})$&  & Support graph \\
      $(y,x)$ &  $\in P^G_A$ & A solution of the $LP_0$ \\
      $m$ & $\in \mathbb{R}_{+}^{\bar{V}}$ & A vector where $m_v=\max\{y_u:u\in \pi(v)\}$ \\
      $H$ & $\subset \bar{V}$ & Heap: vertices remaining to check \\
      $S$ & $\subset \bar{V}$ & A subset candidate for the shrinking \\
      $Q$ & $\subset V$ & A subset of $V$ \\
      $\bar{Q}$ & $\subset \bar{V}$ & A subset of $\bar{V}$ \\
      $\mathcal{Q}$ & $\subset \mathcal{P}(V)$ & List of $Q$ sets of $V$ \\
      $\mathcal{L}$ & & List of violated SECs \\
      $D$ & $\subset \bar{V}$ & Set of fixed vertices. In a cycle problem with depot: $D=\{d\}$ \\
      $O$ & $\subset \bar{V}$ & Set of vertices with value one \\
      $(k_{in}\times k_{out})$ & $\in \mathbb{N}_{+} \times \mathbb{N}_{+}$ & Maximum vertices (inside and outside) considered when \\
                               & & generating the violated SECs from the $Q$ sets \\
      $T= (V, A_T)$ &  & A directed rooted tree \\
      $parent$ & $V \rightarrow V$ & Successive parent of each $v$ in the tree \\
      $child$ & $V \rightarrow V$ & Successive children of each $v$ in the tree \\
      $w$ & $\in \mathbb{R}_{+}^{A_T}$ & Weights of the arcs of the Gomory-Hu tree \\
      $G^{*}= (V^{*},E^{*})$&  & Generic graph used in the Gomory-Hu tree construction \\
      \bottomrule
    \end{tabular}
    \caption{A summary of the symbols used in the pseudocodes}\label{tab:symbol}
  \end{table}

  \subsection{Shrinking Strategies}

  The shrinking strategies are combinations of the shrinking rules of Section~\ref{sec:srk} and Section~\ref{sec:gsec}. In total, 5 different shrinking strategies for SECs are obtained:  C1, C1C2, C1C2C3, S1 and S1S2. The~\ref{alg:srkupdate} procedure refers to a process performed every time a set is shrunk.

  \begin{algorithm}[htb!]
    \SetAlgoRefName{SHRINK/UPDATE}
    \caption{Shrink graph and vectors. Save $Q$ sets. Update heap.}\label{alg:srkupdate}
    \SetKwInOut{Input}{input}\SetKwInOut{Output}{output}
    \Input{$\bar{G}$, $(y, x)$, $m$, $H$, $S$ and $\mathcal{Q}$}
    \Output{$\bar{G}$, $(y, x)$, $m$, $H$, $s$ and $\mathcal{Q}$}

    $\bar{G} \leftarrow \bar{G}[S]$\;
    $(y, x) \leftarrow (y[S], x[S])$\;
    $m \leftarrow m[S]$\;
    $H \leftarrow H[S]$\;
    $O \leftarrow \{v \in \bar{V}: m_v \geq 1\}$\;
    \For{$n \in N(s) $}{

      \If{$y_n < x_{[n,s]}$}
      {\For{$r \in O$}{
          \If{$r \neq s$}
          {\If{$\langle \{s,n\}, s, r \rangle$ violates~\eqref{ineq:max}}
            {
              $Q \leftarrow \{\pi(\{s,n\})\}$\;
              \If{$|Q| > |V|/2$}{
                $Q \leftarrow V -Q$\;
              }
              $\mathcal{Q} \leftarrow \mathcal{Q} \cup \{Q\}$\;
              \textbf{goto} line~\ref{gotosrkupdate}\;
            }
          }
        }
      }\label{gotosrkupdate}

      $H \leftarrow H \cup \{n\}$\;
    }
  \end{algorithm}

  \begin{algorithm}[htb!]
    \SetAlgoRefName{C1}
    \caption{Shrinking: Rule C1}\label{alg:C1}
    \SetKwInOut{Input}{input}\SetKwInOut{Output}{output}
    \Input{$\bar{G}$, $(y,x)$, $m$, $H$ and $\mathcal{Q}$}
    \Output{$\bar{G}$, $(y, x)$, $m$, $H$ and $\mathcal{Q}$}

    \While{$|H|\neq \emptyset$}{
      Select a vertex $u \in H$\;
      $H \leftarrow H - \{u\}$\;
      $c \leftarrow y_{u}$\;
      \For{$v \in N(u)$}{

        \If{$ y_{v} = c $ and $x_{[u, v]} = c$}{
          \For{$t \in N(v)-\{u\}$}
          {
            \If{$y_{t} = c$ and $x_{[v, t]}=c$}
            {
              $S \leftarrow \{u, v\}$\;
              \ref{alg:srkupdate} ($\bar{G}, (y, x), m, H, S, \mathcal{Q}$)\;
              \textbf{goto} line~\ref{found:c1}\;
            }
          }
        }
      }\label{found:c1}
    }
  \end{algorithm}

  \begin{algorithm}[htb!]
    \SetAlgoRefName{C1C2}
    \caption{Shrinking: Rule C1 and Rule C2}\label{alg:C1C2}
    \SetKwInOut{Input}{input}\SetKwInOut{Output}{output}
    \Input{$\bar{G}$, $(y,x)$, $m$, $H$  and $\mathcal{Q}$}
    \Output{$\bar{G}$, $(y, x)$, $m$, $H$ and $\mathcal{Q}$}

    \While{$|H|\neq \emptyset$}{
      Select a vertex $u \in H$\;
      $H \leftarrow H - \{u\}$\;
      $c \leftarrow y_{u}$\;

      \For{$v \in N(u) $}{

        \If{$ y_{v} = c$ and $x_{[u, v]} = c$}{
          \For{$t \in N(v)-\{u\} $}{
            \If{$y_{t} = c$ and $x_{[u, t]} + x_{[v, t]} = c$}{
              $S \leftarrow \{u, v\}$\;
              \ref{alg:srkupdate} ($\bar{G}, (y, x), m, H, S, \mathcal{Q}$)\;
              \textbf{goto} line~\ref{found:c1c2}\;
            }
          }
        }
      }\label{found:c1c2}
    }
  \end{algorithm}

  \begin{algorithm}[htb!]
    \SetAlgoRefName{C1C2C3}
    \caption{Shrinking: Rule C1, C2 and C3}\label{alg:C1C2C3}
    \SetKwInOut{Input}{input}\SetKwInOut{Output}{output}
    \Input{$\bar{G}$, $(y,x)$, $m$, $H$ and $\mathcal{Q}$}
    \Output{$\bar{G}$, $(y, x)$, $m$, $H$ and $\mathcal{Q}$}

    \While{$|H| \neq \emptyset$}{
      Select a vertex $u \in H$\;
      $H \leftarrow H - \{u\}$\;
      $c \leftarrow y_{u}$\;

      \For{$v \in N(u)$}{

        \If{$ y_{v} = c $ and $x_{[u, v]} = c$}{
          \For{$t \in N(v)-\{u\} $}{
            \If{$ y_{t} = c $ and $x_{[u, t]} + x_{[v, t]}=c$}{
              $S \leftarrow \{u, v\}$\;
              \ref{alg:srkupdate} ($\bar{G}, (y, x), m, H, S,  \mathcal{Q}$)\;
              \textbf{goto} line~\ref{found:c1c2c3}\;
            }
          }
          \For{$w \in N(v)-\{u\} $}{
            \If{$x_{[u, t]} + x_{[u, w]} + x_{[v,w]} = 2c$}{
              \For{$t \in N(w)-\{v,u\}$}{
                \If{$y_{t} = c$ and $x_{[u, t]} + x_{[v, t]} = c$}{
                  $S \leftarrow \{u, v, w\}$\;
                  \ref{alg:srkupdate} ($\bar{G}, (y, x), m, H, S, \mathcal{Q}$)\;
                  \textbf{goto} line~\ref{found:c1c2c3}\;
                }
              }
            }
          }
        }
      }\label{found:c1c2c3}
    }
  \end{algorithm}

  \begin{algorithm}[htb!]
    \SetAlgoRefName{S1}
    \caption{Shrinking: Rule S1}\label{alg:S1}
    \SetKwInOut{Input}{input}\SetKwInOut{Output}{output}
    \Input{$\bar{G}$, $(y,x)$, $m$, $H$ and $\mathcal{Q}$}
    \Output{$\bar{G}$, $(y, x)$, $m$, $H$ and $\mathcal{Q}$}

    \While{$|H|\neq \emptyset$}{
      Select a vertex $u \in H$\;
      $H \leftarrow H - \{u\}$\;
      $c \leftarrow y_{u}$\;
      \For{$v \in N(u) $}{

        \If{$ y_{v} = c $ and $x_{[u, v]} = c$}
        {
          \If{$\exists w \in \bar{V}-\{u,v\}$ such that $y_w \geq c$}
          {$S \leftarrow \{u, v\}$\;
            \ref{alg:srkupdate} ($\bar{G}, (y, x), m, H, S, \mathcal{Q}$)\;
            \textbf{goto} line~\ref{found:s1}\;
          }
        }
      }\label{found:s1}
    }
  \end{algorithm}

  \begin{algorithm}[htb!]
    \SetAlgoRefName{S1S2}
    \caption{Shrinking: Rule S1 and S2}\label{alg:S1S2}
    \SetKwInOut{Input}{input}\SetKwInOut{Output}{output}
    \Input{$\bar{G}$, $(y,x)$, $m$, $H$, $D$ and $\mathcal{Q}$}
    \Output{$\bar{G}$, $(y, x)$, $m$, $H$, $D$ and $\mathcal{Q}$}

    \While{$|H|\neq \emptyset$}{
      Select a vertex $u \in H$\;
      $H \leftarrow H - \{u\}$\;
      $c \leftarrow y_{u}$\;
      \For{$v \in N(u) $}{

        \uIf{$y_{v} = c $ and $x_{[u, v]} = c$}
        {
          \If{$\exists w \in \bar{V}-\{u,v\}$ such that $y_w \geq c$}
          {$S \leftarrow \{u, v\}$\;
            \ref{alg:srkupdate} ($\bar{G}, (y, x), m, H, S, \mathcal{Q}$)\;
            \textbf{goto} line~\ref{found:s1s2}\;
          }
        }
        \ElseIf{$x_{[u, v]}>y_u$ and $x_{[u, v]}>y_v$}
        {
          {$S \leftarrow \{u, v\}$\;
            \ref{alg:srkupdate} ($\bar{G}, (y, x), m, H, S, \mathcal{Q}$)\;
            \textbf{goto} line~\ref{found:s1s2}\;
          }
        }
      }\label{found:s1s2}
    }
  \end{algorithm}

  \clearpage
  \subsection{Exact SEC Separation Strategies}
  The exact separation strategies detailed in this appendix refer to the separation algorithms used for the experiments in Section~\ref{sec:exp}. We assume that the vertex set $\bar{V}=\{v_1, \ldots, v_{|\bar{V}|}\}$ is an ordered set. The~\ref{alg:cutgen} algorithm is the procedure detailed in Section~\ref{sec:exp} to generate the most violated SECs corresponding to set $Q$ given the parameter $(k_{in}, k_{out}) \in \mathbb{N}_{+} \times \mathbb{N}_{+}$. The vector $(k_{in}, k_{out})$ represents the maximum amount of vertices that are considered inside and outside $Q$. Note that,~\ref{alg:cutgen} is defined to select, for each inside vertex, a number of $k_{out}$ different random outside vertices to maximize the randomness of the obtained violated SECs.

  \begin{algorithm}[htb!]
    \SetAlgoRefName{EH}
    \caption{Extended Hong's exact separation algorithm}\label{alg:H}
    \SetKwInOut{Input}{input}\SetKwInOut{Output}{output}
    \Input{$\bar{G} $, $(y,x)$, $D$ and $(k_{in}, k_{out})$}
    \Output{A list $\mathcal{L}$ of violated SECs}

    $\bar{V} \leftarrow$ sort $\bar{V}$ decreasingly by $y$;
    $m \leftarrow y$\;
    $H \leftarrow \bar{V}$\;
    Apply shrinking strategy ($\bar{G}, (y, x), m, H, D, \mathcal{Q}$)\;

    \While{$|\bar{V}|>1$}{
      $Q \leftarrow$ $(v_1, v_2)$-minimum\, cut in the graph $\bar{G}$\;
      \If{$\langle Q, v_1, v_2 \rangle $ violates~\eqref{ineq:max}}{
        \If{$|Q| > |V|/2$}{
          $Q \leftarrow V -Q$\;
        }
        $\mathcal{Q} \leftarrow \mathcal{Q} \cup \{\pi(Q)\}$\;
      }
    }
    $\mathcal{L} \leftarrow$~\ref{alg:cutgen} ($\bar{G}, (y, x), D, \mathcal{Q}, (k_{in}, k_{out})$)\;
  \end{algorithm}

  \begin{algorithm}[htb!]
    \SetAlgoRefName{DH}
    \caption{Dynamic Hong's exact separation algorithm }\label{alg:DH}
    \SetKwInOut{Input}{input}\SetKwInOut{Output}{output}
    \Input{$\bar{G} $, $(y,x)$, $D$ and $(k_{in}, k_{out})$}
    \Output{A list $\mathcal{L}$ of violated SECs}

    $\bar{V} \leftarrow$ sort $\bar{V}$ decreasingly by $y$\;
    $m \leftarrow y$\;
    $H \leftarrow \bar{V}$\;

    Apply shrinking strategy ($\bar{G}, (y, x), m, H, \mathcal{Q}$)\;

    \While{$|\bar{V}|>1$}{
      $Q \leftarrow$ $(v_1, v_2)$-minimum\, cut in the graph $\bar{G}$\;
      \If{$\langle Q, v_1, v_2 \rangle $ violates~\eqref{ineq:max}}{
        \If{$|Q| > |V|/2$}{
          $Q \leftarrow V -Q$\;
        }
        $\mathcal{Q} \leftarrow \mathcal{Q} \cup \{\pi(Q)\}$\;
      }

      \eIf{$x_{[v_1,v_2]}>y_{v_2}$}{
        reorder $\leftarrow 1$\;
        }{
        reorder $\leftarrow 0$\;
      }

      $S \leftarrow \{v_1, v_2\}$\;
      $\bar{G} \leftarrow \bar{G}[S]$\;
      $(y, x) \leftarrow (y[S], x[S])$\;
      $m \leftarrow m[S]$\;
      \If{reorder}{
        $\bar{V} \leftarrow$ sort $\bar{V}$ decreasingly by $y$\;
      }
    }
    $\mathcal{L} \leftarrow$~\ref{alg:cutgen} ($\bar{G}, (y, x), D, \mathcal{Q}, (k_{in}, k_{out})$)\;
  \end{algorithm}

  \begin{algorithm}[htb!]
    \SetAlgoRefName{DHI}
    \caption{Dynamic Hong with extra shrinking separation algorithm}\label{alg:DHE}
    \SetKwInOut{Input}{input}\SetKwInOut{Output}{output}
    \Input{$\bar{G} $, $(y,x)$, $D$ and $(k_{in}, k_{out})$}
    \Output{A family $\mathcal{Q}$ of violated SECs}

    $\bar{V} \leftarrow$ sort $\bar{V}$ decreasingly by $y$\;
    $m \leftarrow y$\;
    $H \leftarrow \bar{V}$\;

    Apply shrinking strategy ($\bar{G}, (y, x), m, H, \mathcal{Q}$)\;

    \While{$|\bar{V}|>1$}{
      $Q \leftarrow$ $(v_1, v_2)$-minimum\, cut in the graph $\bar{G}$\;
      \If{$\langle Q, v_1, v_2 \rangle $ violates~\eqref{ineq:max}}{
        \If{$|Q| > |V|/2$}{
          $Q \leftarrow V -Q$\;
        }
        $\mathcal{Q} \leftarrow \mathcal{Q} \cup \{\pi(Q)\}$\;
      }

      \eIf{$x_{[v_1,v_2]}>y_{v_2}$}{
        reorder $\leftarrow 1$\;
        }{
        reorder $\leftarrow 0$\;
      }

      $S \leftarrow \{v_1, v_2\}$\;
      \ref{alg:srkupdate} ($\bar{G}, (y, x), m, H, S, \mathcal{Q}$)\;
      Apply shrinking strategy ($\bar{G}, (y, x), m, H, \mathcal{Q}$)\;
      \If{reorder}{
        $\bar{V} \leftarrow$ sort $\bar{V}$ decreasingly by $y$\;
      }
    }
    $\mathcal{L} \leftarrow$~\ref{alg:cutgen} ($\bar{G}, (y, x), D, \mathcal{Q}, (k_{in}, k_{out})$)\;
  \end{algorithm}

  \begin{algorithm}[htb!]
    \SetAlgoRefName{EPG}
    \caption{Extended Padberg-Gr{\"o}tschel exact separation algorithm}\label{alg:EPG}
    \SetKwInOut{Input}{input}\SetKwInOut{Output}{output}
    \Input{$\bar{G} $, $(y,x)$, $D$ and $(k_{in}, k_{out})$}
    \Output{A family $\mathcal{Q}$ of violated SECs}

    $\bar{V} \leftarrow$ sort $\bar{V}$ decreasingly by $y$\;
    $m \leftarrow y$\;
    Apply shrinking strategy ($\bar{G}, (y, x), m, H, \mathcal{Q}$)\;

    $(T,w, u) \leftarrow$\ref{alg:ghtree} $(\bar{G}, (y,x), v_1)$\;

    \For{a $\in A_T$}{
      $Q \leftarrow$ $d_a$\;
      \If{$ w_a - 2 \cdot u_a - 2\cdot v_a <2$}{
        \If{$|Q| > |V|/2$}{
          $Q \leftarrow V -Q$\;
        }
        $\mathcal{Q} \leftarrow \mathcal{Q} \cup \{\pi(Q)\}$\;
      }
    }
    $\mathcal{L} \leftarrow$~\ref{alg:cutgen} ($\bar{G}, (y, x), D, \mathcal{Q}, (k_{in}, k_{out})$)\;
  \end{algorithm}

  \begin{algorithm}[htb!]
    \SetAlgoRefName{CUTGEN}
    \caption{SEC generation}\label{alg:cutgen}
    \SetKwInOut{Input}{input}\SetKwInOut{Output}{output}
    \Input{$\bar{G}$, $(y,x)$, $D$, $\mathcal{Q}$, $(k_{in}, k_{out})$}
    \Output{A family $\mathcal{L}$ of violated SECs}

    \For{$Q \in \mathcal{Q}$}{
      \eIf{$D \cap Q = \emptyset$}
      {
        $M_{in} \leftarrow \{v \in Q: y_v \geq y_u \; \forall u \in Q\}$\;
        $S_{in} \leftarrow$ randomly select $k_{in}$ vertices from $M_{in}$\;
        }{
        $S_{in} \leftarrow$ a vertex in $D \cap Q$\;
      }

      \eIf{$D - Q = \emptyset$}
      {
        $M_{out} \leftarrow \{v \in \bar{V}-Q: y_v \geq y_u \; \forall u \in \bar{V}-Q\}$\;
        }{
        $S_{out} \leftarrow$ a vertex in $D - Q$\;
      }

      \For{$u \in S_{in}$}{

        \If{$D - Q = \emptyset$}
        {
          $S_{out} \leftarrow$ randomly select $k_{out}$ vertices from $M_{out}$\;
        }

        \For{$v \in S_{out}$}{
          Add the violated SEC $\langle Q, u, v\rangle$ to $\mathcal{L} $\;
        }
      }
    }
    $\mathcal{L} \leftarrow$~\ref{alg:cutgen} ($\bar{G}, (y, x), D, \mathcal{Q}, (k_{in}, k_{out})$)\;

  \end{algorithm}

  \clearpage
  \subsection{Directed Rooted Gomory-Hu Tree}\label{appendix:pseudo:gh}

  As was explained in Section~\ref{sec:sep-alg}, the key for an efficient extension of the Padberg-Gr{\"o}tschel exact separation algorithm is the construction of the directed rooted Gomory-Hu tree, which is detailed in the following pseudocodes. The novelty is the~\ref{alg:ghtree-addedge-reorder} procedure, where we show how the Gomory-Hu construction must be adapted to evaluate the $u_v$ values ($u_v=\arg\max\{y_u: u \in \Delta(v) \}$) and reorder the tree in order to maintain a given vertex in the top of the tree.

  \begin{algorithm}[htb!]
    \SetAlgoRefName{GHTREE}
    \caption{Rooted directed Gomory-Hu tree}\label{alg:ghtree}
    \SetKwInOut{Input}{input}\SetKwInOut{Output}{output}
    \Input{$\bar{G}$, $(y,x)$, $r$}
    \Output{$T,w, u$: a rooted directed weighted tree}

    $T \leftarrow (V, \emptyset)$\;

    \For{$v \in V$}
    {
      $u_v = m_v = \arg\max\{y_w : w \in \pi(v) \in \bar{G}\}$\;
    }

    $G^{*} \leftarrow \bar{G}$ and consider $|\pi(v)|=1$ for every $v \in V^{*}$\;

    $(T, w, u) \leftarrow$~\ref{alg:ghtree-recursive}$(G^{*}, (y,x), r, T, w, u)$\;

  \end{algorithm}

  \begin{algorithm}[htb!]
    \SetAlgoRefName{GHTREE-RECURSIVE}
    \caption{Recursive operator to build the Gomory-Hu tree}\label{alg:ghtree-recursive}
    \SetKwInOut{Input}{input}\SetKwInOut{Output}{output}
    \Input{$G^{*}$, $(y,x)$, $r$, $T$, $w, u$}
    \Output{$T, w, u$}

    $C \leftarrow \{v \in V^{*}: |\pi(v)|=1\}$\;
    \If{$|C|>1$}
    {
      $(a,b) \leftarrow$ randomly select two different vertices from $C$\;
      $(A:B) \leftarrow (a,b)$-minimum cut in $G^*$\;

      $(T, w, u, r_a, r_b) \leftarrow$~\ref{alg:ghtree-addedge-reorder}$(T, (y,x), m, u, r, A, B)$\;

      $(T, w, u) \leftarrow$~\ref{alg:ghtree-recursive}$(G^{*}[B], (y[B],x[B]), r_a,T,w, u)$\;

      $(T, w, u) \leftarrow$~\ref{alg:ghtree-recursive}$(G^{*}[A], (y[A],x[A]), r_b,T,w, u)$\;
    }

  \end{algorithm}

  \begin{algorithm}[htb!]
    \linespread{1.35}\selectfont
    \begin{scriptsize}
      \SetAlgoRefName{ADD-ARC/REORDER-TREE}
      \caption{Add arc and reorder the tree}\label{alg:ghtree-addedge-reorder}
      \SetKwInOut{Input}{input}\SetKwInOut{Output}{output}
      \Input{$T$, $(y,x)$, $m, u$, $r$, $A$, $B$}
      \Output{$T$, $w, u$, $r_a, r_b$}

      \eIf{$r \in A$}
      {
        $r_a \leftarrow r$\;
        $r_b \leftarrow b$\;

        \eIf{$parent(r) \in A $ or $parent(r) = \emptyset$}
        {
          $e = (r, b)$\;
        }
        {
          $e = (b, r)$\;
          $f = (p(r), r)$\;
          $g = (p(r), b)$\;
          $w_g \leftarrow w_f$\;
          $A_T = A_T - \{f\} \cup \{g\}$\;
          $m_r = \max \{m_r, m_b\}$\;
        }
        $u_r = m_r$\;
        $u_b = m_b$\;
        \For{$c \in child(r)$}
        {
          \eIf{$c \in A$}
          {
            $u_r = \max \{u_r, u_c\}$\;
          }
          {
            $A_T = A_T - \{(r, c)\} \cup \{(a, c)\}$\;
            $u_b = \max \{u_b, u_c\}$\;
          }
        }
      }
      {
        $r_a \leftarrow a$\;
        $r_b \leftarrow r$\;

        \eIf{$parent(r) \in B $ or $parent(r) = \emptyset$}
        {
          $e = (r, a)$\;
        }
        {
          $e = (a, r)$\;
          $f = (p(r), r)$\;
          $g = (p(r), a)$\;
          $w_g \leftarrow w_f$\;
          $A_T = A_T - \{f\} \cup \{g\}$\;
        }
        $u_r = m_r$\;
        $u_a = m_a$\;
        \For{$c \in child(r)$}
        {
          \eIf{$c \in B$}
          {
            $u_r = \max \{u_r, u_c\}$\;
          }
          {
            $A_T = A_T - \{(r, c)\} \cup \{(a, c)\}$\;
            $u_a = \max \{u_a, u_c\}$\;
          }
        }
      }

      $A_T = A_T \cup \{e\}$\;
      $w_e \leftarrow x(A:B)$\;

    \end{scriptsize}
  \end{algorithm}

  \clearpage
  \section{Detailed Computational Results}\label{appendix:detailed}
  \setcounter{table}{0}
  \renewcommand*{\thetable}{B.\arabic{table}}

  In this section, we show the computational results obtained in each considered SEC instance. For each instance, we present three tables: two are related with the shrinking processes and one is related with separation and SEC generation processes. In addition, the results are separated into three groups (Gen1, Gen2 and Gen3). These groups represent the generation strategy proposed in \citep{fischetti5} to build the OP vertex scores which are then used to obtain the support graphs.

  In tables Table~\ref{tab:pr76-1}, Table~\ref{tab:att532-1},
  $\ldots$
  and Table~\ref{tab:d15112-1}, we report the details of the shrinking preprocess. One can see, below the support graph and shrunk graph columns, the size of the given support graph and the size of the shrunk support graph for each shrinking strategy. In the preprocess columns, we show the number of $Q$ sets obtained and the time (in milliseconds) needed by each shrinking preprocess. As can be seen, the shrinking is very fast, needing very few dozens of millisecond to be accomplished in the larger instances. An interesting point of these tables is that within the shrinking preprocess we are already able to obtain $Q$ sets that correspond with violated SECs. In particular, the largest amount of $Q$ sets are obtained with the shrinking strategy S1S2.

  In tables Table~\ref{tab:pr76-2}, Table~\ref{tab:att532-2},
  $\ldots$
  and Table~\ref{tab:d15112-2}, we report the number of times a rule is applied by each shrinking strategy. Regarding the Conjecture 1 in the discussion of the computational experiments of the main paper, it can be seen that Rule C3 is rarely applied in the shrinking preprocess. Moreover, the strategy C1C2C3 does not provide further contractions of the support graph and, in all the compared instances, the obtained final shrunk graphs have the same amount of vertices and edges as with strategy C1C2.

  The extra column in these tables represents how many extra vertices are contracted in the internal shrinking process of Algorithm DHI, i.e, Extra is increased by one if rule C1, C2 or S1 is applied and by two if rule C3 is applied. The results show that this extra shrinking is rarely achieved.

  In tables Table~\ref{tab:pr76-3}, Table~\ref{tab:att532-3},
  $\ldots$
  and Table~\ref{tab:d15112-3} ,we report the details about the separation process and SEC generation. We can see that EPG approach always obtains more violated SECs than Algorithm EH as suggested theoretically in the main paper. Moreover, without using the shrinking preprocess, the EPG algorithm is always faster than Algorithm EH except for the smallest instance pr76.

  Regarding the SEC generation process, we compare two strategies $1 \times 1$ and  $10 \times 10$, which refer to the amount of vertices considered inside and outside $Q$ sets when generating the violated SECs. What we see is that, in medium-sized instances, the generation of violated SECs is the most time-consuming part (see the results regarding Algorithm EPG), but in large-sized, this difference is shortened. Nevertheless, it is likely that most of the generated violated cuts by $10 \times 10$ (around half a million of different violated SECs were obtained in large-sized instances by EPG) are useless and counterproductive to consider them, in practice, for a B\&C.

  \begin{landscape}
    \begin{center}\bf pr76\end{center}
    \begin{center} Shrinking: Preprocess and Extra\end{center}
    \vspace*{\fill}
    \begin{table}[htb!]
      \centering
      \scriptsize
% [inline block 0: 24 envs, 121596 chars -> data_tex | \begin{tabular*}{\columnwidth}{@{\extracolsep{\fill}}lccccccccccccccccccc@{}} \toprule...]

      \caption{Number of obtained $Q$ sets in separation, number of generated SECs when $k_{in}\times k_{out}$ is set to $1\times1$ and $10\times10$ and their running times by separation strategy, shrinking strategy  and OP instance generation in d15112.}\label{tab:d15112-3}
    \end{table}
    \vspace*{\fill}
  \end{landscape}

  \clearpage
  \section{Figures}\label{appendix:figures}
  \setcounter{figure}{0}
  \renewcommand*{\thefigure}{C.\arabic{figure}}
  In this section, we show some shrunk graphs obtained from the proposed shrinking strategies. The goal is to help the reader to obtain insights about the alternative strategies. We focus on the pr76-Gen1 SEC instance to do so. For each strategy two figures are presented, one preserving the geometry of the original OP instance and other showing the topological representation. In the figures, the vertices and the edges with value 1 are represented in black. The vertices and the edges with value in $[0.5, 1)$ are represented in red. The vertices in white and the edges with dashed style represent those with value in $ (0, 0.5)$. The edges in blue and double lined style represent those with value greater than 1. The depot vertex of the OP, the vertex 1, is colored in green.

  \clearpage
  \begin{figure}[htb!]
    \centering
    \includegraphics[height=9.0cm]{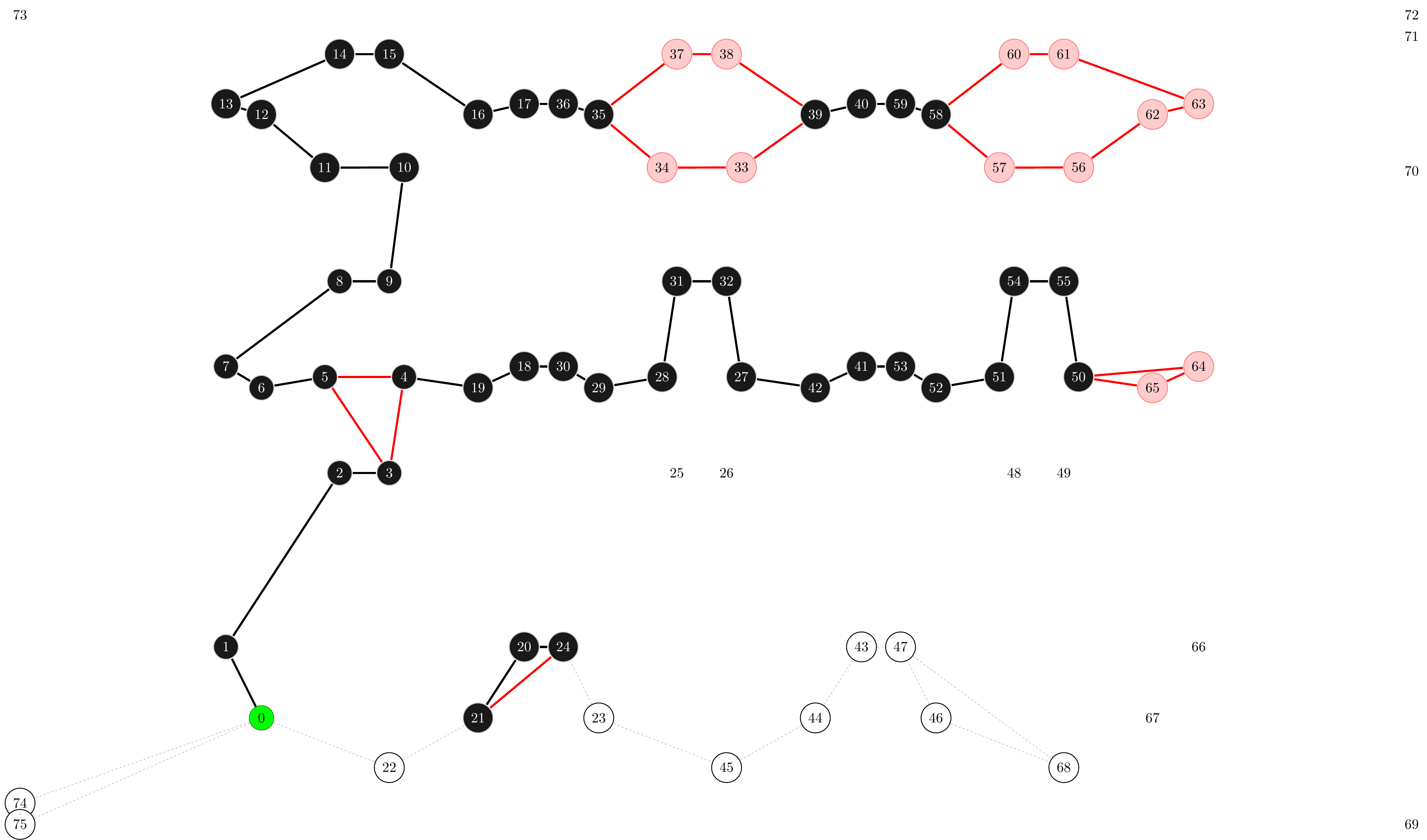}
    \caption{Support graph of pr76-gen1 SEC instance }\label{fig:pr76-no}
  \end{figure}

  \begin{figure}[htb!]
    \centering
    \includegraphics[height=9.0cm]{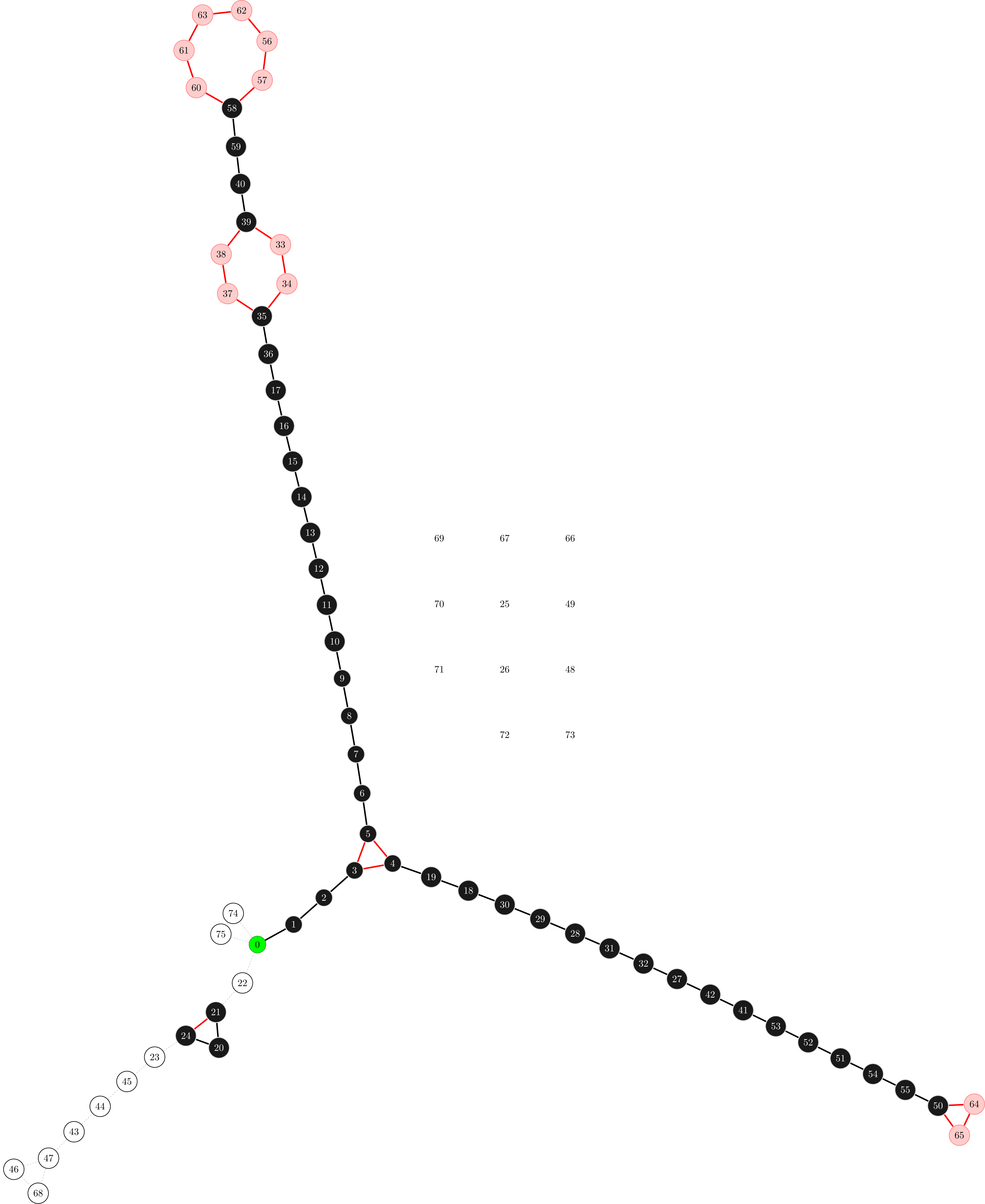}
    \caption{Topological representation of the pr76-gen1 SEC instance}\label{fig:pr76-no-topo}
  \end{figure}

  \clearpage
  \begin{figure}[htb!]
    \centering
    \includegraphics[height=9.0cm]{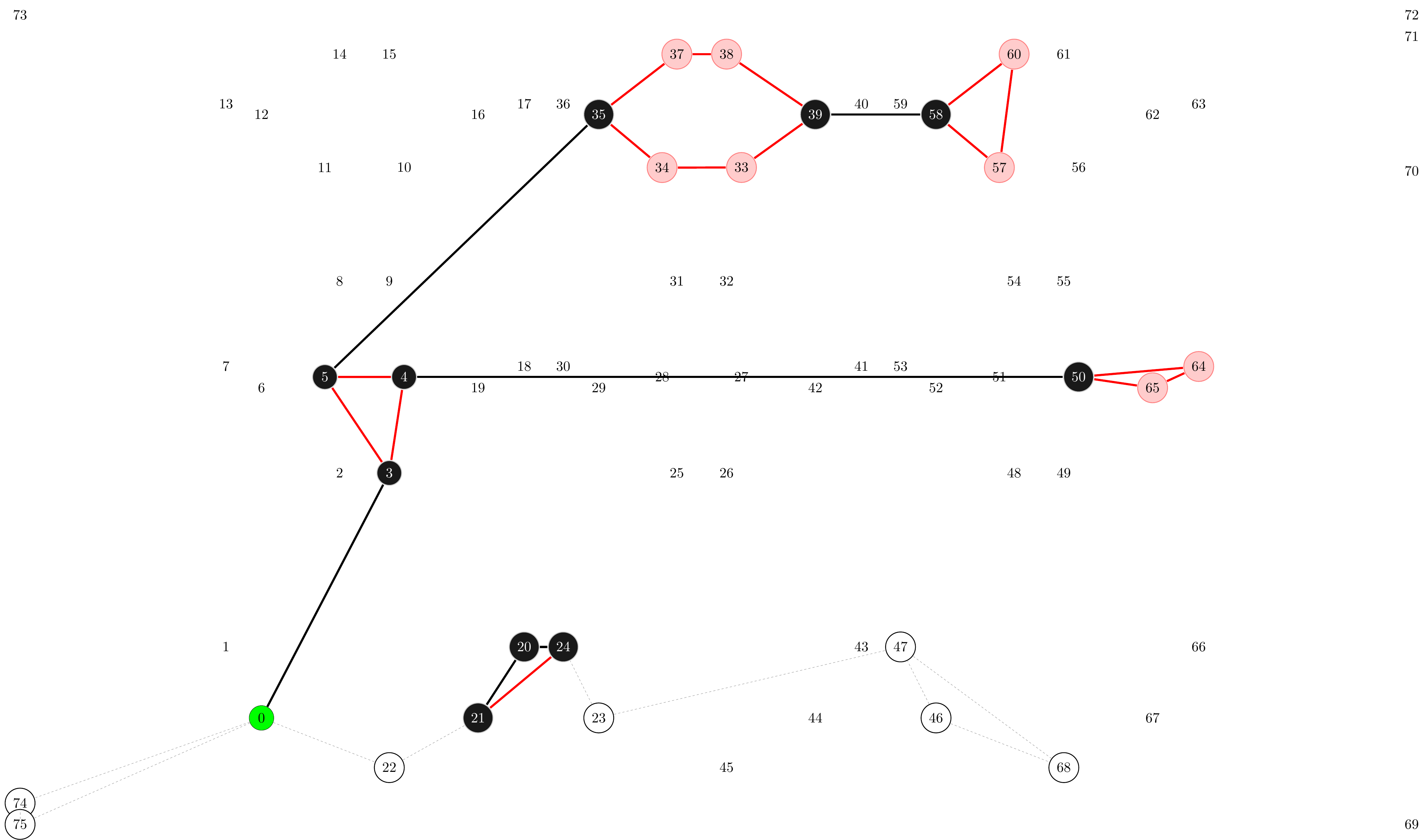}
    \caption{Resulting graph after C1 shrinking strategy}\label{fig:pr76-c1}
  \end{figure}

  \begin{figure}[htb!]
    \centering
    \includegraphics[height=9.0cm]{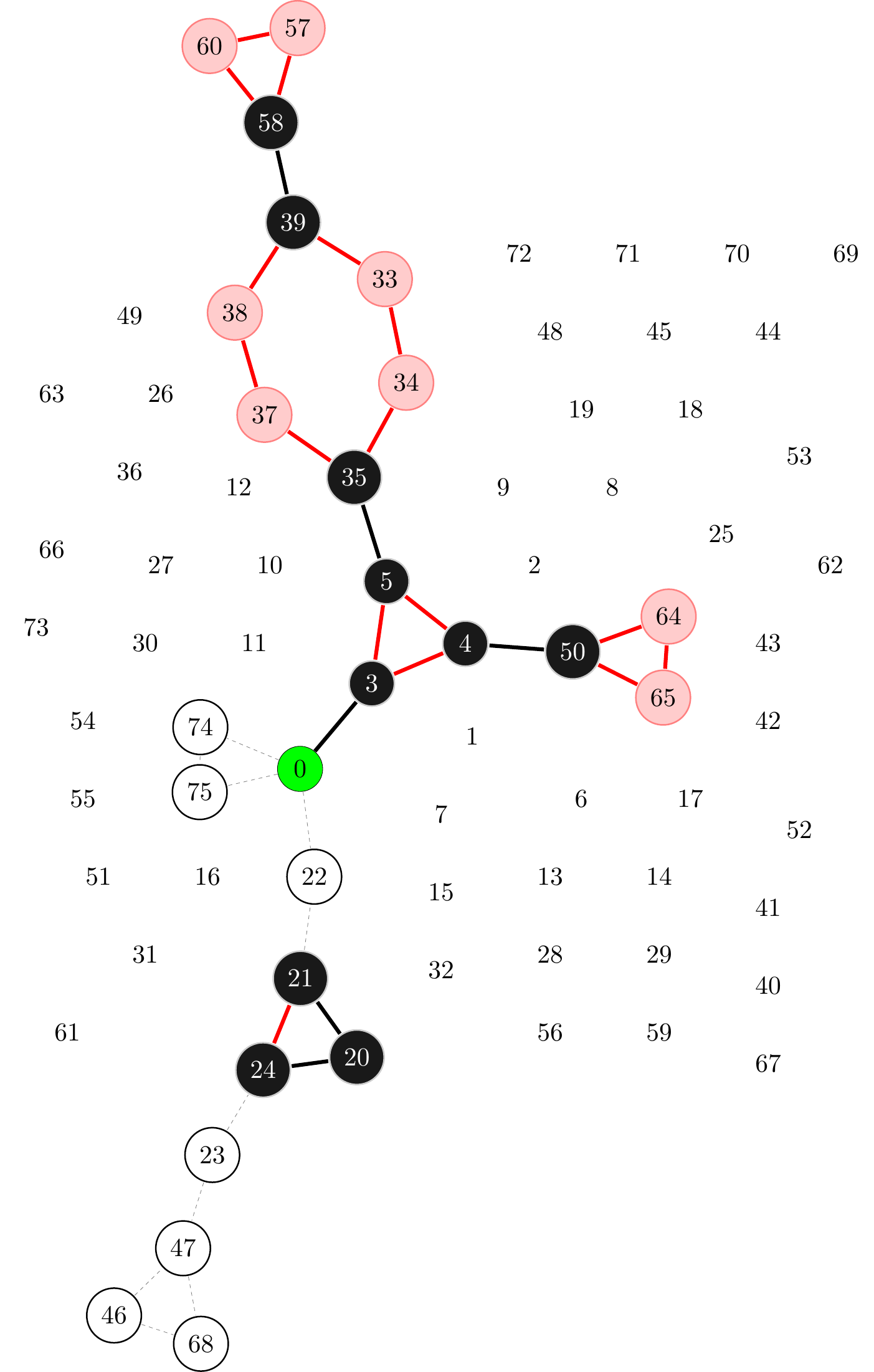}
    \caption{Topological representation of the graph after C1 shrinking strategy}\label{fig:pr76-c1-topo}
  \end{figure}

  \clearpage
  \begin{figure}[htb!]
    \centering
    \includegraphics[height=9.0cm]{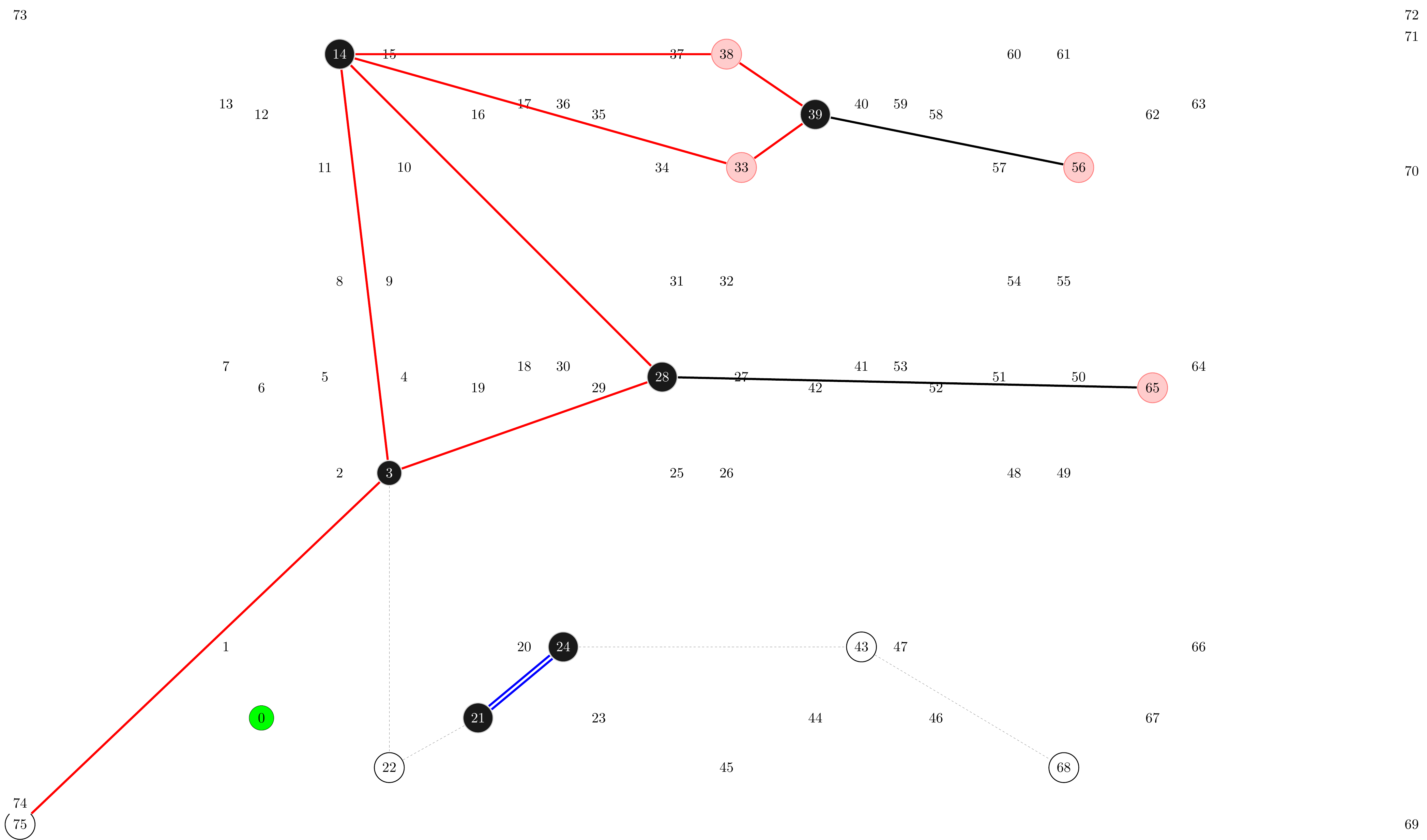}
    \caption{Resulting graph after S1 shrinking strategy}\label{fig:pr76-s1}
  \end{figure}

  \begin{figure}[htb!]
    \centering
    \includegraphics[height=9.0cm]{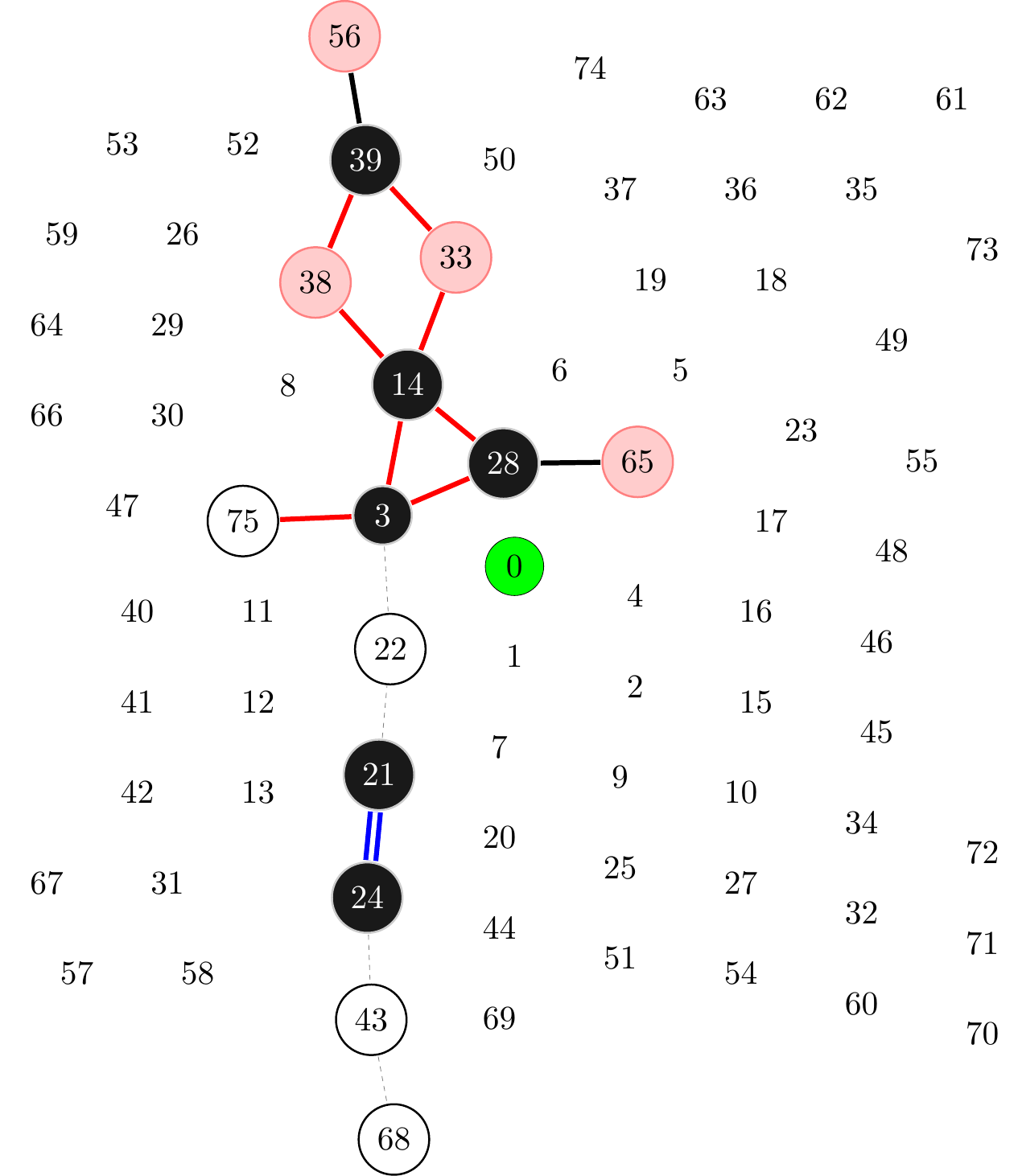}
    \caption{Topological representation of the graph after S1 shrinking strategy}\label{fig:pr76-s1-topo}
  \end{figure}

  \clearpage
  \begin{figure}[htb!]
    \centering
    \includegraphics[height=9.0cm]{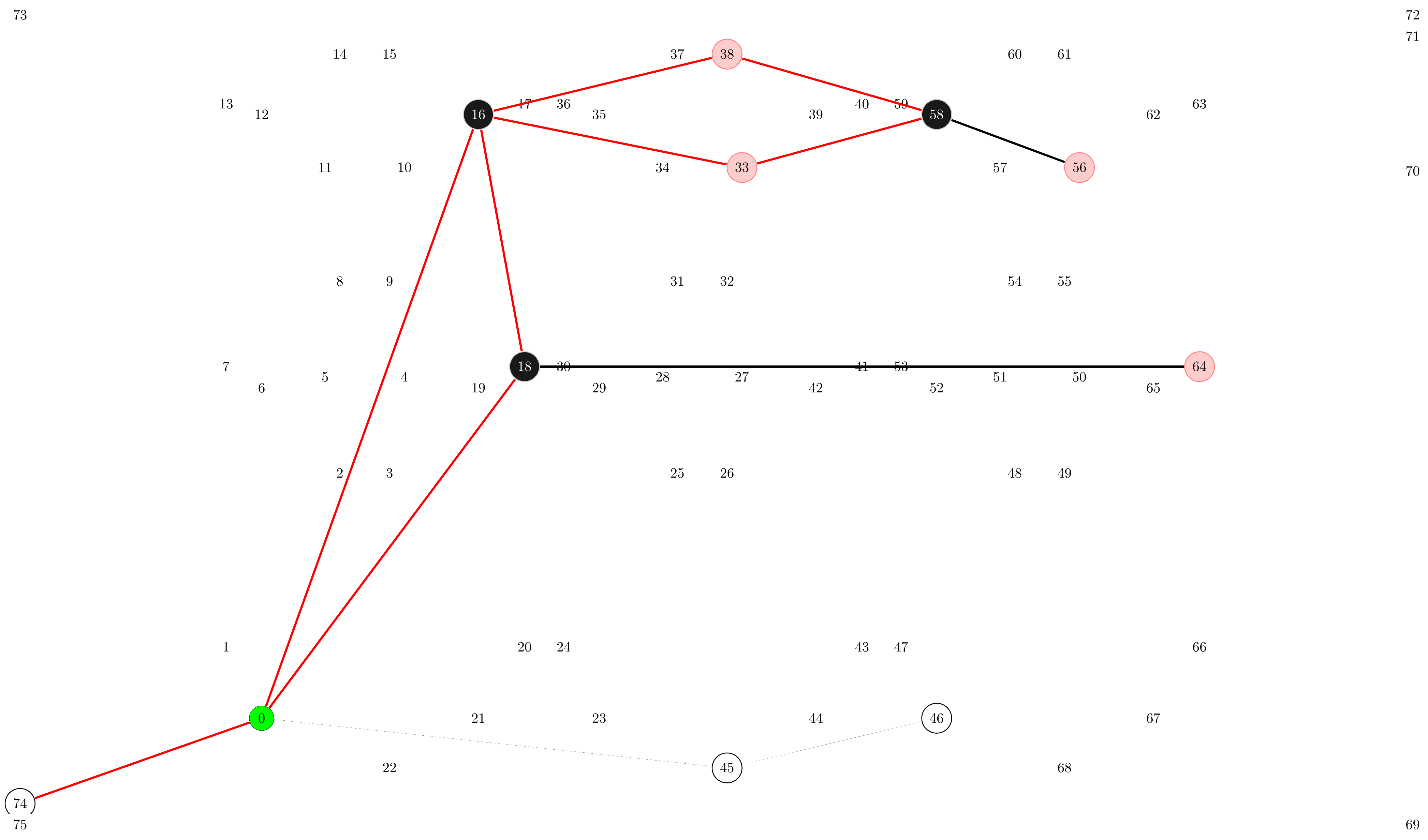}
    \caption{Resulting  graph after S1S2 shrinking strategy}\label{fig:pr76-s1s2}
  \end{figure}

  \begin{figure}[htb!]
    \centering
    \includegraphics[height=9.0cm]{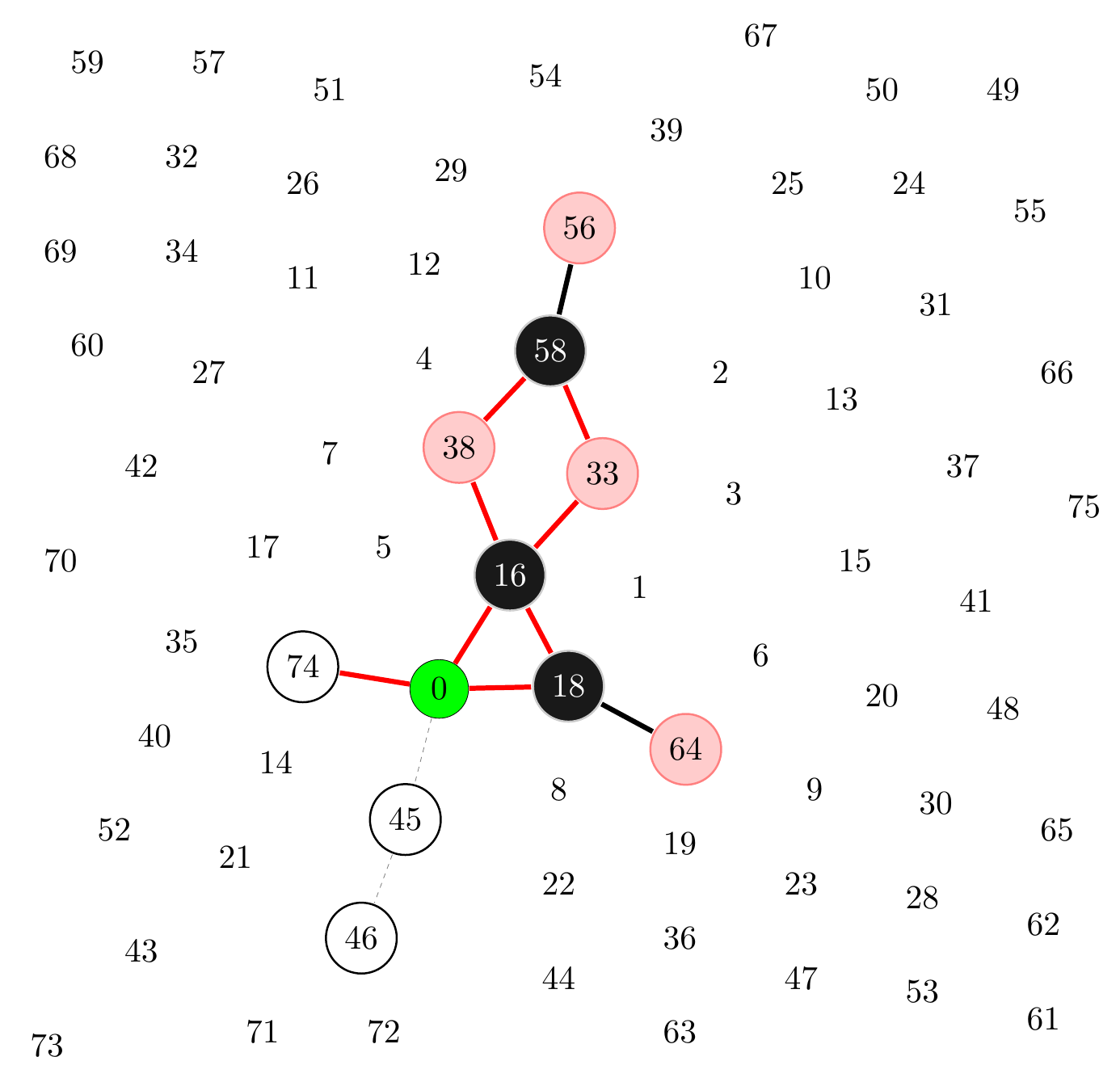}
    \caption{Topological representation of the  graph after S1S2 shrinking strategy}\label{fig:pr76-s1s2-topo}
  \end{figure}

  \end{document}